\definecolor{darkred}  {rgb}{0.5,0,0}
\definecolor{darkblue} {rgb}{0,0,0.5}
\definecolor{darkgreen}{rgb}{0,0.5,0}
\crefname{lemma}{Lemma}{Lemmas}
\crefname{proposition}{Proposition}{Propositions}
\crefname{definition}{Definition}{Definitions}
\crefname{theorem}{Theorem}{Theorems}
\crefname{conjecture}{Conjecture}{Conjectures}
\crefname{corollary}{Corollary}{Corollaries}
\crefname{section}{Section}{Sections}
\crefname{appendix}{Appendix}{Appendices}
\crefname{figure}{Fig.}{Figs.}
\crefname{equation}{Eq.}{Eqs.}
\crefname{table}{Table}{Tables}
\crefname{claim}{Claim}{Claims}
\newtheorem{theorem}{Theorem}
\newtheorem{lemma}[theorem]{Lemma}
\newtheorem{definition}[theorem]{Definition}
\newtheorem{corollary}[theorem]{Corollary}
\newtheorem*{conjecture*}{Conjecture}
\theoremstyle{definition}
\newcommand{\ket}[1]{|#1\rangle}
\newcommand{\bra}[1]{\langle#1|}
\newcommand{\proj}[1]{|#1\rangle\!\langle#1|}
\DeclareMathAlphabet{\matheu}{U}{eus}{m}{n}
\DeclareMathOperator{\tr}{tr}
\newcommand{\braket}[2]{\langle{#1}|{#2}\rangle}
\newcommand{\ketbra}[2]{|{#1}\rangle\!\langle{#2}|}
\newcommand{\C}{{c}} %
\newcommand{\ortho}{{w}} %
\newcommand{\SC}{{v}} %
\newcommand{\kap}{\kappa} %
\begin{document}

\title{Robust and Space-Efficient Dual Adversary Quantum Query Algorithms}
\author[1]{Michael Czekanski}
\author[2]{Shelby Kimmel}
\author[3]{R. Teal Witter}
\affil[1]{Cornell University Department of Statistics and Data Science \texttt{mc2589@cornell.edu}}
\affil[2]{Middlebury College Department of Computer Science \texttt{skimmel@middlebury.edu}}
\affil[3]{New York University Department of Computer Science and Engineering \texttt{rtealwitter@nyu.edu}}

\date{}

\maketitle

\abstract{
The general adversary dual is a powerful tool in quantum computing because it gives a query-optimal bounded-error quantum algorithm for deciding any Boolean function. Unfortunately, the algorithm uses linear qubits in the worst case, and only works if the constraints of the general adversary dual are exactly satisfied. The challenge of improving the algorithm is that it is brittle to arbitrarily small errors since it relies on a reflection over a span of vectors. We overcome this challenge and build a robust dual adversary algorithm that can handle approximately satisfied constraints. As one application of our robust algorithm, we prove that for any Boolean function with polynomially many 1-valued inputs (or in fact a slightly weaker condition) there is a query-optimal algorithm that uses logarithmic qubits. As another application, we prove that numerically derived, approximate solutions to the general adversary dual give a bounded-error quantum algorithm under certain conditions. Further, we show that these conditions empirically hold with reasonable iterations for Boolean functions with small domains. We also develop several tools that may be of independent interest, including a robust approximate spectral gap lemma, a method to compress a general adversary dual solution using the Johnson-Lindenstrauss lemma, and open-source code to find solutions to the general adversary dual.\footnote{Our repository is available at the following link: \url{https://github.com/rtealwitter/QuantumQueryOptimizer}.}

\section{Introduction}
 
The query model of computation has proven a powerful model in which
to prove quantum-classical separations
\cite{groverQuantumMechanicsHelps1997a,de2002sharp,childs2003exponential}
and to understand the limits of quantum algorithms \cite{ambainis2000quantum,beals2001quantum,hoyer2007negative}.
The power and usefulness of this computational model in the quantum setting
in part derives from the fact that bounded-error quantum query complexity for Boolean function evaluation is tightly and beautifully
described by a semidefinite program (SDP)---the general adversary bound \cite{reichardtSpanProgramsAre2014,reichardtReflectionsQuantumQuery2011}.  
The dual of this semidefinite program has
played an important role in the development and understanding of quantum 
algorithms.
In particular, the dual has been used to show the optimality of
span program algorithms, which are a critical element for several algorithm 
design paradigms \cite{reichardt2008span,beigiSpanProgramNonbinary2019,itoApproximateSpanPrograms2019,jeffery2023quantum} 
and are useful in a wide range of applications 
\cite{cadeTimeSpaceEfficient2018,delorenzoApplicationsQuantumAlgorithm2019a,carette2020extended,belovs2012learning,beigiQuantumSpeedupBased2019}.
While we have methods to create a
bounded-error quantum algorithm for function evaluation based on a set of
vectors that exactly satisfy the constraints of the general adversary dual \cite{reichardtReflectionsQuantumQuery2011},
it is natural to ask how robust this algorithm is to transformations and
perturbations. For example, can we take a vector set that satisfies the dual
adversary constraints and modify it to obtain an algorithm with better space complexity? Or
can we create an algorithm from a vector set that satisfies relaxed dual
adversary constraints? In this paper, we find criteria under which these
modified or approximately satisfying vector sets yield viable algorithms, and
we consider two problems where such robustness is useful.

First, we study when we can reduce the space complexity of
dual adversary-derived algorithms. Almost all Boolean functions on $n$ bits
have unitary space complexity $\Omega(n)$ \cite{jeffery2019span}, but we hope
to discover conditions under which less space might be required. We do this
by determining when we can compress the dimension of a set of vectors that exactly satisfies
the dual adversary constraints. While the dual adversary is most commonly
studied in the context of query complexity, is it also closely related to unitary
space complexity \cite{jeffery2019span}, and the dimension of the satisfying
vectors to the dual adversary problem determines the space used by
the algorithm. As building large quantum computers will likely continue to be
a technical challenge in the near to medium term \cite{IBM}, finding ways to
minimize the space used by quantum computers is important. 

We use two approaches to compress the dimension of a satisfying vector set,
and hence reduce the space used by the resulting algorithm. First, we
consider using a unitary transformation to rotate the vectors to a smaller
space. Next, we analyze applying the Johnson-Lindenstrauss (JL) lemma,
which is a powerful tool for compressing the dimension of a vector set while
approximately preserving the structure of the vectors. With our analysis,
we find the simpler, unitary transformation always results in a better
compression than the JL approach.
But, with either compression method, we can
show for any function with polynomially many 1-valued inputs, or polynomially
many 0-valued inputs, there is a query-optimal algorithm that uses
logarithmic space.
While it is not surprising that there is an algorithm that uses logarithmic space for
such problems (one could iterate through possible $1$/$0$-valued inputs and run Grover's search to test each one),
it is not obvious that there is a \textit{query-optimal} algorithm that uses logarithmic space.

The second problem we consider is how to create an algorithm using the output
of a numerical SDP solver applied to the general
adversary dual. One can plug the general adversary dual into a classical
SDP solver and find a set of vectors that is
close to a query-optimal, exactly satisfying vector set. 
Due to finite precision, we expect that the numerical solution will almost
never be \textit{exactly} satisfying, but we would like to know if we can still
produce a query-optimal quantum algorithm.
With our tools for creating robust dual adversary algorithms, 
we bound the error that can be tolerated, and we describe how to take a vector set that
does not quite satisfy the dual adversary constraints, but is within the
tolerable error, and use it to create a bounded-error algorithm. 
In the worst case, our analysis requires an error that scales
inversely with
the number of 1-valued or 0-valued inputs.
While this can be exponential in number of bits in the function, 
prior to our work, it was not clear how to create \textit{any} 
algorithm from an approximately satisfying solution.
Moreover, the general adversary bound is an SDP with dimension $|X|$
where $X$ is the function domain, so in general, solvers will already take time
polynomial in $|X|$ to solve classically \cite{vandenberghe1996semidefinite},
so we expect that attaining this level of precision will only contribute
polynomially to the overall runtime. Additionally,
we show numerically that, 
at least for small functions,
the error bound we require is easily attainable.

One may naturally wonder why a vector set that approximately satisfies the
dual adversary constraints does not immediately yield an appropriate
algorithm. The challenge is that the standard algorithm is based on a
reflection about the span of a subset of those vectors. For example, consider
a vector set that should ideally be $\{\ket{0},\ket{1},\ket{0}+\ket
{1}\}$, but is instead $\{\ket{0}+10^{-10}\ket{2},\ket{1},\ket{0}+\ket
{1}\}$. The span of the first set is 2-dimensional, but the span of the
second set is 3-dimensional, even though the two vector sets are very close
by almost any metric. If the algorithm reflects over a space that is much
larger than it should, it might not correctly evaluate the function on all
inputs. Our techniques allow us to find appropriate reflections (or in fact unitaries that are close to reflection) so that the
algorithms can proceed, even with errors like the example above.
Along the way towards proving our main results,
we also develop several tools that may be of independent interest, including a
robust approximate spectral gap lemma and 
open-source code to find solutions to the general adversary dual.

\subsection{Related Work}
\paragraph{Space Complexity and Compression} 
Reichardt observes that that the space used by
the dual adversary algorithm is the log of the rank of $Z$, where $Z$ is the
positive semidefinite matrix that optimizes the primal general adversary
bound, and he notes that this provides a worst case $\log(n|X|)$ space
complexity for $n$-bit functions with domain $X$
\cite{reichardtReflectionsQuantumQuery2011,reichardtSpanProgramsQuantum2009}. 
Our exact compression result (\cref{thm:exact_compress}) is of a similar
flavor, except that we are using the ``rank'' of the dual. Barnum,
Saks, and Szegedy use a different family of SDPs to characterize
query complexity \cite{barnum2003quantum} (these SDPs can give improved
performance in the case of small or zero error algorithms), and their
algorithm again depends on the rank of the satisfying positive semidefinite
matrix, but in the worst case uses $\log |X|+1$ qubits.

The key tool we use in our compression application is the Johnson-Lindenstrauss lemma.
The lemma guarantees that high-dimensional vectors randomly compressed into a lower-dimensional
space approximately preserves the inner products
of the vectors \cite{johnson1984extensions}.
The JL lemma is used in a variety of classical applications
including compressed sensing, dimensionality reduction,
and machine learning \cite{foucart2013invitation, vershynin2018high, carleo2019machine}, and it works even with sparse compression matrices
\cite{kane2014sparser}. In fact,
our work, in which we compress the dual solution to an SDP, 
has similarities to work by So et al. \cite{so2008unified}, which uses
JL compression to reduce the rank of the matrix that
is the primal solution to a semidefinite programming problem, at the cost of only approximately satisfying
the constraints.
It is also known that the compressed dimension given by the Johnson-Lindenstrauss lemma
is optimal up to constant factors \cite{larsen2017optimality}. 

The idea of relaxing SDP constraints in order to improve the space
used by an algorithm has also been considered in the classical regime.
Ding et al. create a storage-optimal SDP solver by relaxing
constraints \cite{ding2021optimal}.

In the quantum arena, a natural application of the JL lemma would be to compress the size of quantum states, but Harrow et al. found that there is no
such mapping that significantly reduces the dimension of quantum states while preserving
the Schatten 2-norm distance with high probability
\cite{harrow2015limitations}. However, the JL lemma was used to compress the
space used by a quantum finger printing protocol \cite{gavinsky2006strengths}.

Span programs (which are equivalent to the dual adversary \cite{reichardtReflectionsQuantumQuery2011,reichardtSpanProgramsAre2014}) were in fact originally formulated in order to understand
classical space complexity \cite{karchmer1993span}, and
Jeffery shows lower bounds on the space complexity of function evaluation
that depend on minimum span program and approximate span program sizes \cite{jeffery2019span}. 

\paragraph{Numerical Solutions to the Dual Adversary}
The idea of using classical computers
to design quantum algorithms is not new.
The variational quantum eigensolver
iteratively uses a classical computer to make a ground state ansatz, which is then tested by a quantum computer \cite{peruzzo2014variational}. A classical machine learning algorithm
can be used to guide quantum algorithm design \cite{bang2014strategy}. However,
the dual adversary semidefinite programming problem is different in that it automatically produces a query-optimal algorithm, rather than an iterative process
guided by classical, heuristic optimization methods.

\subsection{Open Questions}
Our techniques for space compression preserve the quantum query complexity 
of the original algorithm, while attempting to reduce space complexity. It 
would be very interesting if they could be modified to reduce space at the cost of 
increased query complexity; this might provide insight into one of Aaronson's 2021
open query complexity problems \cite{aaronson2021open}: better understanding space and query trade offs, specifically for the problems of collision and element 
distinctness.

While we analyze dual adversary algorithms, these are closely related to
span program algorithms. It should be possible to translate the bounds and 
conditions we find on the robustness of the general adversary dual into
analogous bounds and conditions on span program algorithms. We are especially
curious if these relaxed constraints could be related to Approximate Span Programs
\cite{itoApproximateSpanPrograms2019}, which are another way of relaxing 
the constraints of standard span programs.

While we show conditions under which it is possible to create dual adversary 
algorithms, we do not prove lower bounds. It would be interesting 
to study whether, with more detailed analysis, or under 
additional natural conditions, the JL
approach to space compression could be improved, or whether the analysis we present
is optimal. 

Generalizations of the general adversary bound
characterize the problems of quantum state conversion \cite{leeQuantumQueryComplexity2011} and quantum
subspace conversion \cite{belovs2023one}.
Perhaps our
techniques could be extended to these additional regimes.

\section{Preliminaries}\label{sec:background}
A few notational conventions: we use $\|\ket{\psi}\|$ do denote the $\ell_2$ norm of $\ket{\psi}$, $[n]$ to denote $\{1,2,\dots,n\}$, and $\delta_{i,j}$ to denote the Kronecker delta function. If $\ket{\lambda}$ is an eigenvector of $U$ with eigenvalue $e^{i\beta}$, we say 
the phase of $\ket{\lambda}$ is $\beta$.  For any unitary $U$, let
$P_\Theta(U)$ be the orthogonal projector onto the eigenvectors of $U$ with phase at
most $\Theta$. That is, $P_\Theta(U)$ is the orthogonal projector onto 
$\textrm{span}\{\ket{\lambda}:U\ket{\lambda}=e^{i\beta}\ket{\lambda}\textrm{ with }|\beta|\leq\Theta\}$. 

We consider the quantum query complexity and space complexity of evaluating a function 
$f:X\rightarrow\{0,1\}$ where 
$X\subseteq\{0,1\}^n$. For such a function $f$, we define $f^{-1}(b)=\{x\in X:f(x)=b\}.$
For some $x\in X\subseteq\{0,1\}^n$, we are given access to an oracle $O_x$ that 
acts on $\mathbb{C}^{n}\otimes \mathbb{C}^2$ as $O_x\ket{i}\ket{b}=\ket{i}\ket{b\oplus x_i}$, 
where $\ket{i}$ for $i\in [n]$ and $\ket{b}$ for $b\in\{0,1\}$ are standard basis 
states, and $x_i$ 
is the $i^\textrm{th}$ bit of $x$. Given $O_x$, we would like to determine $f(x)$.
We do this by implementing a bounded-error quantum query algorithm, which without loss of generality takes the form 
\begin{align}
U_TO_xU_{T-1}O_x\cdots U_1O_xU_0\ket{\hat{0}},
\end{align}
followed by a two-outcome measurement that determines the output of the algorithm, where $U_0,\dots U_T$ are unitary operations
acting on a Hilbert space of size $S$, such that for every
input $x\in X$, the probability of outputting $f(x)$ is at least $2/3$.
The algorithm uses $T$ applications of the oracle and $\log S$ qubits of space.
The bounded-error query complexity of $f$ is the minimum query complexity of any bounded-error query algorithm for $f$. 

The general adversary dual is used in designing query-optimal quantum algorithms for 
function evaluation: 
\begin{definition} [General Adversary Dual]\label{def:dual}
Let $f:X\rightarrow \{0,1\}$ for $X\subseteq\{0,1\}^n$. The following semidefinite 
optimization problem is called the dual of the general adversary bound, or what we 
call the \textit{general adversary dual}:
\begin{align}
&\min_{\substack{m\in \mathbb{N}\\\ket{v_{x,j}}\in \mathbb{C}^m}}\left\{\max_{x\in X}\sum_j\|\ket{v_{x,j}}\|^2\right\}\label{eq:filteredNormMa} \\
&\textrm{ s.t. }\forall x,y\in X:f(x)\neq f(y), \qquad 1=\sum_{j:x_j\neq y_j}\braket{v_{x,j}}{v_{y,j}}.\label{eq:filteredNorm}
\end{align}
\end{definition}

While \cref{def:dual} seeks to minimize the dimension $m$ of the vectors 
$\{\ket{v_{x,j}}\}_{x\in X,j\in[n]}$ 
(we will drop the set-building subscript and use $\{\ket{v_{x,j}}\}$ 
when clear from context) that also minimizes 
$\max_{x\in X}\sum_j\|\ket{v_{x,j}}\|^2$, we note that 
to design an algorithm, we only need a vector set $\{\ket{v_{x,j}}\}$ that satisfies the constraints in 
\cref{eq:filteredNorm}. This motivates the following definition, similar to converting vector sets in \cite{anderson2023improved}.
\begin{definition}[Deciding Vector Set and Related Terms]
Let $f:X\rightarrow \{0,1\}$ for $X\subseteq\{0,1\}^n$, and let $m\in \mathbb{N}$. 
Then $\{\ket{v_{x,j}}\in \mathbb{C}^m\}_{x\in X,j\in[n]}$ is an $f$-deciding vector 
set if
\begin{equation}\label{eq:DecideVectorConstraints}
\forall x,y\in X:f(x)\neq f(y), \qquad 1=\sum_{j:x_j\neq y_j}\braket{v_{x,j}}{v_{y,j}}.
\end{equation}
We say the {\em{size}} of $\{\ket{v_{x,j}}\}$ is $\max_{x\in X}\sum_j\|\ket{v_{x,j}}\|^2$, 
the {\em{dimension}} is $m$, and the {\em{maximum rank}} is $\max_{j \in [n]}  \quad
\operatorname{rank}\{\ket{v_{x,j}} : f(x) = 1\}$.
\label{def:decidingVecSet}
\end{definition}

Given an $f$-deciding vector set, one can design a query
algorithm to decide $f$:
\begin{theorem}[\cite{reichardtReflectionsQuantumQuery2011,leeQuantumQueryComplexity2011}]\label{thm:standard}
For $f:X\rightarrow\{0,1\}$ with $X\subseteq\{0,1\}^n$ let  $\{\ket{v_{x,j}}\}_{x\in X,j\in[n]}$ 
be an $f$-deciding vector set with size $A$ and dimension $m$. 
Then there is a bounded-error quantum query algorithm that decides $f$ with query complexity $O\left(A\right)$ and space complexity $O(\log(nm))$.
\end{theorem}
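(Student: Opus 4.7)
The plan is to follow the standard Reichardt--Lee template, which turns any $f$-deciding vector set into a bounded-error query algorithm via a phase-estimation subroutine. First, I would lift the $\ket{v_{x,j}}$ into a working Hilbert space $\mathbb{C}^{n+1} \otimes \mathbb{C}^2 \otimes \mathbb{C}^m$, and for each input $x \in X$ form a witness state $\ket{t_x} = \ket{0}\ket{0}\ket{\hat{0}} + \sum_{j=1}^n \ket{j}\ket{x_j}\ket{v_{x,j}}$, defining the query-independent subspace $\Lambda_0 = \mathrm{span}\{\ket{t_x} : f(x)=0\}$.

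Next I would introduce two reflections whose product drives the algorithm. The query-free reflection $R_\Lambda = 2\Pi_{\Lambda_0} - I$ uses no oracle calls, while the reflection $R_x$ about $\mathrm{span}(\{\ket{0}\ket{0}\ket{\hat{0}}\} \cup \{\ket{j}\ket{x_j}\ket{\psi} : j \in [n], \ket{\psi} \in \mathbb{C}^m\})$ is implementable with $O(1)$ queries to $O_x$ via the standard phase-kickback construction. The algorithm then applies phase estimation of $U = R_x R_\Lambda$ on the starting state $\ket{0}\ket{0}\ket{\hat{0}}$ at precision $\Theta = c/A$ and accepts iff the measured phase is zero.

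The main technical obstacle is the spectral analysis of $U$. When $f(x)=0$, the deciding-set constraints \cref{eq:DecideVectorConstraints} ensure $\ket{t_x}\in\Lambda_0$ and $R_x\ket{t_x}=\ket{t_x}$, so $\ket{t_x}$ is a $+1$-eigenvector of $U$; a direct inner-product calculation then shows that the starting state has nontrivial overlap with this $+1$ eigenspace, so phase estimation returns zero with constant probability. When $f(x)=1$, those same constraints supply a dual-feasible witness of squared norm $O(A)$ which, via Reichardt's effective spectral gap lemma, yields $\|P_\Theta(U)\ket{0}\ket{0}\ket{\hat{0}}\|^2 = O(\Theta^2 A)$; choosing $\Theta = c/A$ makes this quantity $O(1/A)$, which is below $1/3$ for $A$ larger than an absolute constant (small $A$ is handled by constant-query base cases). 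This gap analysis is where the size $A$ enters the query budget: phase estimation at precision $c/A$ costs $O(A)$ applications of $U$, hence $O(A)$ queries to $O_x$.

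The space bound is then immediate: the working register has dimension $O(nm)$, occupying $O(\log(nm))$ qubits, and the phase-estimation ancilla contributes at most $O(\log A)$ further qubits that can be absorbed into the bound---or sidestepped entirely by using the amplitude-amplification variant of the algorithm, which needs only $O(1)$ ancillas. Together these give the claimed $O(A)$ query complexity and $O(\log(nm))$ space complexity.
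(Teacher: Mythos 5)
Your overall template is the right one, and it is essentially the paper's: build $U$ from an input-dependent reflection (two oracle calls) times an input-independent reflection about the span of witness states, run phase estimation on $\ket{\hat{0}}$ at precision $\Theta=\Theta(1/A)$, certify one case by exhibiting a $0$-phase eigenvector with large overlap, and bound the other case with the effective spectral gap lemma (\cref{lemma:approx_gap_standard}). Reflecting about the span of the $0$-input witnesses rather than the $1$-input witnesses, as the paper does, is a harmless symmetric variant, since an $f$-deciding set is also a $\neg f$-deciding set. However, there is a genuine quantitative gap: your witness states $\ket{t_x}=\ket{0}\ket{0}\ket{\hat{0}}+\sum_j\ket{j}\ket{x_j}\ket{v_{x,j}}$ omit the rescaling of the query part that the construction actually hinges on. In the paper (\cref{eq:psidef}) the states spanning the reflection subspace are proportional to $\ket{\hat{0}}+\tfrac{1}{\sqrt{\C A}}\sum_i\ket{i}\ket{v_{x,i}}\ket{x_i}$, so their normalization satisfies $\nu_x\le 1+1/\C$ and the start state has \emph{constant} overlap with the relevant $0$-phase eigenvector. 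With your unscaled $\ket{t_x}$, the squared overlap of the start state with $\ket{t_x}/\|\ket{t_x}\|$ is $1/(1+\sum_j\|\ket{v_{x,j}}\|^2)$, which is only $\Omega(1/A)$ in general (it is $\approx 1/A$ whenever $\sum_j\|\ket{v_{x,j}}\|^2=\Theta(A)$ for that input), so your claim that phase estimation returns zero ``with constant probability'' is unjustified.

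This is not a cosmetic issue: on the other side your (correct) spectral-gap bound gives phase-$0$ probability $O(\Theta^2 A)=O(1/A)$ at $\Theta=c/A$, so the two cases both have phase-$0$ probability of order $1/A$ and are not separated by a constant; also $\delta$ must then be taken $O(1/A)$ rather than constant. Repairing this by majority voting or amplitude amplification/estimation multiplies the $O(A)$-query phase-estimation circuit by a factor of order $\sqrt{A}$ or more, so the $O(A)$ query bound is lost. The fix is exactly the paper's asymmetric scaling: scale the query part of the reflection-subspace states down by $1/\sqrt{\C A}$ and scale the negative witnesses (\cref{eq:phidef}, which carry $\ket{\bar{x}_i}$ and a minus sign) up by $\sqrt{\C A}$; the two factors cancel in the inner product so the orthogonality forced by \cref{eq:DecideVectorConstraints} is unchanged, the completeness overlap becomes a constant, and the spectral-gap bound becomes $\mu_x\Theta^2/4=O(A^2\Theta^2)$, still an arbitrarily small constant at $\Theta=\Theta(1/A)$. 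A minor further point: your ``constant-query base case for small $A$'' is both unnecessary and circular as stated (a small-size deciding vector set does not by itself give a constant-query algorithm before this theorem is proved); since $A\ge 1$, choosing the constant in $\Theta$ small enough already handles all $A$.
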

Because any $n$-bit function can be decided in $n$ queries, we assume $A=O(n).$ Additionally, applying Cauchy-Schwarz to \cref{eq:DecideVectorConstraints},
we have $A\geq 1.$

We sketch the proof of \cref{thm:standard} to make it easier to compare with our
algorithms. (For a more detailed description using similar notation, see 
Ref. \cite[Chapter 23.6]{ACNotes}). The subroutine used in both 
\cref{thm:standard} and in our algorithms is (parallelized) phase estimation. 

\begin{restatable}{lemma}{phaseEst}[Phase Estimation \cite{kitaevQuantumMeasurementsAbelian1995,cleveQuantumAlgorithmsRevisited1998,nagaj2009fast}]\label{lem:PhaseEst}
Let $U$ be a unitary that acts on $n$ qubits, and let $\delta,\Theta>0$. Then there 
is a phase estimation style circuit that has precision $\Theta$, error $\delta$, acts on $n+b$ qubits for $b=O\left(\log\frac{1}{\Theta}\log\frac{1}{\delta}\right)$, 
and
uses $O\left(\frac{1}{\Theta}\log\frac{1}{\delta}\right)$ calls to
control-$U$ applied to a single instance of the state $\ket{\psi}$, such that 
$p_0$, the probability of outcome $0$, satisfies 
\begin{equation}
\|P_{\Theta/2}(U)\ket{\psi}\|^2(1-\delta)-\delta\leq p_0\leq \|P_\Theta(U)\ket{\psi}\|^2+\delta.
\end{equation}
\end{restatable}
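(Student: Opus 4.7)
The plan is to build this two-sided, soft-threshold version of phase estimation by taking standard Kitaev phase estimation as a black box and boosting its success probability via the majority/median trick, executed \emph{sequentially} on the single input $\ket{\psi}$ so that no extra copies are needed.

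First, I would recall that Kitaev's basic phase estimation circuit uses $O(\log(1/\Theta))$ ancilla qubits, $O(1/\Theta)$ control-$U$ queries, and on an eigenstate $\ket{\lambda}$ with phase $\beta$ produces an ancilla state that, when measured, outputs an estimate $\tilde\beta$ with $|\tilde\beta-\beta|\le \Theta/8$ (say) with some constant probability $p^* \geq 3/4$, with polynomially decaying tails beyond. Linearly expanding $\ket{\psi}=\sum_\lambda c_\lambda\ket{\lambda}$ in the eigenbasis of $U$, the joint state after the circuit is $\sum_\lambda c_\lambda \ket{\lambda}\ket{\phi_\lambda}$, where $\ket{\phi_\lambda}$ is an ancilla state supported (up to tails) on values close to $\beta_\lambda$.

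The second step is amplification. Because the first register still holds the (essentially unchanged) eigenbasis decomposition of $\ket{\psi}$, I can run $k=O(\log(1/\delta))$ independent copies of the basic circuit \emph{in sequence}, each with its own fresh ancilla register but all sharing the same $\ket{\psi}$ register. The total cost is then $O((1/\Theta)\log(1/\delta))$ queries and $O(\log(1/\Theta)\log(1/\delta))$ ancilla qubits, matching the stated budget. I would then define the final classical post-processing: reversibly compute the median of the $k$ ancilla estimates into one extra register, and declare outcome $0$ if this median is at most $3\Theta/4$. By a standard Chernoff bound, the median is within $\Theta/8$ of the true phase with probability at least $1-\delta$ for each eigenvector.

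The third step is the two-sided probability analysis. Conditioning on an eigenstate $\ket\lambda$: if $|\beta_\lambda|\le \Theta/2$, the median lies in $[-5\Theta/8,5\Theta/8]\subseteq [-3\Theta/4,3\Theta/4]$ with probability $\ge 1-\delta$, so outcome $0$ is returned with probability $\ge 1-\delta$; if $|\beta_\lambda|> \Theta$, the median lies outside $[-7\Theta/8,7\Theta/8]$ with probability $\ge 1-\delta$, so outcome $0$ is returned with probability $\le \delta$. Writing $p_0 = \sum_\lambda |c_\lambda|^2 q_\lambda$ where $q_\lambda$ is the per-eigenvector acceptance probability, splitting the sum into the pieces with $|\beta_\lambda|\le\Theta/2$, $\Theta/2<|\beta_\lambda|\le\Theta$, and $|\beta_\lambda|>\Theta$, and bounding each sum by the corresponding projector norm squared, yields the stated sandwich
\[
\|P_{\Theta/2}(U)\ket{\psi}\|^2(1-\delta)-\delta \;\le\; p_0 \;\le\; \|P_{\Theta}(U)\ket{\psi}\|^2+\delta.
\]

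The main obstacle I anticipate is making the ``sequential reuse of $\ket{\psi}$'' argument fully rigorous: each basic phase-estimation block does not exactly preserve $\ket{\psi}$ because the ancilla is not a perfect record of the phase, so after each block the cross-terms between different eigenvectors carry residual ancilla overlaps. The standard remedy is to analyze the entire composite circuit at once on a generic eigenvector $\ket\lambda$ (where everything factors and the analysis is independent per-eigenvector), and then use linearity over the decomposition of $\ket\psi$ together with the fact that the final measurement is diagonal in the eigenbasis of the median register. This diagonalization is what lets the per-eigenvector Chernoff bound lift directly to the mixed expression $\sum_\lambda|c_\lambda|^2 q_\lambda$ without needing independence of separate copies of $\ket\psi$.
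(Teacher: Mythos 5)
Your proof is correct, and it uses the same circuit as the paper (Kitaev/Cleve-style phase estimation boosted by $O(\log\frac{1}{\delta})$ repetitions with median voting, \`a la Nagaj et al.), but the way you handle superposition inputs is genuinely different from the paper's argument. The paper takes the per-eigenvector $(1-\delta)$ median guarantee as a black box, splits $\ket{\psi}$ into $P_\Theta(U)\ket{\psi}+\overline{P}_\Theta(U)\ket{\psi}$, and works at the amplitude level with triangle and reverse-triangle inequalities applied to $\|\Gamma_0 D(U)(\cdot)\|$; this produces bounds of the form $\|P_\Theta\ket{\psi}\|^2+3\sqrt{\delta}$ and $(1-\delta)\|P_{\Theta/2}\ket{\psi}\|^2-2\sqrt{\delta}$, and the $O(\sqrt{\delta})$ slack is then absorbed by rescaling $\delta$, which is harmless since the cost depends only on $\log\frac{1}{\delta}$. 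You instead decompose the acceptance probability exactly, $p_0=\sum_\lambda|c_\lambda|^2 q_\lambda$, split the eigenphases into the three ranges $|\beta|\leq\Theta/2$, $\Theta/2<|\beta|\leq\Theta$, $|\beta|>\Theta$, and use a thresholded accept rule on the median; this avoids cross-term losses entirely and gives the stated sandwich with $\delta$ itself, no rescaling needed. The one point you flag as a worry—reuse of the single copy of $\ket{\psi}$ across blocks—is in fact a non-issue, for exactly the reason you give: the whole composite circuit maps $\ket{\lambda}\ket{0}\mapsto\ket{\lambda}\ket{\phi_\lambda}$ for each eigenvector (control-$U$ acts as a phase on each eigenspace and the ancilla operations never touch the system register), and since the final measurement acts only on the ancilla/median register, orthogonality of the $\ket{\lambda}$ makes the probability decompose with no interference terms. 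Both analyses are valid; yours is arguably tighter and more self-contained, while the paper's treats the amplified estimator purely as a black box and pays only a constant-factor cost in $\log\frac{1}{\delta}$.
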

\noindent We prove \cref{lem:PhaseEst} in \cref{sec:prelimProofs}, which relies heavily on 
prior analyses of phase estimation circuits.

The algorithm of \cref{thm:standard} applies phase estimation with precision $O(1/A)$
 on a unitary $U$ with an initial state $\ket{\hat{0}}$.
$U$ acts on the space $\mathcal{H}=\mathbb{C}\oplus \mathbb{C}^n\otimes\mathbb{C}^m\otimes\mathbb{C}^2$, 
and is a product of two reflections: $U=(2\Pi_x-I)(2\Delta-I)$,
where $\Pi_x = \proj{\hat{0}}+\sum_{i\in[n]}\proj{i}\otimes I\otimes\proj{x_i}$ (here $I$ acts on $\mathbb{C}^m$, and $\ket{\hat{0}}$ is orthogonal to $\sum_{i\in[n]}\proj{i}\otimes I\otimes(\proj{0}+\proj{1})$)
and $\Delta$ is the orthogonal projector onto the following set of normalized vectors:
\begin{align}\label{eq:psidef}
\ket{\psi_y}&=\frac{1}{\sqrt{\nu_y}}\left(\ket{\hat{0}}+\frac{1}{\sqrt{\C A}}
\sum_{i\in[n]}\ket{i}\ket{v_{y,i}}\ket{y_i}\right) && &\forall y:f(y)=1,\\
\textrm{s.t. }\nu_y&=1+\frac{1}{\C A}\sum_{i\in[n]}\|\ket{v_{y,i}}\|^2 \leq 1+1/\C,\label{eq:nuxBound}
\end{align}
where $\nu_y\geq 1$ is chosen to normalize $\ket{\psi_y}$, and $\C$ is a constant chosen
depending on the desired success probability.
We note $(2\Pi_x-I)$ requires $2$ uses of $O_x$ to implement, and $(2\Delta-I)$
depends on the choice of the deciding vector set but is independent of the input $x$. 

Then when $f(x)=1$ and $\C=2$, $\ket{\hat{0}}$ has high overlap with $\ket{\psi_x}$, which is 
easily verified to be 
a $0$-phase eigenvector of $U$, so by \cref{lem:PhaseEst}, the probability of 
measuring a phase of $0$
when we perform phase estimation is large when $\delta$ (the error of phase estimation) is a small constant.

On the other hand, when $f(x)=0$, we consider the following normalized 
vector:
\begin{align}
\ket{\phi_x}&=\frac{1}{\sqrt{\mu_y}}\left(\ket{\hat{0}}-\sqrt{\C A}\sum_{i\in[n]}\ket{i}\ket{v_{x,i}}\ket{\bar{x}_i}\right) && 
\label{eq:phidef}
\\
\textrm{s.t. }\mu_x&=1+\C A\sum_{i\in[n]}\|\ket{v_{x,i}}\|^2\leq 1 + \C A^2, \label{eq:mu_x_bound}
\end{align}
where $\mu_y\geq 1$ is chosen to normalize the vector.
Because $\{\ket{v_{y,i}}\}$ is a deciding vector set, we have 
$\forall y:f(y)=1, \braket{\psi_y}{\phi_x}=0$. Thus $\{\ket{\phi_x}\}$ 
is orthogonal to $\Delta$. Next we use the effective spectral gap lemma:
\begin{restatable}{lemma}{approxGap}[Effective Spectral Gap Lemma \cite{leeQuantumQueryComplexity2011}]\label{lemma:approx_gap_standard}
Let $\Pi, \Delta$ be orthogonal projectors, let $U=(2\Pi-I)(2\Delta-I)$, and $\Delta\ket{w}=0$. Then
\begin{equation}
\|P_\Theta(U)\Pi\ket{w}\|\leq \Theta/2\|\ket{w}\|.
\end{equation}
\end{restatable}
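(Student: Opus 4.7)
The plan is to prove the effective spectral gap lemma by invoking Jordan's lemma to decompose the Hilbert space into orthogonal subspaces of dimension at most two that are simultaneously invariant under $\Pi$ and $\Delta$ (and hence under $U$), and then analyzing the contribution of $\ket{w}$ in each block separately. Without loss of generality, I would assume $\Theta<\pi$, since otherwise $P_\Theta(U)=I$ and the bound follows trivially from $\|\Pi\ket{w}\|\leq\|\ket{w}\|$.

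First I would dispense with the one-dimensional blocks by case analysis on the four possibilities for what $\Pi$ and $\Delta$ restrict to ($0$ or $I$). If $\Delta$ restricts to $I$, then $\Delta\ket{w}=0$ forces $\ket{w}$ to have no component here. If $\Pi$ restricts to $0$, then $\Pi\ket{w}=0$ in this block. The remaining case is $\Pi=I,\,\Delta=0$, where $U=(2I-I)(-I)=-I$, so the only eigenvalue of $U$ has phase $\pi>\Theta$, and $P_\Theta(U)$ annihilates this block. In all cases the one-dimensional blocks contribute nothing to $\|P_\Theta(U)\Pi\ket{w}\|$.

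Next I would handle a generic two-dimensional block. Let $\ket{\delta}$ span the range of $\Delta$ and $\ket{\pi}=\cos\theta\,\ket{\delta}+\sin\theta\,\ket{\delta^\perp}$ span the range of $\Pi$, where $\theta\in(0,\pi/2)$ is the principal angle. A direct computation shows that $R_\Pi R_\Delta$ acts on this block as a rotation by $2\theta$, with eigenvectors $\tfrac{1}{\sqrt 2}(\ket{\delta}\mp i\ket{\delta^\perp})$ of phases $\pm 2\theta$. Since $\Delta\ket{w}=0$, the component of $\ket{w}$ in this block is some scalar $\alpha$ times $\ket{\delta^\perp}$, and $\Pi\ket{w}=\alpha\sin\theta\,\ket{\pi}$. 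If $2\theta>\Theta$ the block contributes nothing to $P_\Theta(U)\Pi\ket{w}$. If $2\theta\leq\Theta$, then using $\sin\theta\leq\theta\leq\Theta/2$, the squared norm contributed is
\begin{equation}
\|\Pi\ket{w}\|^2=|\alpha|^2\sin^2\theta\leq(\Theta/2)^2|\alpha|^2=(\Theta/2)^2\|\ket{w}\|^2,
\end{equation}
where the norms on the right refer to the restriction of $\ket{w}$ to the block.

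Finally I would sum the per-block inequalities. Because Jordan's decomposition is orthogonal and $P_\Theta(U)$ respects it, both $\|P_\Theta(U)\Pi\ket{w}\|^2$ and $\|\ket{w}\|^2$ are the sums of their per-block contributions, yielding $\|P_\Theta(U)\Pi\ket{w}\|^2\leq(\Theta/2)^2\|\ket{w}\|^2$. Taking square roots gives the stated bound. The one mildly delicate step is the case analysis on one-dimensional blocks — it is easy to overlook the $\Pi=I,\Delta=0$ case and the use of $\Theta<\pi$ there — but otherwise the proof is a routine application of Jordan's lemma and trigonometry.
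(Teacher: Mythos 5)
Your proof is correct. The Jordan's lemma argument is the classical route to the effective spectral gap lemma: the case analysis on one-dimensional blocks is complete (and you correctly flag the $\Pi=I,\Delta=0$ block, which needs $\Theta<\pi$), the two-dimensional block computation with eigenphases $\pm 2\theta$ and the bound $\sin\theta\leq\theta\leq\Theta/2$ is right, and the blockwise summation is justified because each Jordan block is invariant under $U$, hence under $P_\Theta(U)$, and under $\Pi$. However, the paper deliberately takes a different route: it never proves \cref{lemma:approx_gap_standard} directly, but instead derives it as the special case $\varepsilon=0$, $R=2\Delta-I$ of the Robust Approximate Spectral Gap Lemma (\cref{lemma:approx_gap}), whose proof explicitly avoids Jordan's lemma. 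That proof sets $\ket{v}=P_\Theta(U)\Pi\ket{w}$, $\ket{v'}=R\ket{v}$, $\ket{v''}=U\ket{v}$ and chains together the bound $\|\ket{v}-U\ket{v}\|\leq\Theta\|\ket{v}\|$ on the low-phase subspace with Cauchy--Schwarz and the identities $\Pi\ket{v'}=\Pi\ket{v''}$, $\bar{\Pi}\ket{v'}=-\bar{\Pi}\ket{v''}$. The reason for this choice is robustness: Jordan's lemma requires \emph{two} orthogonal projectors, so your decomposition into two-dimensional invariant blocks is unavailable once $R$ is an arbitrary unitary satisfying only $\|(I+R)\ket{w}\|\leq\varepsilon$, which is exactly the generalization the paper needs for its compressed and numerically derived algorithms. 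Your approach buys geometric transparency for the exact statement; the paper's buys the extra generality that drives its main results.
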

Then the probability of measuring a phase of $0$ when we perform phase estimation when $f(x)=0$, by 
\cref{lem:PhaseEst} is upper bounded by a term that depends on 
$\|P_\Theta(U)\ket{\hat{0}}\|^2=\mu_y\|P_\Theta(U)\Pi_x\ket{\phi_x}\|^2$. 
Applying \cref{lemma:approx_gap_standard}, and using the fact that $\mu_y=O(A^2)$, 
we see this is
small when $\Theta$, the precision of phase estimation, is chosen to be $O(1/A)$. 
Since $\delta$ (the error of phase estimation) is chosen to be a small constant, by \cref{lem:PhaseEst} this
leads to a bounded-error algorithm with query complexity $O(A)$, and space complexity 
$\log(1+2nm) + O(\log A) = O(\log(nm))$, as claimed in \cref{thm:standard}.

\section{Robust Dual Adversary Algorithm}\label{sec:Robust}

In this section, we show that the dual adversary algorithm has robustness, 
in that it tolerates errors and flexibility in how it is defined. 
As described in \cref{sec:background}, we want to create a bounded-error quantum query algorithm for a Boolean function $f:X\rightarrow\{0,1\}$, for $X\subseteq\{0,1\}^n$.
Similar to the standard algorithm, 
our robust algorithm will involve applying phase estimation
to a unitary $U$ that acts on the space $\mathcal{H}=\mathbb{C}\oplus \mathbb{C}^n\otimes\mathbb{C}^m\otimes\mathbb{C}^2$.
We perform phase estimation on $U$ with a state $\ket{\hat{0}}\in \mathcal{H}$.

We now describe $U$. As in \cref{sec:background}, we define the orthogonal projector $\Pi_x = \proj{\hat{0}}+\sum_{i\in[n]}\proj{i}\otimes I\otimes\proj{x_i}$
on $\mathcal{H}$ where $I$ acts on $\mathbb{C}^m$, and $\ket{\hat{0}}$ is orthogonal to $\sum_{i\in[n]}\proj{i}\otimes I\otimes(\proj{0}+\proj{1})$. Notice that $\Pi_x$ can be implemented 
with two applications of the oracle $O_x$. Let $R$ be another unitary that acts on the same space as $\Pi_x$, but $R$ need not be a reflection. Let $U=(2\Pi_x-I)R$.

\begin{theorem}\label{thm:inexact_algorithm}
Let $\delta,\nu_x,\mu_x>0$, $\varepsilon_\psi,\varepsilon_\phi\geq 0$, and let $U=(2\Pi_x-I)R$ as defined above, so $U$ acts on $O(\log nm)$ qubits.
Consider $0 < \theta \leq 1$.
Suppose there are sets of (not necessarily normalized) vectors $\{\ket{\psi_x}=\frac{1}{\sqrt{\nu_x}}(\ket{\hat{0}}+\ket{\eta_x})\}_{x:f(x)=1}$ and $\{\ket{\phi_x}=\frac{1}{\sqrt{\mu_x}}(\ket{\hat{0}}+\ket{\eta_x})\}_{x:f(x)=0}$ where $\forall x\in X,\braket{\eta_x}{\hat{0}}=0$, and furthermore, that

\begin{enumerate}
\item $\forall x:f(x)=1, \quad \|(I-U)\ket{\psi_x}\|\leq \varepsilon_\psi$\label{part:psi_cond} and
\item $\forall x:f(x)=0, \quad \Pi_x\ket{\eta_x}=0$ and $\|(I+R)\ket{\phi_x}\|\leq \varepsilon_\phi$. \label{part:phi_cond}
\end{enumerate}

\noindent Then the probability of measuring a phase of $0$ if we do phase estimation on $U$ with initial state $\ket{\hat{0}}$  with precision $\theta$ and error $\delta$ when $f(x)=1$ is at least
\begin{equation}
 \left(\sqrt{\nu_x (1-\frac{5\varepsilon_{\psi}^2}{\theta^2})} - \sqrt{\nu_x\|\ket{\psi_x}\|^2-1} \right)^2(1-\delta)-\delta,
\end{equation}
and when $f(x)=0$ is at most
\begin{equation}
\mu_x\left(\varepsilon_\phi/2+\theta/2\|\ket{\phi_x}\|\right)^2+\delta.
\end{equation}
This algorithm uses $O\left(\frac{1}{\theta}\log\frac{1}{\delta}\right)$ queries and $O\left(\log(nm)+\log\frac{1}{\theta}\log\frac{1}{\delta}\right)$
qubits.
\end{theorem}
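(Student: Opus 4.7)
The plan is to deploy the phase estimation subroutine (\cref{lem:PhaseEst}) on $U$ with input $\ket{\hat{0}}$, which reduces the task to bounding $\|P_{\theta/2}(U)\ket{\hat{0}}\|^2$ from below in the YES case and $\|P_{\theta}(U)\ket{\hat{0}}\|^2$ from above in the NO case. The query and space complexity claims then drop out of the parameters of \cref{lem:PhaseEst}, since each call to $U$ costs two oracle queries plus the input-independent unitary $R$ acting on $O(\log nm)$ qubits.

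For the YES case, I would decompose $\ket{\hat{0}} = \sqrt{\nu_x}\ket{\psi_x} - \ket{\eta_x}$, using $\braket{\eta_x}{\hat{0}}=0$ to also get $\|\ket{\eta_x}\|^2 = \nu_x\|\ket{\psi_x}\|^2 - 1$. The key observation is that any eigenvector $\ket{\lambda}$ of $U$ with phase $\beta$ satisfying $|\beta|>\theta/2$ has $\|(I-U)\ket{\lambda}\|^2 = 4\sin^2(\beta/2) \ge \theta^2/5$ for $\theta \in (0,1]$, which follows from a second-order Taylor bound on $\sin(\theta/4)$. Splitting $\ket{\psi_x}$ along $\operatorname{range}P_{\theta/2}(U)$ and its complement and applying Condition~\ref{part:psi_cond} then forces $\|(I-P_{\theta/2}(U))\ket{\psi_x}\|^2 \le 5\varepsilon_{\psi}^2/\theta^2$. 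The triangle inequality applied to $P_{\theta/2}(U)\ket{\hat{0}} = \sqrt{\nu_x}\,P_{\theta/2}(U)\ket{\psi_x} - P_{\theta/2}(U)\ket{\eta_x}$, together with $\|P_{\theta/2}(U)\ket{\eta_x}\| \le \|\ket{\eta_x}\|$, then yields the claimed lower bound.

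For the NO case, I would first use $\Pi_x\ket{\eta_x}=0$ and $\Pi_x\ket{\hat{0}}=\ket{\hat{0}}$ to rewrite $\ket{\hat{0}} = \sqrt{\mu_x}\,\Pi_x\ket{\phi_x}$, so $\|P_{\theta}(U)\ket{\hat{0}}\|^2 = \mu_x\|P_{\theta}(U)\Pi_x\ket{\phi_x}\|^2$. The heart of the proof is a robust analogue of \cref{lemma:approx_gap_standard}: I would show $\|P_{\theta}(U)\Pi_x\ket{\phi_x}\| \le \varepsilon_\phi/2 + (\theta/2)\|\ket{\phi_x}\|$. The eigenequation $U\ket{\lambda}=e^{i\beta}\ket{\lambda}$ gives $R\ket{\lambda} = e^{i\beta}(2\Pi_x-I)\ket{\lambda}$, and taking adjoints yields $\bra{\lambda}(2\Pi_x-I) = e^{i\beta}\bra{\lambda}R^\dagger$. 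Writing $R^\dagger\ket{\phi_x} = -\ket{\phi_x} + \ket{r}$ with $\|\ket{r}\| \le \varepsilon_\phi$ (using unitarity of $R^\dagger$ and Condition~\ref{part:phi_cond}) and combining with $2\Pi_x = I + (2\Pi_x-I)$ produces the identity
\begin{equation}
\bra{\lambda}\Pi_x\ket{\phi_x} = \frac{1-e^{i\beta}}{2}\braket{\lambda}{\phi_x} + \frac{e^{i\beta}}{2}\braket{\lambda}{r}.
\end{equation}
Using $|1-e^{i\beta}|\le \theta$ for $|\beta|\le\theta$, summing $|\bra{\lambda_k}\Pi_x\ket{\phi_x}|^2$ over an orthonormal basis of $\operatorname{range}P_{\theta}(U)$, and applying the triangle inequality on sequence $\ell_2$ norms gives the stated upper bound.

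The main obstacle is precisely this robust spectral gap lemma: the standard proof of \cref{lemma:approx_gap_standard} leans on $R=2\Delta-I$ being a reflection (so that $(2\Pi-I,2\Delta-I)$ admits a Jordan-type decomposition into invariant two-dimensional subspaces) and on the exact condition $\Delta\ket{w}=0$. Allowing $R$ to be an arbitrary unitary and only an approximate $(-1)$-eigenvector condition on $\ket{\phi_x}$ removes both handles, but the algebraic identity above sidesteps them by working directly from the eigenequation of $U$ and absorbing the slack into the additive $\varepsilon_\phi/2$ term.
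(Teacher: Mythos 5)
Your proposal is correct, and the interesting divergence is in the NO case. For the YES case you essentially inline the paper's \cref{lem:preserve}: decompose $\ket{\psi_x}$ in the eigenbasis of $U$, use $|1-e^{i\beta}|^2\geq \theta^2/5$ for $|\beta|>\theta/2$ and $\theta\leq 1$ to bound the weight outside $\operatorname{range}P_{\theta/2}(U)$ by $5\varepsilon_\psi^2/\theta^2$, then apply the reverse triangle inequality to $P_{\theta/2}(U)\ket{\hat 0}=\sqrt{\nu_x}P_{\theta/2}(U)\ket{\psi_x}-P_{\theta/2}(U)\ket{\eta_x}$ — this is exactly the paper's route. (One shared wrinkle: your argument yields $\|P_{\theta/2}(U)\ket{\psi_x}\|^2\geq\|\ket{\psi_x}\|^2-5\varepsilon_\psi^2/\theta^2$, which matches the stated bound only when $\|\ket{\psi_x}\|\geq 1$; the paper's \cref{lem:preserve} has the same implicit normalization assumption, so this is not a defect you introduced.)

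For the NO case, the paper proves the robust spectral gap bound (\cref{lemma:approx_gap}) by following Lee et al.: it sets $\ket{v}=P_\Theta\Pi\ket{w}$, $\ket{v'}=R\ket{v}$, $\ket{v''}=U\ket{v}$, bounds $\|\ket{v}-\ket{v''}\|\leq\Theta\|\ket{v}\|$, and chains together identities about which combinations of $\ket{v'},\ket{v''}$ are preserved by $\Pi$ and $\bar\Pi$, finishing with Cauchy--Schwarz and a division by $\|\ket{v}\|$. Your argument instead works coordinate-wise in an orthonormal eigenbasis of $\operatorname{range}P_\Theta(U)$: the eigenequation gives $\bra{\lambda}(2\Pi_x-I)=e^{i\beta}\bra{\lambda}R^\dagger$, writing $R^\dagger\ket{\phi_x}=-\ket{\phi_x}+\ket{r}$ with $\|\ket{r}\|\leq\varepsilon_\phi$ (valid since $(I+R^\dagger)\ket{\phi_x}=R^\dagger(I+R)\ket{\phi_x}$ and $R^\dagger$ is an isometry) yields your exact identity for $\bra{\lambda}\Pi_x\ket{\phi_x}$, and the $\ell_2$ triangle inequality plus Bessel's inequality gives $\varepsilon_\phi/2+(\Theta/2)\|\ket{\phi_x}\|$ — the same constant as the paper. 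I checked the identity and it is correct. Your version is arguably more transparent: it isolates exactly where each of the two terms comes from (the $|1-e^{i\beta}|\leq\Theta$ factor and the $\braket{\lambda}{r}$ overlap), avoids the auxiliary vectors and the division-by-$\|\ket{v}\|$ step, and generalizes just as readily to non-reflection $R$. What the paper's approach buys is that it is stated as a standalone lemma about arbitrary $\Pi$, $R$, $\ket{w}$ in the form traditional in this literature, and it tracks more directly how the argument degrades from the exact effective spectral gap lemma. The remaining pieces — $\ket{\hat 0}=\sqrt{\mu_x}\Pi_x\ket{\phi_x}$, the invocation of \cref{lem:PhaseEst}, and the query/space accounting — match the paper.
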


\cref{thm:inexact_algorithm} extends the robustness of the algorithm used to 
prove \cref{thm:standard} (as described in \cref{sec:background})
in several ways. In the standard analysis, $\varepsilon_\psi$ and $\varepsilon_\phi$
are both $0$, whereas we now allow them to be non-zero.
In addition,
\cref{thm:inexact_algorithm} allows for imperfect alignment between the vector sets $\{\ket{\psi_x}\}$ and $\{\ket{\phi_x}\}$ and the unitary $U$. This 
will be the key for our applications in the following sections.
Additionally, in the
standard algorithm, $R$ is chosen to be a reflection, but in 
\cref{thm:inexact_algorithm}, $R$ can be any unitary that satisfies the criterion 
of \cref{thm:inexact_algorithm}. While none of the applications we describe in this
paper use this flexibility in the design of $R$, it might be helpful in future use cases.

To prove \cref{thm:inexact_algorithm}, we will need the following two lemmas, which we prove
in \cref{sec:proofs_inexact}:
\begin{restatable}{lemma}{positiveStatePreserve}\label{lem:preserve}
Let $U$ be a unitary and $0 < \Theta \leq 1$. If $\|(I-U)\ket{\psi_x}\|^2\leq \varepsilon$, 
then $\|P_\Theta(U)\ket{\psi_x}\|^2\geq 1- \frac{1.1\varepsilon}{\Theta^2}$.
\end{restatable}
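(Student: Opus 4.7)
The plan is to work in the eigenbasis of $U$. Since $U$ is unitary, write $U = \sum_\lambda e^{i\beta_\lambda}\ketbra{\lambda}{\lambda}$ with phases $\beta_\lambda \in (-\pi,\pi]$, and expand $\ket{\psi_x} = \sum_\lambda c_\lambda\ket{\lambda}$; treating $\ket{\psi_x}$ as a unit vector gives $\sum_\lambda|c_\lambda|^2 = 1$. A direct calculation using the identity $|1-e^{i\beta}|^2 = 4\sin^2(\beta/2)$ then yields
\[
\|(I-U)\ket{\psi_x}\|^2 \;=\; \sum_\lambda |c_\lambda|^2\,|1-e^{i\beta_\lambda}|^2 \;=\; 4\sum_\lambda |c_\lambda|^2\sin^2(\beta_\lambda/2) \;\leq\; \varepsilon.
\]

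Next, I would split this sum according to whether $|\beta_\lambda| \leq \Theta$ or not and discard the low-phase terms (they only help). On the high-phase part, using that $\sin^2(\cdot/2)$ is monotone on $[0,\pi]$ (so $\sin^2(\beta_\lambda/2)\geq \sin^2(\Theta/2)$ whenever $|\beta_\lambda|>\Theta$), I obtain
\[
4\sin^2(\Theta/2)\sum_{\lambda\,:\,|\beta_\lambda|>\Theta}|c_\lambda|^2 \;\leq\; \varepsilon.
\]
Since $\sum_{\lambda:|\beta_\lambda|>\Theta}|c_\lambda|^2 = 1 - \|P_\Theta(U)\ket{\psi_x}\|^2$ by definition of $P_\Theta(U)$, rearranging gives $\|P_\Theta(U)\ket{\psi_x}\|^2 \geq 1 - \varepsilon/(4\sin^2(\Theta/2))$.

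The main (and essentially only) technical step is verifying the numerical constant: I need $4\sin^2(\Theta/2) \geq \Theta^2/1.1$ for all $\Theta\in(0,1]$, equivalently $\sin(\Theta/2)/(\Theta/2) \geq 1/\sqrt{1.1}$. Because $\sin(x)/x$ is decreasing on $(0,\pi/2]$, its infimum over $\Theta/2\in(0,1/2]$ is attained at $\Theta/2 = 1/2$, where $2\sin(1/2) \approx 0.9589$, safely above $1/\sqrt{1.1} \approx 0.9535$. Combining this with the previous display yields the claimed bound $\|P_\Theta(U)\ket{\psi_x}\|^2 \geq 1 - 1.1\varepsilon/\Theta^2$. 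The restriction $\Theta \leq 1$ is what pins the constant down to $1.1$; the argument would go through for $\Theta\in(0,\pi]$ with a larger constant (e.g.\ $\pi^2/4$) coming from Jordan's inequality, but in our setting the tighter value is what is needed for the downstream bound in \cref{thm:inexact_algorithm}.
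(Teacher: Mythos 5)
Your proof is correct and follows essentially the same route as the paper: expand $\ket{\psi_x}$ in the eigenbasis of $U$, bound the weight on eigenvectors with phase exceeding $\Theta$ using monotonicity of $|1-e^{i\beta}|^2$, and invoke $|1-e^{i\Theta}|^2=4\sin^2(\Theta/2)\geq \Theta^2/1.1$ for $\Theta\leq 1$ (an inequality the paper asserts and you verify explicitly). The only shared caveat is the implicit assumption that $\ket{\psi_x}$ is normalized, which the paper's own proof also makes.
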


\cref{lem:preserve} tells us that if a unitary $U$ approximately preserves
 a state $\ket{\psi_x}$, then $\ket{\psi_x}$ has high overlap with the
low phase eigenspace of $U$. 
This  gives us flexibility when $f(x)=1$, in that
our initial state need not have high overlap with the $0$-phase space of $U$, but
instead we only require high overlap with the low-phase-eigenspace of $U$. 
The proof 
of \cref{lem:preserve} proceeds by 
decomposing $\ket{\psi_x}$ into its eigenbasis with respect to $U,$ 
and showing that $\epsilon$ serves to bound the amount of amplitude
$\ket{\psi_x}$ can have in states with eigenvalues larger than $\Theta$.

\begin{restatable}{lemma}{negativeStateDestroy}[Robust Approximate Spectral Gap Lemma]\label{lemma:approx_gap}
Let $\varepsilon\geq 0$, $\Pi$ be an orthogonal projector, $R$ be a unitary, and $U=(2\Pi-I)R$. For $\Theta> 0$, if $\|(I+R)\ket{w}\|\leq\varepsilon$, then
\begin{equation}
\|P_\Theta(U)\Pi\ket{w}\|\leq \frac{\varepsilon}{2}+\frac{\Theta}{2}\|\ket{w}\|.
\end{equation}
 \end{restatable}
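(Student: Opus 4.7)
\textbf{Proof plan for \cref{lemma:approx_gap}.} The strategy is to work in the eigenbasis of $U$ and directly compute how $(I+R)\ket{w}$ relates to the coefficients of $\Pi\ket{w}$ on low-phase eigenvectors. Let $U=\sum_j e^{i\beta_j}\ket{\lambda_j}\!\bra{\lambda_j}$ be the spectral decomposition, and set $a_j=\braket{\lambda_j}{w}$ and $b_j=\braket{\lambda_j}{\Pi w}$. Then
\[
\|P_\Theta(U)\Pi\ket{w}\|^2=\sum_{j:|\beta_j|\le\Theta}|b_j|^2,
\]
so it suffices to bound each relevant $|b_j|$ in terms of $|a_j|$ and the inner product $\bra{\lambda_j}(I+R)\ket{w}$, and then to sum using the triangle inequality in $\ell_2$.

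The key identity comes from $U=(2\Pi-I)R$. Since $(2\Pi-I)^{-1}=(2\Pi-I)$, the eigenvalue equation $U\ket{\lambda_j}=e^{i\beta_j}\ket{\lambda_j}$ rearranges to $R\ket{\lambda_j}=e^{i\beta_j}(2\Pi-I)\ket{\lambda_j}$. Taking the inner product with $\ket{w}$ on the right gives
\[
\bra{\lambda_j}R\ket{w}=e^{i\beta_j}\bra{\lambda_j}(2\Pi-I)\ket{w}=e^{i\beta_j}(2b_j-a_j),
\]
and therefore
\[
\bra{\lambda_j}(I+R)\ket{w}=a_j+e^{i\beta_j}(2b_j-a_j)=a_j(1-e^{i\beta_j})+2e^{i\beta_j}b_j.
\]
Solving for $b_j$ and taking absolute values, then using $|1-e^{i\beta_j}|=2|\sin(\beta_j/2)|\le|\beta_j|\le\Theta$ for the indices of interest, yields
\[
|b_j|\le \tfrac{1}{2}\bigl|\bra{\lambda_j}(I+R)\ket{w}\bigr|+\tfrac{\Theta}{2}|a_j|\qquad\text{whenever }|\beta_j|\le\Theta.
\]

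Finally, the triangle inequality for the $\ell_2$ norm on the index set $\{j:|\beta_j|\le\Theta\}$ gives
\[
\sqrt{\sum_{|\beta_j|\le\Theta}|b_j|^2}\le \tfrac{1}{2}\sqrt{\sum_{|\beta_j|\le\Theta}\bigl|\bra{\lambda_j}(I+R)\ket{w}\bigr|^2}+\tfrac{\Theta}{2}\sqrt{\sum_{|\beta_j|\le\Theta}|a_j|^2}.
\]
Extending each sum to all $j$ (which only increases it) and applying Parseval, the first term is bounded by $\tfrac{1}{2}\|(I+R)\ket{w}\|\le\varepsilon/2$ and the second by $\tfrac{\Theta}{2}\|\ket{w}\|$, giving the claimed bound.

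The only delicate point is getting the identity $R\ket{\lambda_j}=e^{i\beta_j}(2\Pi-I)\ket{\lambda_j}$ right; everything else is bookkeeping. There is no need to invoke Jordan's lemma or assume $R$ is a reflection, so this argument simultaneously generalizes the standard effective spectral gap lemma (\cref{lemma:approx_gap_standard}, recovered when $\varepsilon=0$) and allows arbitrary unitary $R$.
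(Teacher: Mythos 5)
Your argument is a genuinely different route from the paper's, and in outline it works. The paper follows the approach of Lee et al.\ \cite{leeQuantumQueryComplexity2011}: it sets $\ket{v}=P_\Theta(U)\Pi\ket{w}$, $\ket{v'}=R\ket{v}$, $\ket{v''}=U\ket{v}$, bounds $\|\ket{v}-\ket{v''}\|$ by $\Theta\|\ket{v}\|$, and then combines the hypothesis (via Cauchy--Schwarz applied to $\bra{v}(I+R)\ket{w}$) with the identities $\Pi\ket{v'}=\Pi\ket{v''}$ and $\bar{\Pi}\ket{v'}=-\bar{\Pi}\ket{v''}$ to bound $\braket{v}{v}$ and divide through by $\|\ket{v}\|$. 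You instead diagonalize $U$ and solve for the coefficients of $\Pi\ket{w}$ on each low-phase eigenvector, then sum with Minkowski's inequality. Your version is more transparent about where the two error terms come from ($\varepsilon/2$ from the component of $(I+R)\ket{w}$ along each $\ket{\lambda_j}$, and $\Theta\|\ket{w}\|/2$ from $|1-e^{i\beta_j}|\leq\Theta$), and like the paper's proof it needs neither Jordan's lemma nor the assumption that $R$ is a reflection.

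There is one slip you must repair: the identity $\bra{\lambda_j}R\ket{w}=e^{i\beta_j}\bra{\lambda_j}(2\Pi-I)\ket{w}$ is false as written. From $R\ket{\lambda_j}=e^{i\beta_j}(2\Pi-I)\ket{\lambda_j}$, taking adjoints (using that $2\Pi-I$ is Hermitian) gives $\bra{\lambda_j}R^\dagger=e^{-i\beta_j}\bra{\lambda_j}(2\Pi-I)$, hence $\bra{\lambda_j}R^\dagger\ket{w}=e^{-i\beta_j}(2b_j-a_j)$; your version has the wrong operator and the wrong sign in the phase. (For a concrete failure, take $\Pi=\proj{0}$, $R=X$ on one qubit and $\ket{w}=\ket{0}$: the eigenvector $(\ket{0}+i\ket{1})/\sqrt{2}$ of $U$ has $\beta=\pi/2$, and your formula predicts $i/\sqrt{2}$ for $\bra{\lambda}R\ket{w}$ whereas the true value is $-i/\sqrt{2}$.) The fix is routine: run the argument with $R^\dagger$ throughout, obtaining
\begin{equation*}
|b_j|\leq\tfrac{1}{2}\bigl|\bra{\lambda_j}(I+R^\dagger)\ket{w}\bigr|+\tfrac{\Theta}{2}|a_j|\qquad\text{for }|\beta_j|\leq\Theta,
\end{equation*}
since $|1-e^{-i\beta_j}|=|1-e^{i\beta_j}|\leq|\beta_j|$, and then note that $\|(I+R^\dagger)\ket{w}\|=\|R^\dagger(R+I)\ket{w}\|=\|(I+R)\ket{w}\|\leq\varepsilon$ because $R^\dagger$ is an isometry. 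With that substitution the Minkowski and Parseval steps go through unchanged and the claimed bound follows.
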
 

\cref{lemma:approx_gap} generalizes the standard 
approximate spectral gap lemma (\cref{lemma:approx_gap_standard}), in which $R$ is a reflection and $\varepsilon=0$.
In particular, \cref{lemma:approx_gap} shows that when $R$ is
not a reflection and when $\ket{w}$ is not exactly given a phase of $-1$ by $R$,
a variant of the approximate spectral gap lemma still holds.\footnote{We note \cref{lemma:approx_gap} is similar to 
Lemma 3.4 in \cite{jeffery2022multidimensional}, except we allow $R$ to not be a 
reflection.}
The proof of \cref{lemma:approx_gap} closely follows the proof approach of \cite[Lemma 4.2]{leeQuantumQueryComplexity2011},
which does not use Jordan's Lemma, but instead directly uses a series of triangle inequalities
and observations of preserved subspaces to obtain the result.

\begin{proof}[Proof of \cref{thm:inexact_algorithm}] 
We first consider the case of $x$ such that $f(x)=1$. By \cref{lem:PhaseEst}, the probability that we get an outcome
of $0$ when we perform phase estimation on the unitary $U$ with initial state $\ket{\hat{0}}$ with precision $\Theta$ and accuracy $\delta$ is at least $\|P_{\Theta/2}(U)\ket{\hat{0}}\|^2(1-\delta)-\delta$. Now
\begin{align}
\|P_{\Theta/2}(U)\ket{\hat{0}}\|&= \|P_{\Theta/2}(U) \sqrt{\nu_x}\ket{\psi_x} - P_{\Theta/2}(U) \ket{\eta_x} \| \nonumber\\
  &\geq \sqrt{\nu_x} \| P_{\Theta/2}(U) \ket{\psi_x} \| - \| P_{\Theta/2}(U) \ket{\eta_x} \| 
  \tag{reverse triangle inequality} \\
  &\geq \sqrt{\nu_x \left(1-\frac{5\varepsilon_\psi^2}{\Theta^2}\right)} - \|\ket{\eta_x} \|, 
\end{align}
where the first term in the final line combines \cref{lem:preserve} and the assumption that $\|(I-U)\ket{\psi_x}\|\leq \varepsilon_\psi$, so 
$\|(I-U)\ket{\psi_x}\|^2\leq \varepsilon_\psi^2$, and the second term uses the fact that projectors can only decrease the $\ell_2$ norm
of a vector. Thus, using that $\|\ket{\eta_x}\|=\sqrt{\nu_x\|\ket{\psi_x}\|^2-1}$, we have
\begin{align}
\|P_{\Theta/2}(U)\ket{\hat{0}}\|^2(1-\delta)-\delta\geq \left(\sqrt{\nu_x (1-\frac{5\varepsilon_{\psi}^2}{\Theta^2})} - \sqrt{\nu_x\|\ket{\psi_x}\|^2-1} \right)^2(1-\delta)-\delta.
\end{align}

When $f(x)=0$, by \cref{lem:PhaseEst}, the probability that we get an outcome
of $0$ when we perform phase estimation on $U$ with initial state $\ket{\hat{0}}$ with precision $\Theta$ and accuracy $\delta$ is at most
\begin{equation}\label{eq:thmx01}
\|P_\Theta(U)\ket{\hat{0}}\|^2+\delta=\mu_x\|P_\Theta(U)\Pi_x\ket{\phi_x}\|^2+\delta,
\end{equation}
since by assumption, $\Pi_x\ket{\phi_x}=1/\sqrt{\mu_x}\ket{\hat{0}}$.
Then from \cref{lemma:approx_gap} and our assumption that $\|(I+R)\ket{\phi_x}\|\leq \varepsilon_\phi$, we have
\begin{equation}\label{eq:thmx02}
\|P_\Theta(U)\Pi_x\ket{\phi_x}\|\leq \varepsilon_\phi/2+\Theta/2\|\ket{\phi_x}\|.
\end{equation}
Combining \cref{eq:thmx01} and \cref{eq:thmx02} gives us a probability of outcome $0$ of at most
\begin{equation}
\mu_x\left(\varepsilon_\phi/2+\Theta/2\|\ket{\phi_x}\|\right)^2+\delta.
\end{equation}

Finally the query complexity and space complexity come from the requirements
of phase estimation \cref{lem:PhaseEst}, and that $2\Pi_x-I$ 
can be implemented with two uses of the oracle.
\end{proof}

\section{Compressing the Dual Adversary Algorithm}\label{sec:JL}

In this section, we consider how and when it is possible to reduce
the space complexity of the quantum algorithm built from
the general adversary dual.
In particular, our goal is to take an
$f$-deciding vector set, reduce its dimension and hence create an algorithm
which requires fewer qubits to implement.

Our first result, \cref{thm:exact_compress}, is a simple compression scheme
that shows that an $f$-deciding vector set on an $n$-bit function with
maximum rank $\kap'$ and size $A$ can be compressed to an $f$-deciding vector
set with dimension $\kap'$ and size at most $A$ (see \cref
{def:decidingVecSet} for terminology). The number of qubits required by the
resulting algorithm is then $O(\log (n \kap'))$ and the query complexity is $O
(A)$, by \cref{thm:standard} and using that $A=O(n)$. Notice that $\kap'$ is at most the number of $1$-valued
inputs, but if many of the $f$-deciding vectors are linearly dependent, the
maximum rank could be much smaller.

We note that an $f$-deciding vector set is also an $\neg f$-deciding
vector set by \cref{def:decidingVecSet}, where $\neg f$ is the negation of
$f$. Also, a $\neg f$-deciding vector set can be used to design a bounded
error quantum algorithm for deciding $f$ by negating the output of the
algorithm. Thus for a given deciding vector set, we can minimize the space
used by the algorithm by considering either $\neg f$ or $f$.

\begin{restatable}{theorem}{exactCompression}\label{thm:exact_compress}
Given an $f$-deciding vector set with maximum rank $\kap'$ and size $A$,
 we can construct an $f$-deciding vector set with dimension $\kap'$ 
 and size at most $A$, resulting in an algorithm that decides $f$ with query complexity
 $O(A)$ and space complexity $O(\log(n\kap'))$.
\end{restatable}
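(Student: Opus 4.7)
The plan is to exhibit an explicit dimension-reducing transformation on the given $f$-deciding vector set $\{\ket{v_{x,j}}\}$, argue that it preserves the deciding-set constraints and does not increase the size, and then invoke \cref{thm:standard} on the compressed set.

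First, for each index $j\in[n]$, let $W_j = \mathrm{span}\{\ket{v_{x,j}} : f(x)=1\}$, so by the definition of maximum rank we have $\dim W_j \leq \kap'$. Let $P_j$ be the orthogonal projector onto $W_j$. I would define a first intermediate set $\{\ket{v'_{x,j}}\}$ by leaving the $1$-valued vectors alone and projecting the $0$-valued vectors:
\begin{equation*}
\ket{v'_{x,j}} = \begin{cases} \ket{v_{x,j}} & \text{if } f(x)=1, \\ P_j\ket{v_{x,j}} & \text{if } f(x)=0.\end{cases}
\end{equation*}
For any $x,y\in X$ with $f(x)=1,\,f(y)=0$ and any $j$ with $x_j\neq y_j$, we have $\ket{v_{x,j}}\in W_j$, so $\braket{v'_{x,j}}{v'_{y,j}} = \bra{v_{x,j}} P_j \ket{v_{y,j}} = \braket{v_{x,j}}{v_{y,j}}$; hence the deciding constraint in \cref{eq:DecideVectorConstraints} is preserved. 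Because projection never increases norm and leaves the $1$-valued vectors untouched, $\max_{x}\sum_j \|\ket{v'_{x,j}}\|^2\leq A$.

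Next I would collapse each $W_j$ into a common $\kap'$-dimensional target space. For each $j$ choose a partial isometry $U_j$ from the ambient space onto $\mathbb{C}^{\kap'}$ such that $U_j$ maps $W_j$ isometrically into $\mathbb{C}^{\kap'}$, and set $\ket{v''_{x,j}} = U_j\ket{v'_{x,j}}$. Since $\ket{v'_{x,j}}\in W_j$ for both $f(x)=1$ (by definition of $W_j$) and $f(x)=0$ (by the preceding projection), we have $\braket{v''_{x,j}}{v''_{y,j}}=\braket{v'_{x,j}}{v'_{y,j}}=\braket{v_{x,j}}{v_{y,j}}$, and norms are preserved under the isometry, so the new set has dimension $\kap'$ and size at most $A$.

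Finally, applying \cref{thm:standard} to $\{\ket{v''_{x,j}}\}$ yields a bounded-error quantum query algorithm for $f$ with query complexity $O(A)$ and space complexity $O(\log(n\kap'))$, which is the claim. There is no real obstacle here: the only point requiring any care is ensuring that the map used for compression is defined consistently across $j$'s so that the $v$-register lives in a single $\kap'$-dimensional Hilbert space. This is handled by applying the $U_j$'s controlled on the index register $\ket{j}$, which is available in the algorithm's state $\ket{i}\ket{v_{x,i}}\ket{x_i}$ without any additional overhead.
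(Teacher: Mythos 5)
Your proof is correct and follows essentially the same route as the paper: the paper's map $S_j=\sum_{i=1}^{\kap_j}\ketbra{i}{\gamma_{ji}}=\Pi_{\kap_j}V_j$ is exactly your projection onto $W_j$ followed by a partial isometry into $\mathbb{C}^{\kap'}$, and both arguments conclude by noting the relevant inner products (between $1$- and $0$-valued vectors) and norms are preserved or decreased before invoking \cref{thm:standard}.
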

\noindent To prove \cref{thm:exact_compress}, we apply a series of unitaries to rotate the vectors of the deciding
vector set into a smaller dimensional space. This is possible because if we
apply the same unitary to two vectors, their inner product is preserved. The full proof appears in \cref{sec:JLproofs}.

The next natural question is whether we can \textit{approximate} 
the solution to the general adversary dual in a lower dimension.
We answer this question by considering a compression of the 
vectors in a deciding vector set
with guarantees from the Johnson-Lindenstrauss lemma, which approximately
preserve the inner products of the vectors.

It turns out that this straightforward idea is not trivial to implement.
The first challenge is that we must preserve the tensor
product structure of our vectors in order to ensure that the query algorithm
can apply queries, so we must be careful about the part of the vectors
that we compress.
The second challenge is that the compression only approximately preserves 
the inner products of the compressed vectors so we need the 
robust dual adversary algorithm described in \cref{thm:inexact_algorithm}.

Formally stated in \cref{thm:jl_inexact}, given an $f$-deciding vector set with maximum rank $\kap'$ and size $A$, we show how to
build a quantum algorithm that succeeds with probability 2/3
and operates in a Hilbert space of dimension $O((\kap'^2 + A^4 \kap') n)$
with quantum query complexity $O(A)$.
Since $A=O(n)$, as discussed below \cref{thm:standard}, the number of qubits needed to run the algorithm is no more than
$O(\log (\kap' n))$.

Thus, this approximate compression using the
JL lemma achieves the same space complexity
as the exact compression of \cref{thm:exact_compress}, 
to within a constant multiplicative factor. 
This may seem surprising that we are not able to do better, 
since we are no longer requiring the constraints
are exactly satisfied. 
However, the compression dimension in the JL lemma has a polynomial dependence on the
allowed error, and 
since the amount of error we can tolerate roughly scales with maximum rank,
we do not get as much compression as one might hope for.

We now describe at a high level how we prove \cref{thm:jl_inexact}.
While an optimal deciding vector set might use complex vectors,
the following lemma, which we prove in \cref{sec:JLproofs},
shows that at a small cost in increased vector dimension, we can restrict to real vectors:
\begin{restatable}{lemma}{realOK}\label{lem:Complex_to_Real}
If there is an $f$-deciding vector set with complex numbers of dimension $m$ and size $A$, there exists an $f$-deciding vector set with only real numbers of dimension $2m$ and size $A$.
\end{restatable}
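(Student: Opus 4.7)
The plan is to use the standard realification trick: embed each complex vector in $\mathbb{C}^m$ as a real vector in $\mathbb{R}^{2m}$ by stacking its real and imaginary parts. Given an $f$-deciding vector set $\{\ket{v_{x,j}}\in\mathbb{C}^m\}$ of size $A$, I would define
\begin{equation}
\ket{v'_{x,j}}=\begin{pmatrix}\mathrm{Re}(\ket{v_{x,j}})\\ \mathrm{Im}(\ket{v_{x,j}})\end{pmatrix}\in\mathbb{R}^{2m}\subseteq\mathbb{C}^{2m},
\end{equation}
so each new vector has only real entries and lives in a space of dimension $2m$.

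The main calculation is to verify that the deciding-vector-set constraints in \cref{eq:DecideVectorConstraints} still hold. For complex $\ket{u},\ket{v}\in\mathbb{C}^m$, a direct expansion gives $\mathrm{Re}(\braket{u}{v})=\mathrm{Re}(\ket{u})^T\mathrm{Re}(\ket{v})+\mathrm{Im}(\ket{u})^T\mathrm{Im}(\ket{v})$, which is exactly $\braket{u'}{v'}$ with $\ket{u'},\ket{v'}$ defined as above. Hence, for any $x,y\in X$ with $f(x)\neq f(y)$, taking real parts of the original constraint and using that $1\in\mathbb{R}$ yields
\begin{equation}
\sum_{j:x_j\neq y_j}\braket{v'_{x,j}}{v'_{y,j}}=\sum_{j:x_j\neq y_j}\mathrm{Re}(\braket{v_{x,j}}{v_{y,j}})=\mathrm{Re}(1)=1,
\end{equation}
so the new set satisfies \cref{eq:DecideVectorConstraints}.

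Finally, I would check that the size is unchanged: since $\|\ket{v'_{x,j}}\|^2=\|\mathrm{Re}(\ket{v_{x,j}})\|^2+\|\mathrm{Im}(\ket{v_{x,j}})\|^2=\|\ket{v_{x,j}}\|^2$, we have $\max_{x\in X}\sum_j\|\ket{v'_{x,j}}\|^2=\max_{x\in X}\sum_j\|\ket{v_{x,j}}\|^2=A$. There is no real obstacle here; the only thing to be careful about is that the constraint value $1$ is real, so the ``lost'' imaginary part of each term cancels out in the sum and nothing is distorted by passing to real parts.
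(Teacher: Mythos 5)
Your proposal is correct and matches the paper's proof essentially verbatim: both stack the real and imaginary parts into a real vector in $\mathbb{R}^{2m}$, use that $\braket{v'_{x,j}}{v'_{y,j}}=\mathrm{Re}(\braket{v_{x,j}}{v_{y,j}})$, and conclude from the constraint value $1$ being real. Your additional check that the size is preserved is a welcome bit of explicitness that the paper leaves implicit.
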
 
The proof of \cref{lem:Complex_to_Real} proceeds by creating a new vector set from the
original where each new real vector consists of the real part of the
original complex vector stacked on top of the imaginary part of the original
complex vector.

While the standard Johnson-Lindenstrauss lemma guarantees that there is a
compression matrix that approximately preserves the $l_2$-norm difference of
any two vectors in a set, we use the following corollary, 
 which shows that the compression 
approximately preserves inner
products in addition to distances, which we prove in \cref{sec:JLproofs}. Our
proof of \cref{lem:JLcorr} is similar to a similar result in 
\cite{larsenOptimalityJohnsonLindenstraussLemma2017}, except that we do not assume
the vectors have norm $1$.

\begin{restatable}{corollary}{JLcorr}\label{lem:JLcorr}
Given $\varepsilon>0$, a set of finite vectors $V \subset \mathbb{R}^d$, and a number $N>8\ln(|V|)/\varepsilon^2$, 
there is a compression matrix $S \in \mathbb{R}^{N \times d}$ such that for $\ket{v},\ket{u}\in V$,
\begin{equation}
(S \ket{u})^\dagger (S \ket{v})=\braket{u}{v}\pm 2\varepsilon
(\| \ket{u}\|^2+\| \ket{v}\|^2).
\end{equation}
\end{restatable}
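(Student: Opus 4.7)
The plan is to reduce this inner-product preservation claim to the standard Johnson--Lindenstrauss distance preservation guarantee via the polarization identity. First I would invoke the standard JL lemma, applied not to $V$ itself but to the augmented set $V \cup \{0\} \subset \mathbb{R}^d$: for $N$ slightly larger than $8\ln(|V|+1)/\varepsilon^2$, there exists a (random Gaussian, say) matrix $S \in \mathbb{R}^{N \times d}$ such that simultaneously, for every pair $\ket{u},\ket{v} \in V \cup \{0\}$,
\begin{equation}
(1-\varepsilon)\,\| \ket{u}-\ket{v} \|^2 \;\leq\; \| S\ket{u}-S\ket{v} \|^2 \;\leq\; (1+\varepsilon)\,\| \ket{u}-\ket{v} \|^2.
\end{equation}
Because $\ket{0}$ is in the augmented set, specializing $\ket{v}=\ket{0}$ also yields approximate norm preservation $\bigl|\, \|S\ket{u}\|^2 - \|\ket{u}\|^2\, \bigr| \leq \varepsilon \|\ket{u}\|^2$ for every $\ket{u} \in V$.

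Next I would apply the polarization identity in both the original and compressed spaces:
\begin{equation}
2\braket{u}{v} \;=\; \|\ket{u}\|^2 + \|\ket{v}\|^2 - \|\ket{u}-\ket{v}\|^2, \qquad 2\,(S\ket{u})^\dagger (S\ket{v}) \;=\; \|S\ket{u}\|^2 + \|S\ket{v}\|^2 - \|S\ket{u}-S\ket{v}\|^2.
\end{equation}
Subtracting these two identities and invoking the triangle inequality expresses the inner-product error as a sum of three terms, each controlled by the JL guarantee: two norm-distortion terms bounded by $\varepsilon\|\ket{u}\|^2$ and $\varepsilon\|\ket{v}\|^2$, and one distance-distortion term bounded by $\varepsilon\|\ket{u}-\ket{v}\|^2$. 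For the third term I would use the elementary bound $\|\ket{u}-\ket{v}\|^2 \leq 2\|\ket{u}\|^2 + 2\|\ket{v}\|^2$ (which follows from expanding and applying $2|\braket{u}{v}|\leq\|\ket{u}\|^2+\|\ket{v}\|^2$), converting everything into the combination $\|\ket{u}\|^2+\|\ket{v}\|^2$ demanded by the statement.

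Putting the pieces together gives an absolute error of at most $\tfrac{3\varepsilon}{2}\bigl(\|\ket{u}\|^2 + \|\ket{v}\|^2\bigr)$, which is comfortably within the claimed $2\varepsilon$ slack; the slack also absorbs the $\ln(|V|+1)$ vs.\ $\ln|V|$ gap in the required target dimension $N$. The main obstacle is essentially bookkeeping: one has to apply JL to the \emph{augmented} set (not just $V$), and one has to handle the un-normalized case carefully, since Larsen--Nelson typically work with unit vectors where $\|\ket{u}\|^2 + \|\ket{v}\|^2 = 2$ and the additive error simplifies. Here we cannot normalize away the scale, so the factor $\|\ket{u}\|^2+\|\ket{v}\|^2$ must be tracked explicitly through each application of the triangle inequality — and that is exactly why the polarization-plus-norm-bound approach is the right route rather than trying to apply JL only to distances.
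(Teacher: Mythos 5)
Your proposal is correct and follows essentially the same route as the paper's proof: augment $V$ with the origin so that the standard JL distance guarantee also yields norm preservation, then convert to inner products via polarization and bound the cross term using $2|\braket{u}{v}|\leq\|\ket{u}\|^2+\|\ket{v}\|^2$, arriving at the same $\tfrac{3\varepsilon}{2}(\|\ket{u}\|^2+\|\ket{v}\|^2)$ error, within the claimed $2\varepsilon$ slack.
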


The Johnson-Lindenstrauss lemma guarantees the existence of such a compression
matrix $S$, and it can be found probabilistically by sampling random
projections. It requires $O(|X|n)$ time to find such a satisfying projection
via random sampling \cite{dasgupta2003elementary}. For our purposes, this
contributes to classical preprocessing time and space resources, and not
towards the quantum query complexity or quantum space use of the quantum algorithm itself. In particular, this
sampling requires no queries, so does not contribute to the query complexity.

We now describe how we use Johnson-Lindenstrauss to compress our deciding
vector set. Let $\{\ket{v_{x,i}}\}$ be a real $f$-deciding vector set with
size $A$, dimension $m$, and maximum rank $\kap'$. Define $\{\ket{\psi_x}\}_
{x:f(x)=1}$ and $\{\ket{\phi_x}\}_{x:f(x)=0}$ as in \cref
{eq:psidef,eq:phidef} in \cref{sec:background}.
Let $\kap$ be the rank of $\{\ket{\psi_x}\}_{x:f(x)=1}$.
(We show in \cref{lem:kappa_is_r} that $\kap' \leq \kap \leq 2n \kap'$.)

To compress our vectors, it suffices to compress their orthonormal basis. Let $\{\ket{\zeta_j}\}_{j\in[\kap]}$ be an orthonormal basis for the space spanned by $\{\ket{\psi_x}\}_{x:f(x)=1}$. 
Then there are (non-unique) real numbers $\{\alpha_{j,x}\in \mathbb{R}\}_{j\in[n],x\in f^{-1}(1)}$ such that
\begin{align}
\ket{\zeta_j} &= \sum_{x:f(x)=1} \alpha_{j,x} \ket{\psi_x} \label{eq:zetaj}.
\end{align} 
We will  approximately preserve the structure of the vectors $\ket{\zeta_j}$ in the compression.
Thus we define their components
\begin{align}
\ket{\SC_{j,i,b}}=\sum_{\substack{x:f(x)=1 \\ x_i = b}} \frac{\alpha_{j,x}}{\sqrt\nu_x}\ket{v_{x,i}},
\quad \forall j\in [\kap], i \in [n], b \in \{0,1\}
.\label{eq:zetajib}
\end{align}

We will use the random compression matrix $S$ from \cref{lem:JLcorr}
to compress the following set of vectors to error $\varepsilon$, as in 
\cref{lem:JLcorr}, (and the compression dimension $N$ will be chosen later to achieve the desired value of $\varepsilon$):
\begin{align}\label{eq:compression_vectors}
\left\{\ket{\SC_{j,i,b}}\right\}_{j\in[\kap],i\in [n],b\in\{0,1\}}
\bigcup \left\{\ket{v_{y,i}}\right\}_{y:f(y)=0,i\in [n]}.
\end{align}

We use these compressed vectors to define
\begin{align}\label{eq:compressedDefs}
\forall x:f(x)=1, \quad \ket{\psi_x'} &=  \left[\ketbra{\hat{0}}{\hat{0}} + (I \otimes S \otimes I)\right]\ket{{\psi_x}},\nonumber\\
\forall x:f(x)=0, \quad \ket{\phi_x'} &=  \left[\ketbra{\hat{0}}{\hat{0}} + (I \otimes S \otimes I)\right]\ket{{\phi_x}},\nonumber\\
\forall j\in[\kap],\qquad \ket{\zeta_j'} &= \left[\ketbra{\hat{0}}{\hat{0}} + (I \otimes S \otimes I)\right]\ket{\zeta_j}.
\end{align}
As one would expect, the primed, compressed versions of these vectors 
have approximately the properties of the uncompressed version, as follows:
\begin{restatable}{lemma}{compressedRelationsUpdate}
\label{lem:compressed_relations_update}
For $\{\ket{\zeta_j'}\}_{j\in[\kap]},$ $\{\ket{\psi_x'}\}_{x:f(x)=1}$ and $\{\ket{\phi_x'}\}_{x:f(x)=0}$ as described in \cref{eq:compressedDefs}, these vectors have the following properties
\begin{enumerate}
\item  $\forall j,l\in[\kap], \braket{\zeta_j'}{\zeta_l'}\in \delta_{j,l}\pm 4\varepsilon$ \label{part:zeta_preserve}
\item $\forall j\in [\kap],x\in f^{-1}(0),|\braket{\zeta_j'}{\phi_x'}|\leq2\varepsilon (\C+1) A$. \label{part:zeta_phi}
\item $\forall x:f(x)=1, \left|\|\ket{\psi_x'}\|^2-1\right|\leq 4\varepsilon \kap$ and
$\forall x:f(x)=0, \left|\|\ket{\phi_x'}\|-1\right|\leq 3\varepsilon$.\label{part:psiphiNormBounds}
\end{enumerate} 
\end{restatable}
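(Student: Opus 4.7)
The plan is to leverage the linearity of the compression map and the inner-product guarantee of \cref{lem:JLcorr}, using the size/normalization constraints of the deciding vector set to control the resulting error terms.

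First I would unpack the vectors into their natural decomposition. Write $\ket{\zeta_j} = c_j \ket{\hat 0} + \frac{1}{\sqrt{\C A}}\sum_{i,b}\ket{i}\ket{\SC_{j,i,b}}\ket{b}$, where $c_j = \sum_{x:f(x)=1}\alpha_{j,x}/\sqrt{\nu_x}$, and similarly expand $\ket{\psi_x}$ and $\ket{\phi_x}$. Because the map $\ketbra{\hat 0}{\hat 0} + (I\otimes S\otimes I)$ is linear and leaves $\ket{\hat 0}$ invariant, the primed vectors have the same decomposition with the inner $\ket{\SC_{j,i,b}}$ or $\ket{v_{x,i}}$ replaced by their $S$-images. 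The two key a priori bounds I will use repeatedly are $\sum_i \|\ket{v_{x,i}}\|^2 \leq A$ (from the size constraint, plus $\mu_x, \nu_x \geq 1$) and $\sum_{i,b}\|\ket{\SC_{j,i,b}}\|^2 \leq \C A$, which follows from $\|\ket{\zeta_j}\|=1$ by evaluating the norm in the decomposition above.

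For Part \ref{part:zeta_preserve}, I would expand $\braket{\zeta_j'}{\zeta_l'} = c_j c_l + \frac{1}{\C A}\sum_{i,b}(S\ket{\SC_{j,i,b}})^{\dagger}(S\ket{\SC_{l,i,b}})$, apply \cref{lem:JLcorr} coordinate-wise, and collect the error as $\frac{2\varepsilon}{\C A}\sum_{i,b}(\|\SC_{j,i,b}\|^2 + \|\SC_{l,i,b}\|^2) \leq 4\varepsilon$; the uncompressed part equals $\braket{\zeta_j}{\zeta_l} = \delta_{j,l}$. Part \ref{part:zeta_phi} follows the same template: the uncompressed inner product $\braket{\zeta_j}{\phi_x}$ vanishes (any $\ket{\psi_y}$ with $f(y)=1$ is orthogonal to $\ket{\phi_x}$ by the deciding-vector constraint \eqref{eq:DecideVectorConstraints}), and the JL-induced error is bounded by $\frac{2\varepsilon}{\sqrt{\mu_x}}\sum_i(\|\SC_{j,i,\bar{x}_i}\|^2 + \|v_{x,i}\|^2) \leq \frac{2\varepsilon}{\sqrt{\mu_x}}(\C A + A) \leq 2\varepsilon(\C+1)A$.

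For Part \ref{part:psiphiNormBounds}, rather than redoing the calculation for $\ket{\psi_x'}$ directly, it is cleaner to write $\ket{\psi_x} = \sum_{j\in[\kap]} a_{j,x}\ket{\zeta_j}$ with $\sum_j a_{j,x}^2 = \|\ket{\psi_x}\|^2 = 1$; by linearity $\ket{\psi_x'} = \sum_j a_{j,x}\ket{\zeta_j'}$, so Part \ref{part:zeta_preserve} gives $|\|\ket{\psi_x'}\|^2 - 1| \leq 4\varepsilon\sum_{j,l}|a_{j,x} a_{l,x}| = 4\varepsilon(\sum_j|a_{j,x}|)^2 \leq 4\varepsilon\kap\sum_j a_{j,x}^2 = 4\varepsilon\kap$, where the last step is Cauchy--Schwarz (this is the only place the $\kap$ factor enters). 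For the $\phi_x$ norm I apply \cref{lem:JLcorr} with $\ket{u}=\ket{v}=\ket{v_{x,i}}$ to get $\|S\ket{v_{x,i}}\|^2 = \|v_{x,i}\|^2(1\pm 4\varepsilon)$, which plugged into the expansion of $\|\ket{\phi_x'}\|^2$ yields $|\|\ket{\phi_x'}\|^2 - 1| \leq \frac{4\varepsilon\C A}{\mu_x}\sum_i\|v_{x,i}\|^2 = 4\varepsilon\,\frac{\mu_x-1}{\mu_x} \leq 4\varepsilon$, and taking a square root gives the claimed $3\varepsilon$ bound on $|\|\ket{\phi_x'}\|-1|$ (with slack).

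I do not expect any step to be a serious obstacle; the main thing to watch for is the Cauchy--Schwarz step in Part \ref{part:psiphiNormBounds}, which is what forces the bound to scale with $\kap$ rather than with the raw magnitude of the expansion coefficients $a_{j,x}$, and the need to keep the $\ket{\hat 0}$ component untouched by the compression (otherwise the normalization bookkeeping breaks).
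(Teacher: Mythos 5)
Your proposal is correct and follows essentially the same route as the paper's proof: the same decomposition of $\ket{\zeta_j'}$ and $\ket{\phi_x'}$ into the $\ket{\SC_{j,i,b}}$ and $\ket{v_{x,i}}$ components, the same size bounds $\sum_{i,b}\|\ket{\SC_{j,i,b}}\|^2\leq \C A$ and $\sum_i\|\ket{v_{x,i}}\|^2\leq A$ fed into \cref{lem:JLcorr}, orthogonality $\braket{\zeta_j}{\phi_x}=0$ for \cref{part:zeta_phi}, and the expansion of $\ket{\psi_x}$ in the $\ket{\zeta_j}$ basis plus Cauchy--Schwarz for the $4\varepsilon\kap$ bound. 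The only small point: your $4\varepsilon$ bound on $\left|\|\ket{\phi_x'}\|^2-1\right|$ (which is what \cref{lem:JLcorr} as stated gives with $u=v$) yields $\left|\|\ket{\phi_x'}\|-1\right|\leq 3\varepsilon$ only after noting $\|\ket{\phi_x'}\|\geq 1/3$ (true once $\varepsilon\leq 2/9$, as in every application), whereas the paper uses the sharper norm-preservation guarantee inside the proof of \cref{lem:JLcorr} to get $3\varepsilon$ already at the norm-squared level and then divides by $\|\ket{\phi_x'}\|+1\geq 1$.
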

\noindent The proof of \cref{lem:compressed_relations_update} uses \cref{lem:JLcorr} in fairly straightforward ways.

Let $\Delta'$ be the orthogonal projector onto the space spanned by $\{\ket{\psi_x'}\}_{x:f(x)=1}$ and the reflection $R$ be $2\Delta'-I$.
By definition, observe that 
$R\ket{\psi_x'}=\ket{\psi_x'}$, so \cref{thm:inexact_algorithm} \cref{part:psi_cond} is satisfied with $\varepsilon_\psi=0$.
Additionally, because $\ket{\phi_x'}$ has the structure
\begin{equation}
\ket{\phi_x'} \propto \ket{\hat{0}}+\sum_{i\in[n]}\ket{i}\ket{v_{x,i}'}\ket{\bar{x}_i}
\end{equation}
we have $\Pi_x\ket{\phi_x'}\propto\ket{\hat{0}}$, as required by \cref{thm:inexact_algorithm} \cref{part:phi_cond}. All that is left is to show that
$\|(I+R)\ket{\phi_x'}\|\leq \varepsilon_\phi$.

\begin{restatable}{lemma}{deltaPhi}\label{lem:deltaPhiAnalysis}
Consider $\varepsilon$ so that $\varepsilon \kap < 1/12$.
For $R$ as defined below \cref{lem:compressed_relations_update} and $\ket{\phi_x'}$ defined using \cref{eq:compressedDefs,eq:phidef}, we have
$\|(I+R)\ket{\phi_x'}\|\leq 8\varepsilon (\C+1) A \sqrt{\kap}$.
\end{restatable}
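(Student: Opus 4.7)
The plan is to first use the structure of $R$ to reduce the claim to a projector-norm bound: since $R = 2\Delta' - I$, we have $I + R = 2\Delta'$, so the claim is equivalent to showing $\|\Delta'\ket{\phi_x'}\| \leq 4\varepsilon(\C+1)A\sqrt{\kap}$. The natural handle on $\Delta'$ is the approximately-orthonormal family $\{\ket{\zeta_j'}\}_{j\in[\kap]}$ furnished by \cref{lem:compressed_relations_update}, so my first step would be to check that $\Delta'$ is in fact the orthogonal projector onto $\textrm{span}\{\ket{\zeta_j'}\}_{j\in[\kap]}$. This follows from linearity of the compression operator $E := \ketbra{\hat 0}{\hat 0} + (I\otimes S\otimes I)$: applying $E$ to both sides of \cref{eq:zetaj} shows that each $\ket{\zeta_j'}$ lies in $\textrm{span}\{\ket{\psi_x'}\}_{x:f(x)=1}$, and expanding $\ket{\psi_x}$ in the orthonormal basis $\{\ket{\zeta_j}\}$ and applying $E$ gives the reverse inclusion.

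Next I would form the $(1 + n N 2) \times \kap$ matrix $V$ whose columns are $\ket{\zeta_j'}$. Provided $G := V^\dagger V$ is invertible, $\Delta' = V G^{-1} V^\dagger$, which yields $\|\Delta'\ket{\phi_x'}\|^2 \leq \|G^{-1}\|_{\mathrm{op}}\,\|V^\dagger\ket{\phi_x'}\|^2$. The two factors on the right are exactly what \cref{part:zeta_preserve} and \cref{part:zeta_phi} of \cref{lem:compressed_relations_update} control. From \cref{part:zeta_preserve}, $|G_{jl} - \delta_{jl}| \leq 4\varepsilon$, so the crude entrywise-to-operator-norm bound for a $\kap\times\kap$ matrix gives $\|G - I\|_{\mathrm{op}} \leq 4\varepsilon\kap$. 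The hypothesis $\varepsilon\kap < 1/12$ then yields $\|G - I\|_{\mathrm{op}} < 1/3$, so $G$ is invertible (which also justifies the formula for $\Delta'$ above) and a Neumann-series estimate gives $\|G^{-1}\|_{\mathrm{op}} \leq 3/2$. From \cref{part:zeta_phi}, $|\braket{\zeta_j'}{\phi_x'}| \leq 2\varepsilon(\C+1)A$ for each $j$, so $\|V^\dagger\ket{\phi_x'}\|^2 \leq \kap\bigl(2\varepsilon(\C+1)A\bigr)^2$.

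Combining,
\[
\|(I+R)\ket{\phi_x'}\| \;=\; 2\|\Delta'\ket{\phi_x'}\| \;\leq\; 2\sqrt{\tfrac{3}{2}\cdot 4\varepsilon^2(\C+1)^2 A^2 \kap} \;=\; 2\sqrt{6}\,\varepsilon(\C+1)A\sqrt{\kap},
\]
which is comfortably inside the stated bound of $8\varepsilon(\C+1)A\sqrt{\kap}$. The step I expect to be the main obstacle — or at least the place where one could hope for improvement — is the entrywise-to-operator-norm passage for $G - I$: this is where the factor of $\kap$ enters, and it is precisely what forces the hypothesis $\varepsilon\kap < 1/12$. A finer argument exploiting the random structure of the Johnson–Lindenstrauss matrix $S$ might shrink this factor, but since the target bound already scales as $\sqrt{\kap}$ and we have constant slack, the crude estimate suffices for the statement as given.
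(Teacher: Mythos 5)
Your proposal is correct, and it reaches the stated bound (indeed a slightly better constant, $2\sqrt{6}$ in place of $8$) by a genuinely different route from the paper. The paper's proof invokes an auxiliary Gram--Schmidt lemma (\cref{lemma:ortho_basis}, proved by a somewhat laborious strong induction) to build an explicit orthonormal basis $\{\ket{\ortho_j}\}$ for $\operatorname{range}(\Delta')$ with coefficients close to $\delta_{j,i}$, writes $\Delta'=\sum_j\proj{\ortho_j}$, and then bounds $\|\Delta'\ket{\phi_x'}\|$ term by term using \cref{lem:compressed_relations_update} \cref{part:zeta_phi}. You instead use the closed-form projector $\Delta'=VG^{-1}V^\dagger$ with $G=V^\dagger V$ the Gram matrix of $\{\ket{\zeta_j'}\}$, so that $\|\Delta'\ket{\phi_x'}\|^2=w^\dagger G^{-1}w\leq\|G^{-1}\|_{\mathrm{op}}\|w\|^2$ with $w=V^\dagger\ket{\phi_x'}$; the two inputs from \cref{lem:compressed_relations_update} then feed in cleanly, with \cref{part:zeta_preserve} giving $\|G-I\|_{\mathrm{op}}\leq 4\varepsilon\kap<1/3$ (hence invertibility and $\|G^{-1}\|_{\mathrm{op}}\leq 3/2$ by Neumann series) and \cref{part:zeta_phi} giving $\|w\|^2\leq\kap(2\varepsilon(\C+1)A)^2$. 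Both arguments consume exactly the same two estimates and the same hypothesis $\varepsilon\kap<1/12$, but yours dispenses with the inductive orthonormalization lemma entirely, which is a real simplification. You also explicitly verify that $\operatorname{span}\{\ket{\psi_x'}\}=\operatorname{span}\{\ket{\zeta_j'}\}$ via linearity of the compression map --- a point the paper uses implicitly when it writes $\Delta'$ in terms of the $\ket{\ortho_j}$ --- and that check is needed in either version. The one trade-off is that the paper's \cref{lemma:ortho_basis} is reusable as a standalone tool, whereas your Gram-matrix argument is self-contained to this lemma; for the purpose of proving \cref{lem:deltaPhiAnalysis} itself, your route is at least as good.
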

We prove \cref{lem:deltaPhiAnalysis} in \cref{sec:JLproofs}.
The main idea is to write $\Delta'$ in terms of the 
vectors $\{\ket{\zeta'}\}$, and then use \cref{lem:compressed_relations_update} \cref{part:zeta_phi}.
Along the way, we show how to use Gram-Schmidt to build
an orthonormal basis from $\{\ket{\zeta_j'}\}$, which is already almost 
orthonormal by \cref{lem:compressed_relations_update} \cref{part:zeta_preserve}
(see \cref{lemma:ortho_basis}, stated and proved in \cref{sec:JLproofs}).

With \cref{lem:deltaPhiAnalysis} in hand,
the conditions of \cref{thm:inexact_algorithm} are satisfied and 
we apply it to our compressed vectors.

\begin{theorem}\label{thm:jl_inexact}
  Consider a Boolean function $f:X\to\{0,1\}$ where $X \subseteq \{0,1\}^n$
  and an $f$-deciding vector set with maximum rank $\kap$ and size $A$.
  Using Johnson-Lindenstrauss compression,
  we can compress the $f$-deciding vector set
  to produce  a quantum algorithm that correctly evaluates $f(x)$ with probability $2/3$
  for every input $x\in X$ with $O(A)$ quantum query complexity.
  The algorithm uses $O(\log \kappa n)$ qubits.
\end{theorem}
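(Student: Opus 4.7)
The plan is to feed the JL-compressed vectors $\ket{\psi_x'},\ket{\phi_x'}$ and the reflection $R = 2\Delta' - I$ (constructed above the theorem) into \cref{thm:inexact_algorithm}, then choose the JL error $\varepsilon$ and the phase-estimation precision $\Theta$ so that the acceptance probability exceeds $2/3$ on every input. If the supplied $f$-deciding vector set has complex entries, I would first invoke \cref{lem:Complex_to_Real} to pass to a real vector set at the cost of a constant factor in the dimension and the maximum rank.

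First I would verify the hypotheses of \cref{thm:inexact_algorithm}. Since $R\ket{\psi_x'} = \ket{\psi_x'}$ by construction and $\ket{\psi_x'} \propto \ket{\hat 0} + \frac{1}{\sqrt{\C A}}\sum_i \ket{i}\ket{S v_{x,i}}\ket{x_i}$ lies in the range of $\Pi_x$, we obtain $U\ket{\psi_x'} = \ket{\psi_x'}$, so \cref{part:psi_cond} holds with $\varepsilon_\psi = 0$. For $f(x) = 0$, the analogous tensor structure forces $\ket{\eta_x'} := \ket{\phi_x'} - \frac{1}{\sqrt{\mu_x}}\ket{\hat 0}$ to satisfy $\Pi_x\ket{\eta_x'} = 0$, and \cref{lem:deltaPhiAnalysis} supplies $\|(I+R)\ket{\phi_x'}\| \leq 8\varepsilon(\C+1)A\sqrt{\kappa}$, so \cref{part:phi_cond} is satisfied with $\varepsilon_\phi = 8\varepsilon(\C+1)A\sqrt{\kappa}$.

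Next I would plug these values into the success-probability bounds of \cref{thm:inexact_algorithm}, together with the norm estimates of \cref{lem:compressed_relations_update} \cref{part:psiphiNormBounds} and the scalings $\nu_x \leq 1 + 1/\C$ from \eqref{eq:nuxBound} and $\mu_x \leq 1 + \C A^2$ from \eqref{eq:mu_x_bound}. On $f(x)=1$ inputs the lower bound reduces to $\bigl(\sqrt{\nu_x} - \sqrt{(\nu_x - 1) + 4\varepsilon\kappa\nu_x}\bigr)^2(1-\delta) - \delta$, which stays above $2/3$ for a suitable constant $\C$ and small constant $\delta$ provided $\varepsilon\kappa$ is sufficiently small (say $\varepsilon = O(1/\kappa)$). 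On $f(x)=0$ inputs the rejection probability is at most $\mu_x(\varepsilon_\phi/2 + \Theta(1+3\varepsilon)/2)^2 + \delta = O(\varepsilon^2 A^4 \kappa + A^2\Theta^2) + \delta$, which I would drive below $1/3$ by taking $\Theta = \Theta(1/A)$ and $\varepsilon = \Theta(1/(A^2\sqrt{\kappa}))$. Combining the two regimes, $\varepsilon = \Theta(\min(1/\kappa,\ 1/(A^2\sqrt{\kappa})))$ suffices for both bounds simultaneously.

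Finally I would translate this choice of $\varepsilon$ into the qubit count. The set $V$ from \eqref{eq:compression_vectors} satisfies $|V| \leq 2\kappa n + n|f^{-1}(0)|$, so $\ln|V| = O(n)$, and \cref{lem:JLcorr} allows us to take $N = O(\ln|V|/\varepsilon^2) = O(n(\kappa^2 + A^4 \kappa))$, matching the Hilbert-space dimension advertised above the theorem. Since $A = O(n)$, the qubit count is $\log(1 + 2nN) = O(\log \kappa n)$, and the $O(A)$ query complexity follows directly from \cref{thm:inexact_algorithm} applied with $\Theta = \Theta(1/A)$. The main obstacle is the careful bookkeeping needed to ensure that the constants $\C$, $\delta$, $\Theta$, and $\varepsilon$ can all be chosen simultaneously so that the $f(x)=1$ lower bound and the $f(x)=0$ upper bound leave a success-probability gap of at least $1/3$; no new quantum machinery is required beyond \cref{thm:inexact_algorithm} and the lemmas already established.
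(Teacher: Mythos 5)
Your proposal is correct and follows essentially the same route as the paper's proof: it invokes \cref{thm:inexact_algorithm} with $R=2\Delta'-I$, $\varepsilon_\psi=0$, and $\varepsilon_\phi$ from \cref{lem:deltaPhiAnalysis}, then chooses $\Theta=\Theta(1/A)$ and $\varepsilon=\Theta(\min(1/\kappa,\,1/(A^2\sqrt{\kappa})))$ exactly as the paper does (the paper merely fixes the explicit constants $\C=100$, $\delta=1/25$), and derives the same compression dimension $N=O(n(\kappa^2+A^4\kappa))$ and qubit count $O(\log \kappa n)$.
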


\begin{proof}[Proof of \cref{thm:jl_inexact}]

We set 
\begin{align}\label{eq:set_theta_epsilon}
  \Theta=\frac{1}{4\sqrt{\C} A},
  \quad 
  \varepsilon = \min \left\{ \frac{1}{4 \C \kap}, 
  \frac{1}{32\sqrt{\C}(\C+1)A^2\sqrt{\kap}} \right\},
  \quad
  \delta = \frac{1}{25},
  \quad \text{and} \quad
  c = 100.
\end{align}
With these choices, we bound the failure probability below 1/3 for all inputs $x$. (Note $c$ appears in \cref{eq:psidef,eq:phidef}.)

\textbf{Case $\mathbf{f(x)=1}$:}
By \cref{thm:inexact_algorithm}, for $x$ such that $f(x)=1$, we have that when
$\varepsilon_\psi=0$, the probability of measuring a phase of $0$ is at least
\begin{align}\label{eq:final_success_1_prob}
\left(\sqrt{\nu_x } - \sqrt{\nu_x \|\ket{\psi_x'}\|^2-1} \right)^2(1-\delta)-\delta.
\end{align}

We know that $1 \leq \nu_x \leq 1+1/\C = 1.01$ from \cref{eq:nuxBound}
and $| \| \ket{\psi_x'} \|^2 - 1| \leq 4\varepsilon\kap \leq 1/100$
from \cref{lem:compressed_relations_update} \cref{part:psiphiNormBounds} and 
\cref{eq:set_theta_epsilon} when $\C=100$,
so \cref{eq:final_success_1_prob} is at least
\begin{equation}
\left(\sqrt{1}-\sqrt{1.01\cdot 1.01-1}\right)^2(1-\delta)-\delta\geq 2/3
\end{equation}
where the final inequality is satisfied for $\delta=1/25$.

\textbf{Case $\mathbf{f(x)=0}$:} 
By \cref{thm:inexact_algorithm}, for $x$ such that $f(x)=0$, we have that
the probability of measuring a phase of $0$ is at most
\begin{equation}
\mu_x\left(\varepsilon_\phi+\frac{\Theta}{2}\|\ket{\phi_x'}\|\right)^2+\delta.
\end{equation}
We know that $\mu_x\leq 1+\C A^2$ from \cref{eq:mu_x_bound},
$\varepsilon_\phi\leq 8 \varepsilon(\C+1)A\sqrt{\kap}$ 
from \cref{lem:deltaPhiAnalysis}, and
$\|\ket{\phi_x'}\|\leq 1+3\varepsilon$
from \cref{lem:compressed_relations_update} \cref{part:psiphiNormBounds}.
So the probability of measuring a $0$ phase is at most
\begin{equation}\label{eq:jl_0_failure}
(1+\C A^2)\left(8 \varepsilon(\C+1)A \sqrt{\kap}+\frac{\Theta}{2}(1+3\varepsilon)\right)^2+\delta.
\end{equation}

With $\Theta$, $\varepsilon$, $\delta$, and $\C$ as set in \cref{eq:set_theta_epsilon}, 
continuing from \cref{eq:jl_0_failure}, we have that probability of failure when $f(x)=0$ is at most
\begin{align}
(1+cA^2)\left(\frac{1}{4\sqrt{\C}A} + \frac{1}{4\sqrt{\C}A} \right)^2
  +\delta
  \leq \frac{1}{4\C A^2} + \frac{1}{4} + \delta \leq \frac{1}{3} 
\end{align}
since $A \geq 1$.

To achieve this compression with $\varepsilon$ as desired we look to \cref{lem:JLcorr} to see what compression dimension is achievable. From \cref{eq:compression_vectors}, we see we are compressing at most $3|X|n$
vectors. Thus we require a compression dimension
\begin{equation}
N=O(\log (|X|n)/\varepsilon^2)=O((\kappa^2+A^4\kappa)\log(|X|n)).
\end{equation}
Then since our unitary $U$ acts on $O(\log (nN))$ qubits, by \cref{thm:inexact_algorithm}, the space complexity is
\begin{equation}
O(\log nN+\log A)=O\left(\log\left(An(\kappa^2+A^4\kappa)\log(|X|n)\right)\right)=O(\log(\kappa n)).
\end{equation}
where we've used that $|X|\leq 2^n$ and $A=O(n).$ Also, from \cref{thm:inexact_algorithm}, the query complexity is $O(A)$.
\end{proof}

\begin{corollary}
If there are polynomially many $1$-valued or $0$-valued inputs to a function 
$f:X\rightarrow\{0,1\}$, then there is a query-optimal quantum algorithm that
evaluates $f$, and that uses $O(\log n)$ qubits.
\end{corollary}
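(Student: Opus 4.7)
The plan is to combine the exact compression result, \cref{thm:exact_compress}, with the symmetry between an $f$-deciding vector set and a $\neg f$-deciding vector set. By the tightness of the general adversary bound \cite{reichardtReflectionsQuantumQuery2011,leeQuantumQueryComplexity2011}, there is an $f$-deciding vector set $\{\ket{v_{x,j}}\}$ whose size $A$ matches the bounded-error quantum query complexity $Q(f)$ up to constants. This is the starting input I would feed into the compression pipeline.

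Next I would observe two things about the maximum rank $\kap'$ of this vector set. First, by \cref{def:decidingVecSet}, $\kap' = \max_{j\in[n]} \operatorname{rank}\{\ket{v_{x,j}} : f(x)=1\}$, which is bounded above by $|f^{-1}(1)|$. Second, as noted in the paragraph preceding \cref{thm:exact_compress}, the same vector set $\{\ket{v_{x,j}}\}$ is also a $\neg f$-deciding vector set, and we may negate the output of a $\neg f$-algorithm to obtain an $f$-algorithm without changing the query or space complexity; this corresponds to using the definition of maximum rank with the filter $f(x)=0$ instead, bounded by $|f^{-1}(0)|$. Therefore, without loss of generality, we may assume we have an $f$- or $\neg f$-deciding vector set whose maximum rank satisfies $\kap' \leq \min(|f^{-1}(1)|,|f^{-1}(0)|)$.

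Under the hypothesis that either $|f^{-1}(1)|$ or $|f^{-1}(0)|$ is $\operatorname{poly}(n)$, this yields $\kap' = \operatorname{poly}(n)$. Applying \cref{thm:exact_compress} to this vector set produces a new $f$- (respectively $\neg f$-) deciding vector set of dimension $\kap'$ and size at most $A$, and hence a bounded-error quantum algorithm for $f$ with query complexity $O(A) = O(Q(f))$ (query-optimal) and space complexity $O(\log(n\kap')) = O(\log n)$, as desired.

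There is essentially no technical obstacle here; the corollary is a direct readout of \cref{thm:exact_compress} once one exploits the $f \leftrightarrow \neg f$ symmetry to take whichever of $|f^{-1}(0)|$ or $|f^{-1}(1)|$ is polynomial. One could equally invoke \cref{thm:jl_inexact} in place of \cref{thm:exact_compress} and get the same asymptotic space bound $O(\log \kap' n) = O(\log n)$, but the exact compression route is cleaner and avoids the extra $\operatorname{poly}(A,\kap')$ blow-up absorbed by the logarithm.
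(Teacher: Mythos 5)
Your proposal is correct and takes essentially the same route as the paper: bound the maximum rank by the number of $1$-valued inputs, pass to $\neg f$ when it is the $0$-valued inputs that are polynomially many, apply \cref{thm:exact_compress} (the paper also allows \cref{thm:jl_inexact}), and invoke the tightness of the general adversary bound for query-optimality. No gaps.
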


\begin{proof}
Notice that $\kap \le n_1$ where $n_1$ the number of $1$-valued inputs to $f$.
When $\kap=O(n^d)$ for $d=O(1)$, by \cref{thm:exact_compress} or 
\cref{thm:jl_inexact}, we have that for any deciding vector set for $f$ with size $A$, we can create an algorithm with query complexity $O(A)$ whose space complexity is $O(\log n)$. 
Since there is always a deciding vector set for $f$ with size $A$ such that $O(A)$ is the optimal query complexity of $f$ \cite{reichardtReflectionsQuantumQuery2011}, our results imply that there exists
a query-optimal algorithm that uses logarithmic space. For the case of polynomially many $0$-valued inputs, we use $\neg f$.
\end{proof}

\section{Algorithm from a Numerical Solution}\label{sec:SVD}

While the dual adversary provides a method of designing
optimal query algorithms, in general it might be hard to find
an optimal solution. However, since the problem can be formulated as a
semidefinite program, we can find a numerical solution.
We show that numerical solutions that only approximately satisfy the dual adversary
constraints can be used to produce bounded-error quantum algorithms within
a constant factor of the objective function value of the numerical solution.

We first provide a theoretical result that shows how an algorithm can be 
constructed from an \textit{approximate} deciding vector set, a vector set that only approximately satisfies \cref{eq:filteredNorm}.
Then we introduce a classical Python package implemented specifically to solve 
the general adversary dual SDP and show that its solutions often satisfy the 
error bounds required by our theoretical results.

\subsection{Application of Robust Dual Adversary Algorithm}

In previous sections we assumed access to an exact solution to the general adversary dual
which we used to produce approximate solutions.
In this section, we use a finite precision SDP solver to obtain an approximate solution 
and then build a bounded-error quantum algorithm
with the robust dual adversary algorithm described in \cref{thm:inexact_algorithm}.

\begin{restatable}{theorem}{SvdAlg}\label{thm:svd_alg}
Consider a Boolean function $f: X \to \{0,1\}$
where $X \subseteq \{0,1\}^n$ 
and an approximate $f$-deciding vector set $\{\ket{v_{x,j}}\}_{x\in X,j\in [n]}$ in the sense that
\begin{align}\label{eq:numerical_error}
\varepsilon \coloneqq \max_{x,y: f(x) \neq f(y)} |\braket{\psi_x}{\phi_y}|,
\end{align}
is small as defined below, where $\ket{\psi_x},$ $\ket{\phi_x}$ are as in \cref{eq:psidef,eq:phidef}.
Let $A$ be the size and $m$ be the dimension of the 
approximate $f$-deciding vector set, with size and 
dimension defined as in \cref{def:decidingVecSet}.
Consider a matrix $M$ with rows $\ket{\psi_x}_{x:f(x)=1}$.
Let the singular values of $M$ be
$s_1 \geq s_2 \geq \cdots \geq s_{\kap} \geq s_{\kap + 1} \coloneqq 0$
and $n_1$ be $|\{x: f(x)=1\}|$.
If there exists $\kap^* \in [\kap]$ such that 
\begin{align}\label{eq:no_kap}
  \varepsilon \leq \frac{1}{\sqrt{n_1}} \left( \frac{s_{\kap^*}}{2\sqrt{\C} A} - s_{\kap^*+1} \right)
  \quad \text{and} \quad s_{\kap^*+1} \leq \frac{1}{2\sqrt{1000\C} A},
\end{align}
then there is a quantum algorithm that correctly
evaluates $f$ with probability at least $2/3$
with at most $O(A)$ queries and
$O(\log(nm))$ qubits.
\end{restatable}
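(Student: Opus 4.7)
The plan is to invoke \cref{thm:inexact_algorithm} with $R=2\Delta^*-I$, where $\Delta^*$ is the orthogonal projector onto the span of the top-$\kap^*$ right singular vectors $\{\ket{V_j}\}_{j\le\kap^*}$ of $M$, giving $U=(2\Pi_x-I)R$ acting on the space $\mathbb{C}\oplus\mathbb{C}^n\otimes\mathbb{C}^m\otimes\mathbb{C}^2$ of \cref{sec:background}. The intuition: the rows $\ket{\psi_x}$ of $M$ (for $f(x)=1$) lie nearly in the top singular subspace, with truncation error controlled by $s_{\kap^*+1}$, while the $\ket{\phi_x}$ (for $f(x)=0$) have small projection onto that subspace by the near-orthogonality hypothesis $|\braket{\psi_{x'}}{\phi_x}|\le\varepsilon$ combined with the lower bound $s_{\kap^*}$ on the relevant singular values.

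To verify \cref{thm:inexact_algorithm} \cref{part:psi_cond}, write the SVD $M=W\Sigma V^\dagger$ so that $\ket{\psi_x}=\sum_j s_j\overline{W_{xj}}\ket{V_j}$; since $W$ has orthonormal rows, $\|(I-\Delta^*)\ket{\psi_x}\|^2=\sum_{j>\kap^*}s_j^2|W_{xj}|^2\le s_{\kap^*+1}^2\sum_j|W_{xj}|^2=s_{\kap^*+1}^2$. Combined with $\Pi_x\ket{\psi_x}=\ket{\psi_x}$, which is immediate from \cref{eq:psidef}, this gives $\|(I-U)\ket{\psi_x}\|=\|(2\Pi_x-I)(I-R)\ket{\psi_x}\|\le 2s_{\kap^*+1}=:\varepsilon_\psi$. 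For \cref{part:phi_cond}, $\Pi_x\ket{\eta_x}=0$ is immediate from the $\ket{\bar x_i}$-register in \cref{eq:phidef}. To bound $\|(I+R)\ket{\phi_x}\|=2\|\Delta^*\ket{\phi_x}\|$, equate the two expressions $\|M\ket{\phi_x}\|^2=\sum_{x':f(x')=1}|\braket{\psi_{x'}}{\phi_x}|^2\le n_1\varepsilon^2$ and $\|M\ket{\phi_x}\|^2=\sum_j s_j^2|\braket{V_j}{\phi_x}|^2\ge s_{\kap^*}^2\|\Delta^*\ket{\phi_x}\|^2$ to conclude $\|\Delta^*\ket{\phi_x}\|\le\sqrt{n_1}\varepsilon/s_{\kap^*}$, so $\varepsilon_\phi:=2\sqrt{n_1}\varepsilon/s_{\kap^*}$.

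To finish, set $\Theta$ proportional to $1/(\sqrt{\C}A)$ and take $\C$ large and $\delta$ small as constants, mirroring the parameter choices in the proof of \cref{thm:jl_inexact}. The hypothesis $s_{\kap^*+1}\le 1/(2\sqrt{1000\C}A)$ forces $\varepsilon_\psi/\Theta$ to be a small constant, and together with $\nu_x\le 1+1/\C$ from \cref{eq:nuxBound} the $f(x)=1$ success bound in \cref{thm:inexact_algorithm} exceeds $2/3$. The hypothesis $\varepsilon\le\frac{1}{\sqrt{n_1}}(s_{\kap^*}/(2\sqrt{\C}A)-s_{\kap^*+1})$ forces $\varepsilon_\phi/2+\Theta/2=O(1/(\sqrt{\C}A))$, which combined with $\mu_x\le 1+\C A^2$ from \cref{eq:mu_x_bound} pushes the $f(x)=0$ failure below $1/3$. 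The query and space bounds then follow immediately from \cref{thm:inexact_algorithm}. The main obstacle is the tradeoff in choosing $\kap^*$: increasing it shrinks $s_{\kap^*+1}$ (helping the $f(x)=1$ case) but also shrinks $s_{\kap^*}$ (hurting the $f(x)=0$ case through the factor $1/s_{\kap^*}$ in $\varepsilon_\phi$), and the theorem's twin hypotheses on $\varepsilon$ and $s_{\kap^*+1}$ are exactly calibrated so that a single value of $\kap^*$ balances both sides.
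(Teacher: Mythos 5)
Your proposal follows essentially the same route as the paper: the same choice of $R$ as a reflection over the top-$\kap^*$ right singular subspace of $M$, the same bound $\varepsilon_\psi\le 2s_{\kap^*+1}$, an $\varepsilon_\phi$ bound of the same form (yours, $2\sqrt{n_1}\,\varepsilon/s_{\kap^*}$, derived directly in the SVD basis, is in fact slightly tighter than the paper's $2(s_{\kap^*+1}+\varepsilon\sqrt{n_1})/s_{\kap^*}$, which detours through the rank-$\kap^*$ truncation $M'$), and the same final application of \cref{thm:inexact_algorithm} with $\Theta=\Theta(1/(\sqrt{\C}A))$. The only caveat is that you leave the closing constants implicit; they are genuinely delicate here (there is a tension between shrinking $\Theta$ for the $f(x)=0$ case and keeping $\varepsilon_\psi^2/\Theta^2$ small for the $f(x)=1$ case, and the paper's own final arithmetic is loose by a small constant factor), but this does not change the substance of the argument.
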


Given any numerical solution, we can simply set $\kap^* = \kap$ in which case we require 
$\varepsilon < s_\kap/(2\sqrt{\C n_1}A)$. 
However, if the singular values fall off sharply then we can obtain a less stringent constraint on $\varepsilon$, which in turn leads
to less precision required by the numerical solver.
So in practice, we search for any $\kap^* \in [\kap]$ 
that gives a large enough bound to accommodate the $\varepsilon$ 
we observe in our numerical solution. 

To prove \cref{thm:svd_alg}, we apply \cref{thm:inexact_algorithm}.
For this 
application, we set $R$ to be equal to a reflection over the space
spanned by the first $\kappa^*$ right singular vectors of $M,$
the matrix whose rows are the vectors $\{\ket{\psi_x}\}_{x:f(x)=1}$.
We use the following two lemmas to show that this reflection
approximately preserves the vectors $\{\ket{\psi_x}\}_{x:f(x)=1}$
and mostly destroys vectors that are almost orthogonal
to all vectors in $\{\ket{\psi_x}\}_{x:f(x)=1}$.
Their proofs, found in \cref{sec:numericalProofs}, use standard results from the singular value decomposition approach to approximating
of matrices.

\begin{restatable}{lemma}{SvdPsi}\label{lemma:svd_psi}
Let $M$ be the matrix whose rows are the vectors $\{\ket{\psi_x}\}_{x\in X_1}$ for some
set $X_1$,
and denote $M$'s singular values by $s_1\geq s_2\geq\dots \ge s_{\kap}.$  Let $\Delta'$ be the orthogonal projector onto the first $\kap^*$
right singular vectors of $M$. Then $\forall x\in X_1$,
$(2\Delta'-I)\ket{\psi_x}=\ket{\psi_x}+\ket{\eta}$, where 
$\|\ket{\eta}\|\leq 2s_{\kap^*+1}$ and $s_{\kap^* + 1} = 0$ if $\kap^* = \kap$.
\end{restatable}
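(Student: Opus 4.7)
The plan is to exploit the singular value decomposition of $M$ directly; this is essentially a textbook truncated-SVD approximation bound applied row-by-row, so no real obstacle is expected. The one thing I would be careful about is keeping row/column conventions consistent between $M$ and the kets $\ket{\psi_x}$, but this is purely notational.

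First I would write $M = \sum_{i=1}^{\kap} s_i \ket{u_i}\bra{v_i}$ and observe that since $\ket{\psi_x}$ is (up to conjugation) the $x$-th row of $M$, it admits the expansion $\ket{\psi_x} = M^\dagger \ket{x} = \sum_{i=1}^{\kap} s_i \braket{u_i}{x}\,\ket{v_i}$ in the orthonormal basis of right singular vectors, where $\ket{x}$ is the standard basis vector indexing the row for input $x\in X_1$. This writes $\ket{\psi_x}$ in exactly the basis that $\Delta'$ diagonalizes.

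Next, since $\Delta' = \sum_{i=1}^{\kap^*} \proj{v_i}$ by definition, I would split the expansion into the part kept by $\Delta'$ and the tail, so that $(2\Delta'-I)\ket{\psi_x} = \ket{\psi_x} - 2(I-\Delta')\ket{\psi_x}$. This immediately identifies the residual as $\ket{\eta} = -2\sum_{i=\kap^*+1}^{\kap} s_i \braket{u_i}{x}\,\ket{v_i}$, i.e., the contribution from the tail singular directions.

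Finally, to bound $\|\ket{\eta}\|$, I would use orthonormality of the right singular vectors $\{\ket{v_i}\}$ to get $\|\ket{\eta}\|^2 = 4 \sum_{i=\kap^*+1}^{\kap} s_i^2 |\braket{u_i}{x}|^2$, then pull out the largest remaining singular value $s_{\kap^*+1}$ (using monotonicity of singular values) and use orthonormality of $\{\ket{u_i}\}$ together with $\|\ket{x}\|=1$ to conclude $\sum_{i} |\braket{u_i}{x}|^2 \leq 1$. This yields $\|\ket{\eta}\| \leq 2 s_{\kap^*+1}$. The boundary case $\kap^* = \kap$ is then automatic: $\Delta'$ projects onto the entire row space of $M$, which contains every $\ket{\psi_x}$, so $(2\Delta'-I)\ket{\psi_x} = \ket{\psi_x}$ and $\ket{\eta}=0$, matching the convention $s_{\kap+1}=0$.
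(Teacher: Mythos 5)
Your proposal is correct and follows essentially the same route as the paper: both identify $\Delta'\ket{\psi_x}$ with the corresponding row of the rank-$\kap^*$ truncated SVD of $M$ and bound the residual by $2s_{\kap^*+1}$. The only difference is that you unpack the bound explicitly in the singular basis (via Bessel's inequality on $\sum_i|\braket{u_i}{x}|^2\le 1$), whereas the paper cites the spectral-norm identity $\|M-M'\|=s_{\kap^*+1}$ and applies it to the row $\bra{j}(M-M')$; your version is marginally more self-contained, and the row/column conjugation issue you flag is indeed only notational.
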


\begin{restatable}{lemma}{SvdPhi}\label{lemma:svd_phi}
For disjoint sets $X_1$ and $Y$ and  sets of vectors 
$\{\ket{\psi_x}\}_{x\in X_1}$ and $\{\ket{\phi_y}\}_{y\in Y}$
such that 
\begin{align}\label{eq:innerprod}
|\braket{\psi_x}{\phi_y}|\leq \varepsilon 
\quad \forall x \in X_1,y\in Y,
\end{align}
let $M$ be the matrix whose rows are the vectors $\{\ket{\psi_x}\}_{x\in X_1}$
and denote its singular values by $s_1\geq s_2\geq\dots \ge s_{\kap}$. Let $\Delta'$ be the orthogonal projector onto the first $\kap^*$
right singular vectors of $M$. 
Then $\forall y\in Y,$
$(2\Delta'-I)\ket{\phi_y}=\ket{\eta}$, where 
$\|\ket{\eta}\|\leq \frac{ s_{\kap^*+1}+\varepsilon\sqrt{|X_1|}}{s_{\kap^*}}$ and $s_{\kap^* + 1} = 0$ if $\kap^* = \kap$. 
\end{restatable}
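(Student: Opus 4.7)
The plan is to prove the bound via the SVD of $M$. Write $M=\sum_{i=1}^{\kap} s_i \ket{l_i}\bra{r_i}$ so that $\Delta'=\sum_{i=1}^{\kap^*}\ketbra{r_i}{r_i}$. The starting observation is that since the rows of $M$ are the vectors $\bra{\psi_x}$, the $x$-th entry of the column vector $M\ket{\phi_y}$ is exactly $\braket{\psi_x}{\phi_y}$. Combined with the hypothesis $|\braket{\psi_x}{\phi_y}|\leq\varepsilon$, this immediately gives the uniform estimate $\|M\ket{\phi_y}\|\leq \varepsilon\sqrt{|X_1|}$.

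The second step is to expand $M\ket{\phi_y}=\sum_i s_i\braket{r_i}{\phi_y}\ket{l_i}$ and split the sum at $i=\kap^*$:
\begin{equation}
\ket{A}:=\sum_{i\leq\kap^*}s_i\braket{r_i}{\phi_y}\ket{l_i},\qquad \ket{B}:=\sum_{i>\kap^*}s_i\braket{r_i}{\phi_y}\ket{l_i}.
\end{equation}
Using orthonormality of the $\ket{l_i}$, from below I would bound $\|\ket A\|^2=\sum_{i\leq\kap^*}s_i^2|\braket{r_i}{\phi_y}|^2 \geq s_{\kap^*}^2\|\Delta'\ket{\phi_y}\|^2$. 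From above, $\|\ket B\|^2=\sum_{i>\kap^*}s_i^2|\braket{r_i}{\phi_y}|^2\leq s_{\kap^*+1}^2\sum_{i>\kap^*}|\braket{r_i}{\phi_y}|^2\leq s_{\kap^*+1}^2$, where the final inequality uses that $\ket{\phi_y}$ is a unit vector by \cref{eq:phidef}. The triangle inequality $\|\ket A\|\leq\|M\ket{\phi_y}\|+\|\ket B\|$ then yields $s_{\kap^*}\|\Delta'\ket{\phi_y}\|\leq\varepsilon\sqrt{|X_1|}+s_{\kap^*+1}$, which rearranges to the claimed bound; the degenerate case $\kap^*=\kap$ just applies the convention $s_{\kap^*+1}=0$.

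The main subtlety, more bookkeeping than obstacle, is interpreting the vector $\ket{\eta}$ in the statement: since $2\Delta'-I$ is a unitary reflection and $\ket{\phi_y}$ has unit norm, $\ket\eta$ must be read as the \emph{misalignment} of $R=2\Delta'-I$ from its ideal action $\ket{\phi_y}\mapsto -\ket{\phi_y}$, i.e., the proof effectively bounds $\|(I+R)\ket{\phi_y}\|=2\|\Delta'\ket{\phi_y}\|$. This is precisely the quantity \cref{thm:inexact_algorithm} requires to verify the $\varepsilon_\phi$ condition, so bounding $\|\Delta'\ket{\phi_y}\|$ as above delivers what the downstream application of \cref{thm:svd_alg} actually needs. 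Apart from this reinterpretation, the argument is a short, essentially one-line SVD triangle-inequality estimate.
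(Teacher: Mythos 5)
Your proof is correct and essentially the same as the paper's: your split $M\ket{\phi_y}=\ket{A}+\ket{B}$ at index $\kap^*$ is exactly $M'\ket{\phi_y}+(M-M')\ket{\phi_y}$ for the rank-$\kap^*$ SVD truncation $M'$, and you combine the same three estimates ($\|M\ket{\phi_y}\|\leq\varepsilon\sqrt{|X_1|}$, $\|(M-M')\ket{\phi_y}\|\leq s_{\kap^*+1}$, $\|M'\ket{\phi_y}\|\geq s_{\kap^*}\|\Delta'\ket{\phi_y}\|$) by the triangle inequality, merely writing out the SVD facts the paper cites. Your reading of $\ket{\eta}$ as the misalignment, i.e.\ bounding $\|\Delta'\ket{\phi_y}\|$ (hence $\|(I+R)\ket{\phi_y}\|=2\|\Delta'\ket{\phi_y}\|$), matches both what the paper's proof actually establishes and how \cref{thm:svd_alg} uses it.
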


We now use \cref{lemma:svd_psi,lemma:svd_phi} to apply \cref{thm:inexact_algorithm}
and prove \cref{thm:svd_alg}.

\begin{proof}[Proof of \cref{thm:svd_alg}]
Let $\Delta'$ be the orthogonal projector onto the first $\kap^*$
right singular vectors of $M$, and set $R=2\Delta'-I$ so as
in \cref{thm:inexact_algorithm}, $U =(2\Pi_x - I)R$.
Then
\begin{align}
\varepsilon_\psi \coloneqq \|(I-U)\ket{\psi_x}\| &= \|\ket{\psi_x} -(2\Pi_x - I)R \ket{\psi_x}\| \nonumber\\
&= \|\ket{\psi_x} - (2\Pi_x - I)(\ket{\psi_x} + \ket{\eta_x})\| \nonumber \\
&= \|(2\Pi_x - I)\ket{\eta_x}\|\leq 2 s_{\kap^*+1} \tag{By \cref{lemma:svd_psi}
and defs of $\Pi_x$, $\ket{\psi_x}$} \\
\varepsilon_\phi \coloneqq \|(I + R)\ket{\phi_y}\| &= \|2\Delta'\ket{\phi_y}\| 
\leq 2\frac{ s_{\kap^*+1}+\varepsilon\sqrt{n_1}}{s_{\kap^*}} \tag{By \cref{lemma:svd_phi}}.
\end{align}

Then by \cref{thm:inexact_algorithm}, if $f(x) = 1$, the probability of measuring a phase of 0 is at least
\begin{align}
 \left(\sqrt{\nu_x (1-\frac{5\varepsilon_{\psi}^2}{\Theta^2})} - \sqrt{\nu_x\|\ket{\psi_x}\|^2-1} \right)^2(1-\delta) - \delta &\ge 
\left(\sqrt{(1-\frac{20s_{\kap^*+1}^2}{\Theta^2})} - \sqrt{\nu_x\|\ket{\psi_x}\|^2-1} \right)^2(1-\delta)-\delta \\
&\ge 
\left(\sqrt{ 1-\frac{20s_{\kap^*+1}^2}{\Theta^2}} - \sqrt{1/c} \right)^2(1-\delta) -\delta\\
&\ge 2/3
\end{align}
where we set $c = 100$, $\delta = 1/25$, and $\Theta = (2\sqrt{\C} A)^{-1}$.
In addition, we used that
$\nu_x$ is between 1 and $1+1/c$ and that 
$\|\ket{\psi_x}\|=1$, both by \cref{eq:psidef}.
Recall that $s_{\kap^*+1}^2 \le \Theta^2/1000$ by assumption.

In the other case, where $f(x)=0$, the probability of measuring a phase of $0$ is at most
\begin{align}
\mu_x\left(2\frac{ s_{\kap^*+1}+\varepsilon\sqrt{n_1}}{s_{\kap^*}} +\Theta/2\|\ket{\phi_x}\|\right)^2+\delta & \le (1 + \C A^2)\left(2\frac{ s_{\kap^*+1}+\varepsilon\sqrt{n_1}}{s_{\kap^*}} +\Theta/2\right)^2+\delta \\
& = (1 + \C A^2)\left(\frac{1}{4\sqrt{\C}A}  +\frac{1}{4\sqrt{\C}A} \right)^2+\delta \\
& \le 1/3 \tag{by the proof of \cref{thm:jl_inexact}}
\end{align}
when %
\begin{align}
 \frac{ s_{\kap^*+1}+\varepsilon\sqrt{n_1}}{s_{\kap^*}} < \frac{1}{8\sqrt{\C}A} \iff \varepsilon < \frac{1}{\sqrt{n_1}}\left(\frac{s_{\kap^*}}{2\sqrt{\C}A} - s_{\kap^* + 1}\right).
\end{align}
The algorithm uses within a multiplicative factor of 
$\frac{1}{\Theta}\log(\frac{1}{\delta}) = 20A \log(20)$ queries
by \cref{lem:PhaseEst}.
\end{proof}

The obvious question is whether the conditions in \cref{thm:svd_alg}
are met in practice.
We show in the next subsection that the conditions are met in the vast majority
of numerical solutions to the general adversary dual for 
the random Boolean functions we find on functions of up to 
$25$ bits with domain size $32$.

\subsection{Experiments}

In this section, we describe how the numerical error
(defined in \cref{eq:numerical_error}) behaves in practice.

The general adversary dual (\cref{def:dual}) is a 
semidefinite programming (SDP) problem.
In order to solve the SDP numerically, 
we first reformulate it in the following standard form:

\begin{align}\label{eq:standard_form}
  \min_{\mathbf{X} \in \mathbf{S}^n} \tr(\mathbf{C X}) \quad \textrm{s.t.} 
  \quad \mathcal{A}(\mathbf{X}) = \mathbf{b} \quad \text{and} \quad \mathbf{X} \succeq 0
\end{align}
where $\mathbf{S}^n$ is the set of $n \times n$ symmetric matrices, 
$\mathbf{C} \in \mathbf{S}^n$, $\tr(\cdot)$ is the trace, 
$\mathbf{b} \in \mathbb{R}^m$ is the constraint vector, and
\begin{align}
  \mathcal{A}(\mathbf{X}) \coloneqq [\tr(\mathbf{A}^{1} \mathbf{X} ), \ldots, \tr(\mathbf{A}^{m} \mathbf{X}) ]^\top
\end{align}
for constraint matrices $\mathbf{A}^{i} \in \mathbf{S}^n$.
The reformulation requires converting the constraints
and introducing appropriate slack variables.
The constraints for our SDP
are particularly sparse which makes it inefficient
to use standard packages like CVXOPT and SDPA
\cite{diamond2016cvxpy, fujisawa2002sdpa}.
Instead, we use the alternating direction method of
\cite{wen2010alternating} which is specifically designed for
SDP problems with sparse structure and orthogonal constraints.
The pseudocode appears in \cref{alg:adm}.

\begin{algorithm}
  \caption{Alternating direction augmented Lagrangian method \cite{wen2010alternating}}\label{alg:adm}
  \begin{algorithmic}
  \Require{Constraint matrices $\mathbf{A, C }\in \mathbf{S}^n$, constraint vector $\mathbf{b} \in \mathbb{R}^m$, iterations $T$, tolerance $t\geq 0$}
  \Ensure{Output $\mathbf{X}$ approximately satisfies \cref{eq:standard_form}}
  \State $\mathbf{X}^0 \gets \mathbf{0}_{n \times n}$ \Comment{Zero matrix}
  \State $\mathbf{S}^0 \gets \mathbf{I}_{n \times n}$
  \For{$k=0,1,\ldots,T-1$}
  \State $y^{k+1} \gets -(\mathcal{A}\mathcal{A}^*)^+(\mathcal{A}(\mathbf{X}^k)-\mathbf{b} + \mathcal{A}(\mathbf{S}^k-\mathbf{C}))$
  \Comment{$(\cdot)^+$ denotes Moore-Penrose pseudoinverse}
  \State $\mathbf{V}^{k+1} \gets \mathbf{C}- \mathcal{A}^*(y^{k+1})-\mathbf{X}^k$
  \State $\mathbf{S}^{k+1} \gets \mathbf{Q}_+ \mathbf{\Sigma}_+ \mathbf{Q}_+^\top$ \Comment{$\mathbf{Q_+}, \mathbf{\Sigma_+}$ contain the non-negative eigendecomposition of $\mathbf{V}^{k+1}$}
  \State $\mathbf{X}^{k+1} \gets \mathbf{S}^{k+1} - \mathbf{V}^{k+1}$
  \State $\mathbf{X}^{k+1} \gets \operatorname{round}(\mathbf{X}^{k+1})$ \Comment{For entries within $t$ of 0 or $1$, round to nearest integer}
  \EndFor
  \State $\mathbf{X} \gets \mathbf{X}^{T-1}$
  \end{algorithmic}
\end{algorithm}

We slightly adapt \cref{alg:adm} to our problem:
We store each matrix in a sparse format, round entries of the solution
after every iteration, and use the Moore-Penrose pseudoinverse.

The first natural question is how well \cref{alg:adm} performs.
We test this by generating random Boolean functions
$f:X \to \{0,1\}$ where $X \subseteq \{0,1\}^n$ for different values of $n$.
Since the dimension of the SDP grows exponentially with $|X|$,
we fix $|X|=32$ for our experiments.
For each instance, we compute the maximum numerical error
\begin{align}
  \varepsilon \coloneqq \max_{x,y: f(x) \neq f(y)} |\braket{\psi_x}{\phi_y}|.
\end{align}
The results in \cref{thm:svd_alg} depend on the error $\varepsilon$ being small
so it's important that we can obtain solutions with small error efficiently.
\cref{fig:iterationsVsError} shows how $\varepsilon$ 
decreases with the number of iterations $T$ 
of the SDP solver.

The next question is whether the error is sufficiently small
to satisfy the requirements of \cref{thm:svd_alg}.
\cref{fig:allowed_v_actual} shows how $\varepsilon$ is typically much smaller
than the bound required in \cref{thm:svd_alg} for random 
functions of up to $25$ bits with domain size $32$.
Our experiments test for whether $\varepsilon$ satisfies the bound of \cref{thm:svd_alg} in the case that $\kap^* = \kap$.
It could be that even larger $\varepsilon$ is tolerated by considering $\kap^* < \kap$.

\begin{figure}[H]
    \centering
    \includegraphics[scale=.9]{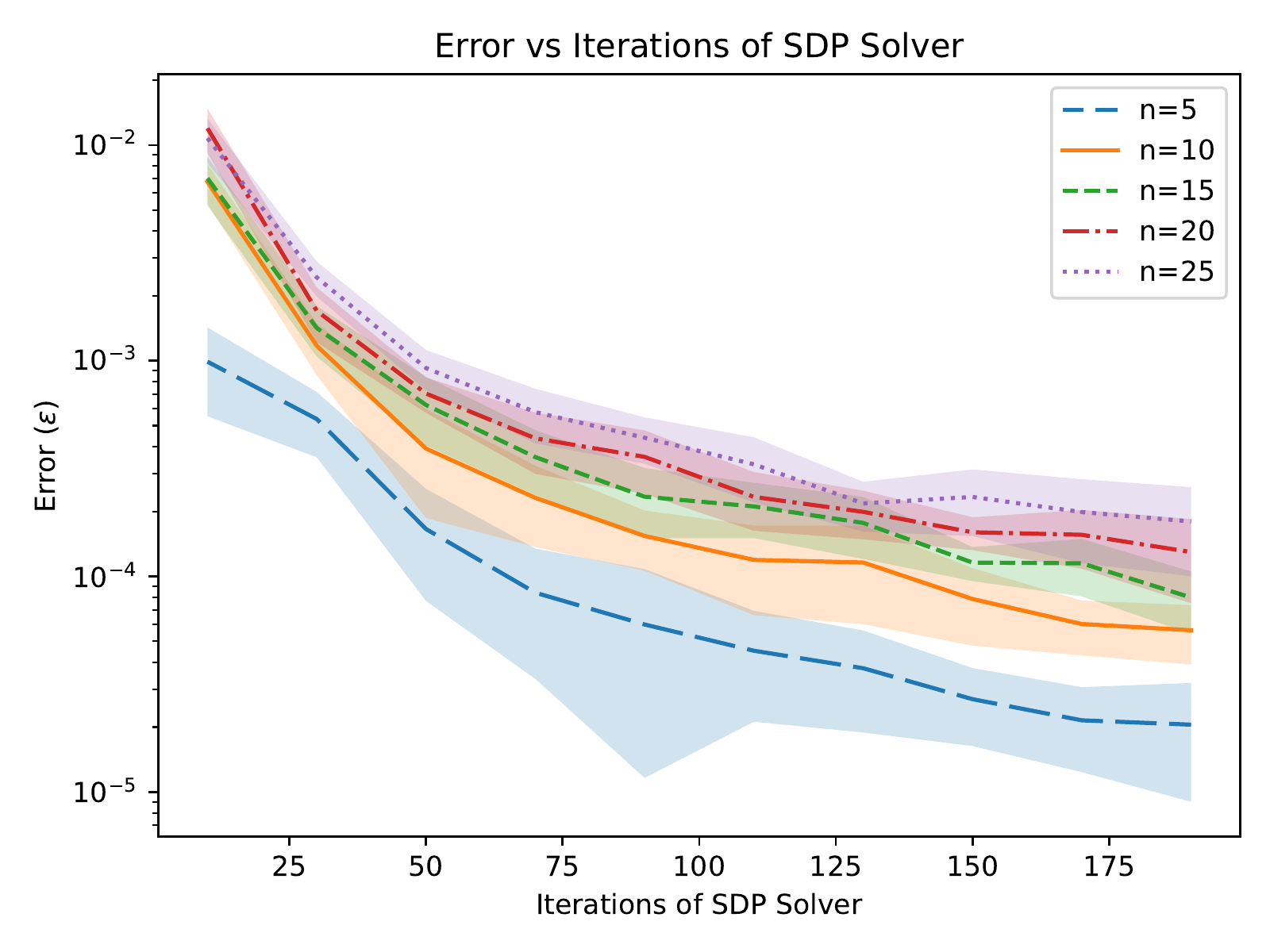}
    \captionsetup{width=.9\linewidth}
    \caption{For random Boolean functions with
    a domain of fixed size, the maximum numerical error 
    decreases with the number of iterations of the SDP solver.
    Note: the vertical axis is on a logarithmic scale
    and the shaded regions contain one standard deviation from
    20 random instances.}
    \label{fig:iterationsVsError}
\end{figure}

\begin{figure}[H]
    \centering
    \captionsetup{width=.9\linewidth}
    \includegraphics[scale=.9]{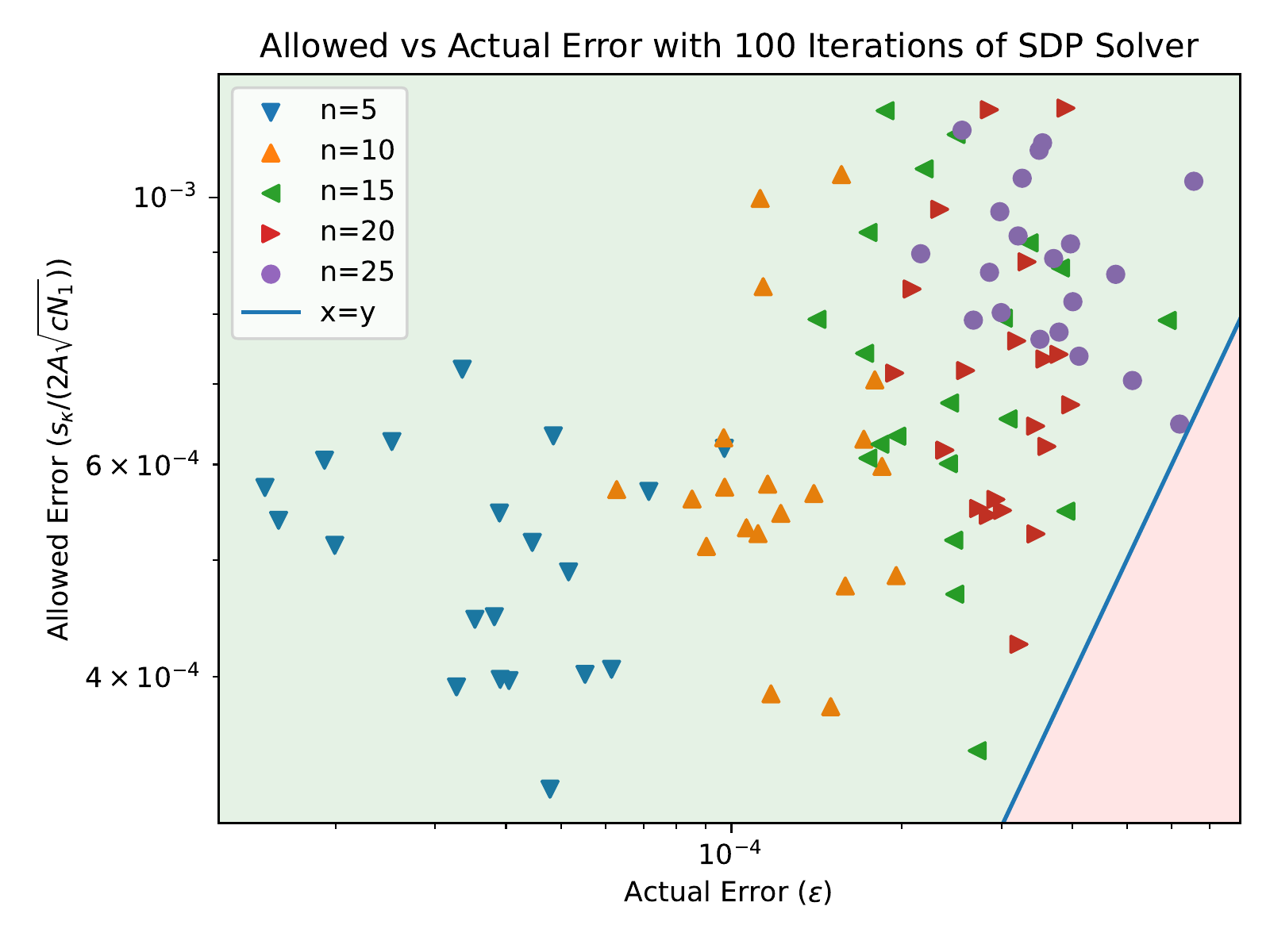}
    \caption{For random Boolean functions with a domain of fixed size,
    the maximum numerical error 
    stays below the threshold required to construct a provably correct
    bounded-error quantum algorithm.
    Note: both axes are on a logarithmic scale and the green region
    above the diagonal line indicates the error is small enough.}
    \label{fig:allowed_v_actual}
\end{figure}

\section{Acknowledgments}
We thank Stacey Jeffery for elucidating discussions.
SK was sponsored by the U.S. Army Research Office and this work was accomplished
under Grant Number W911NF-20-1-0327. The views and conclusions contained in
this document are those of the authors and should not be interpreted as
representing the official policies, either expressed or implied, of the Army
Research Office or the U.S. Government. The U.S. Government is authorized to
reproduce and distribute reprints for Government purposes notwithstanding
any copyright notation herein. 
RTW was supported by NSF No. 1922658 and 1916505.

\bibliography{spanProgram}
\bibliographystyle{alpha}

\appendix

\section{Phase Estimation Proof}\label{sec:prelimProofs}

In this section, we prove the performance guarantees on the Phase Estimation circuit that we use in our algorithms.
\phaseEst*

\begin{proof} 
Let $\overline{P}_\Theta(U)=I-P_\Theta$(i.e. $\overline
{P}_\Theta$ is the orthogonal projector onto all eigenvectors with phase
more than $\Theta$).
  
We use the parallel phase estimation plus median voting approach of
Ref.~\cite{nagaj2009fast}, which builds off of Refs.~\cite
{kitaevQuantumMeasurementsAbelian1995,cleveQuantumAlgorithmsRevisited1998}.
For an eigenstate $\ket{\lambda}$ of $U$ with phase $\phi$, with
probability at least $(1-\delta)$, this protocol outputs an estimate that
is within $\Theta/2$ of $\phi$ and uses a single preparation of $\ket
{\lambda}$, an additional $O\left(\log\frac{1}{\Theta}\log\frac{1}
{\delta}\right)$ workspace qubits to store the parallel phase estimates,
and $O\left(\frac{1}{\Theta}\log\frac{1}{\delta}\right)$ controlled
applications of $U$. Let $D(U)$ be the phase estimation circuit on $U$,
which acts on $n+g$ qubits, for $g=O\left(\log\frac{1}{\Theta}\log\frac{1}
{\delta}\right).$ Thus for an input state $\ket{\psi}$ on $n$ qubits, we
append $\ket{0}^{\otimes g}$ in order to apply the circuit to it.  Let
$\Gamma_0$ be the projector onto the space that corresponds to a $0$ output.
 Then for
an input state $\ket{\psi}$, the probability of outputting a phase of $0$
after the phase estimation circuit is $\|\Gamma_0 D(U)\left(\ket{\psi}\ket
{0}^{\otimes g}\right)\|^2$.

For the upper bound, using the triangle inequality, we have
\begin{align}\label{eq:upperBound1}
\|\Gamma_0 D(U)\left(\ket{\psi}\ket{0}^{\otimes g}\right)\|^2\leq& \left(\|\Gamma_0 D(U)\left((P_\theta\ket{\psi})\ket{0}^{\otimes g}\right)\|
+\|\Gamma_0 D(U)\left((\overline{P}_\theta\ket{\psi})\ket{0}^{\otimes g}\right)\|\right)^2\nonumber\\
\leq & \left(\|\Gamma_0 D(U)\left((P_\theta\ket{\psi})\ket{0}^{\otimes g}\right)\|+\sqrt{\delta}\right)^2,
\end{align} 
where we have used that $\|\Gamma_0 D(U)\left((\overline
{P}_\theta\ket{\psi})\ket{0}^{\otimes g}\right)\|^2\leq \delta$, since
$\overline{P}_\theta\ket{\psi}$ is a linear combination of eigenvectors with
eigenvalues more than $\theta/2$ away from $0$. Continuing from \cref
{eq:upperBound1}, and using that $\|\Gamma_0 D(U)\left((P_\theta\ket
{\psi})\ket{0}^{\otimes g}\right)\|\leq \|P_\theta\ket{\psi}\|\leq 1$, we
have
\begin{equation}\label{eq:upperBound2}
\|\Gamma_0 D(U)\left(\ket{\psi}\ket{0}^{\otimes g}\right)\|^2\leq\|P_\theta\ket{\psi}\|^2+2\sqrt{\delta}+\delta\leq \|P_\theta\ket{\psi}\|^2+3\sqrt{\delta}.
\end{equation} 

For the lower bound, using the reverse triangle inequality, we have
\begin{align}\label{eq:lowerBound1}
\|\Gamma_0 D(U)\left(\ket{\psi}\ket{0}^{\otimes g}\right)\|^2\geq& 
\left(\|\Gamma_0 D(U)\left((P_{\theta/2}\ket{\psi})\ket{0}^{\otimes g}\right)\|
-\|\Gamma_0 D(U)\left((\overline{P}_{\theta/2}\ket{\psi})\ket{0}^{\otimes g}\right)\|\right)^2\nonumber\\
\geq & \left(\sqrt{1-\delta}\|(P_{\theta/2}\ket{\psi}\|-\sqrt{\delta}\right)^2,
\end{align} 
where we have used that $\|\Gamma_0 D(U)\left((P_{\theta/2}\ket
{\psi})\ket{0}^{\otimes g}\right)\|^2\geq (1-\delta)\|(P_{\theta/2}\ket
{\psi}\|^2$ since $P_{\theta/2}\ket{\psi}$ consists of eigenvectors of $U$
that are within $\theta/2$ of phase $0$, (and not within $\theta/2$ of the
next possible phase output, which is $\theta$). Likewise, 
$\|\Gamma_0 D(U)\left((\overline{P}_{\theta/2}\ket{\psi})\ket{0}^
{\otimes g}\right)\|^2\leq \delta$ because $\overline{P}_{\theta/2}\ket
{\psi}$ is a linear combination of eigenvectors with eigenvalues more than
$\theta/2$ away from $0$. Continuing from \cref{eq:lowerBound1}, and using
that $\|\Gamma_0 D(U)\left((P_{\theta/2}\ket{\psi})\ket{0}^
{\otimes g}\right)\|\leq \|P_\theta\ket{\psi}\|\leq 1$, we have
\begin{equation}\label{eq:lowerBound2}
\|\Gamma_0 D(U)\left(\ket{\psi}\ket{0}^{\otimes g}\right)\|^2\geq 
(1-\delta)\|(P_{\theta/2}\ket{\psi}\|^2-2\sqrt{\delta}.
\end{equation}

Because $\delta$ appears in the space and query complexity asymptotically as
$O(\log(1/\delta)$, we can replace the $O(\sqrt{\delta})$ terms in our bounds
with $\delta$ without affecting the result.
\end{proof}

\section{Proofs for the Robust Dual Adversary Algorithm}\label{sec:proofs_inexact}

In this section, we prove \cref{lem:preserve} and \cref{lemma:approx_gap},
which are used to prove \cref{thm:inexact_algorithm} in \cref{sec:Robust}.

\positiveStatePreserve*

\begin{proof}[Proof of \cref{lem:preserve}]
We can decompose $\ket{\psi_x}$ into its eigenbasis with respect to $U$, so 
\begin{equation}
\ket{\psi_x}=\sum_ic_i\ket{\lambda_i}
\end{equation}
where $\ket{\lambda_i}$ is an eigenvector of $U$ with eigenphase $\beta_i$,
and $c_i$ is the amplitude of $\ket{\lambda_i}$.
By rearranging terms, we can write
\begin{equation}
U\ket{\psi_x}= \ket{\psi_x} + U \ket{\psi_x}  - \ket{\psi_x} =
\ket{\psi_x}+\sum_{i}c_i(e^{i \beta_i}-1)\ket{\lambda_i}.
\end{equation}
By assumption, we have
\begin{equation}
\varepsilon\geq \left\|\sum_{i}c_i(e^{i \beta_i}-1)\ket{\lambda_i}\right\|^2.
\end{equation}

Since $\{\ket{\lambda_i}\}$ form an orthonormal basis, we have 
\begin{equation}
\varepsilon\geq \sum_{i}|c_i|^2\left|(e^{i \beta_i}-1)\right|^2
\geq \sum_{i:|\beta_i|>\Theta}|c_i|^2\left|(e^{i \beta_i}-1)\right|^2
\geq \left|(e^{i \Theta}-1)\right|^2
\sum_{i:|\beta_i|>\Theta}|c_i|^2
.
\end{equation}
where, for the second inequality, we have used that all the terms are positive
and, for the final inequality, we used that $|e^{i a}-1|^2 \geq |e^{i b}-1|^2$
for $\pi \ge |a| \geq |b| \ge 0$. 

Finally, using the inequality $|e^{i\Theta}-1|^2\geq \Theta^2/1.1$, which holds for $\Theta \leq 1$, 
we have
\begin{equation}
\varepsilon\geq \frac{\Theta^2}{1.1}\sum_{i:|\beta_i| > \Theta}|c_i|^2 
= \frac{\Theta^2}{1.1}(1-\|P_{\Theta}(U)\ket{\psi_x}\|^2),
\end{equation}
where in the last equality, we've used the definition of $P_\Theta(U).$

Rearranging, we have
\begin{equation}
\|P_\Theta(U)\ket{\psi_x}\|^2\geq 1-\frac{1.1\varepsilon}{\Theta^2}.
\end{equation}
\end{proof}

\negativeStateDestroy*
\begin{proof}[Proof of \cref{lemma:approx_gap}] (We follow the approach
of \cite{leeQuantumQueryComplexity2011}, which does not use Jordan's Lemma.)
For ease of notation, we abbreviate $P_\Theta(U)$ as $P_\Theta$. Let 
$\{\ket{\lambda_i}\}$ denote the eigenvectors of $U$,
$\beta_i$ denote the phase of $\ket{\lambda_i}$, 
and $\bar{\Pi}$ denote $I-\Pi$.
Then define $\ket{v}=P_\Theta\Pi\ket{w}$, 
$\ket{v'}=R\ket{v}$, 
and $\ket{v''}=(2\Pi-I)\ket{v'}=U\ket{v}$.
Now
\begin{align}\label{eq:approxSP}
 \|\ket{v}-\ket{v''}\|^2&= \|\ket{v}-U\ket{v}\|^2
 \nonumber\\
 &= \|(I-U)P_\Theta\ket{v}\|^2 &\textrm{ since }\ket{v}=P_\Theta\Pi\ket{w} \textrm{ and }P_\Theta P_\Theta = P_\Theta
 \nonumber\\
 &= \left\|\sum_{i:|\beta_i|\leq\Theta}\left(1-e^{i\beta_i}\right)\proj{\lambda_i}\ket{v}\right\|^2
 \nonumber\\
 &\leq 2(1-\cos(\Theta))\left\|\sum_{\ket{\lambda_i}:|\beta_i|\leq\Theta}\proj{\lambda_i}\ket{v}\right\|^2
 \nonumber\\
 &\leq 2(1-\cos(\Theta))\left\|\ket{v}\right\|^2
 \nonumber\\
 &\leq\Theta^2\|\ket{v}\|^2.
\end{align}

Also, since $\ket{v}+\ket{v'}=\ket{v}+R\ket{v}$,
we have
\begin{equation}\label{eq:approxSP1}
(\bra{v}+\bra{v'})\ket{w}=\bra{v}(I+R)\ket{w}\leq \| \ket{v} \| \varepsilon
\end{equation}
by Cauchy-Schwarz and our assumption that $\|(I+R)\ket{w}\|\leq \varepsilon$.

Now $\ket{v'}+\ket{v''}=\ket{v'}+(2\Pi - I)\ket{v'}=2\Pi\ket{v'}$.
So $\Pi(\ket{v'}+\ket{v''})=
\ket{v'}+\ket{v''}$
where we used that $\Pi$ is an orthogonal projector. 
Also
 $\ket{v'}-\ket{v''}=\ket{v'}-(2\Pi-I)\ket{v'}=2(I-\Pi)\ket{v'}$, so $\bar{\Pi}(\ket{v'}-\ket{v''})=\ket{v'}-\ket{v''}$ because $\bar{\Pi}=I-\Pi$ is an orthogonal projector. This last statement implies $\Pi\ket{v'}=\Pi\ket{v''}$, and the previous implies $\bar{\Pi}\ket{v'}=-\bar{\Pi}\ket{v''}$. So
\begin{align}\label{eq:approxSP2}
(\bra{v}+\bra{v'})\ket{w}&=(\bra{v}+\bra{v'})(\Pi+\bar{\Pi})\ket{w}
\nonumber\\
&=(\bra{v}+\bra{v'})\Pi\ket{w}+(\bra{v}+\bra{v'})\bar{\Pi}\ket{w}
\nonumber\\
&=(\bra{v}+\bra{v''})\Pi\ket{w}+(\bra{v}-\bra{v''})\bar{\Pi}\ket{w}.
\end{align} 

We can now prove the bound of interest:
\begin{align}\label{eq:bigLongBound}
\|P_\Theta \Pi \ket{w} \|^2 &
= \braket{v}{v}
=\bra{w}\Pi P_\Theta P_\Theta \Pi\ket{w}
=\left(\bra{w}\Pi P_\Theta \right)\Pi\ket{w}
=\bra{v}\Pi\ket{w} \nonumber \\
&=\frac{1}{2}\left((\bra{v}+\bra{v''})\Pi\ket{w}+(\bra{v}-\bra{v''})\Pi\ket{w}\right)
\nonumber\\
&=\frac{1}{2}\left((\bra{v}+\bra{v'})\ket{w}-(\bra{v}-\bra{v''})\bar{\Pi}\ket{w}+(\bra{v}-\bra{v''})\Pi\ket{w}\right)
\tag{using \cref{eq:approxSP2}}\nonumber\\
&\leq\frac{1}{2}\left(\|\ket{v}\|\varepsilon+\left\|\ket{v}-\ket{v''}\right\|\cdot\left\|(\Pi-\bar{\Pi})\ket{w}\right\|\right), \tag{Using \cref{eq:approxSP1} and Cauchy-Schwarz}
\\
&\leq \frac{1}{2}\left(\|\ket{v}\|\varepsilon+\Theta^2\|\ket{v}\|\left\|\ket{w}\right\|\right),
\end{align}
where in the last line we have used \cref{eq:approxSP}. Dividing both sides by $\|\ket{v}\|=\|P_\Theta \Pi \ket{w} \|$,
we have
\begin{align}
\|P_\Theta \Pi \ket{w} \|&\leq\frac{\varepsilon}{2}+\frac{\Theta}{2}\|\ket{w}\|.
\end{align}

\end{proof}

\section{Proofs of Space Compression}\label{sec:JLproofs}
In this appendix, we include proofs related to compressing the space
used by a dual adversary algorithm.

We first prove our result for exact compression, which rotates
the deciding vector set into a smaller dimensional space:
\exactCompression*

\begin{proof}[Proof of \cref{thm:exact_compress}]
Let $\{\ket{v_{x,j}}\}$ be an $f$-deciding vector set with maximum rank $\kap$. Then by \cref{def:decidingVecSet},
\begin{align}
\kap = %
\max_{j \in [n]} 
\operatorname{rank}\{\ket{v_{x,j}} : f(x) = 1\}.
\end{align}
Let $\kap_j$ be the rank of the set $\{\ket{v_{x,j}} : f(x)=1 \}$.
For each $j \in [n]$, let $\{\ket{\gamma_{ji}}\}_{i \in [\kap_j]}$ 
form an orthonormal basis of the space.
We next define a compression matrix 
$S_j \in \mathbb{C}^{m \times m}$, which rotates the
space into the first $\kap_j$ standard basis states, 
removing components outside this space.
In particular, set
\begin{align}
  S_j &\coloneqq \sum_{i=1}^{\kap_j} \ketbra{i}{\gamma_{ji}}. 
\end{align}
Then for all $x\in X$, set $\ket{v_{x,j}'}=S_j\ket{v_{x,j}}$.
We will show that $\{\ket{v_{x,j}'}\}$ is an $f$-deciding vector with dimension $\kap$ and size at most $A$.

First, note that $\forall j\in[n],x\in f^{-1}(1), \ket{v_{x,j}'}\in \mathbb{C}^\kappa$, so all vectors are of dimension $\kappa$. Next, to show that $\{\ket{v_{x,j}'}\}$ is an $f$-deciding vector with dimension $\kap$ and size at most $A$, consider a unitary $V_j\in\mathbb{C}^m\times \mathbb{C}^m$ that acts as $S_j$ on $\{\ket{\gamma_{ji}}\}_{i \in [\kap_j]}$,
and acts on orthogonal states in any way needed to make $V_j$ unitary. 
Then we can rewrite $S_j$ as 
\begin{equation}
S_j=\Pi_{\kap_j} V_j,
\end{equation}
where 
\begin{equation}
\Pi_{\kap_j}=\sum_{i=1}^{\kap_j}\proj{i}.
\end{equation}
That is, $\Pi_{\kap_j}$ is the projection onto the first $\kap_j$ standard basis states.
Also note that for $\forall x\in f^{-1}(1)$,  
\begin{equation}\label{eq:project_effect}
S_j\ket{v_{x,j}}=\Pi_{\kap_j}V_j\ket{v_{x,j}}=V_j\ket{v_{x,j}}.
\end{equation}

To show that $\{\ket{v_{x,j}'}\}$ is an $f$-deciding vector
set, it suffices to show that for any $j \in [n]$ and $x\in f^{-1}(1),y \in f^{-1}(0)$, $\braket{v_{x,j}'}{v_{y,j}'} = \braket{v_{x,j}}{v_{y,j}}$. We have
\begin{align}
  \braket{v_{x,j}'}{v_{y,j}'} &= \bra{v_{x,j}} S_j^\dagger S_j \ket{v_{y,j}} \nonumber\\
  &= \bra{v_{x,j}} V_j^\dagger \Pi_{r_j}^\dagger \Pi_{r_j} V_j \ket{v_{y,j}} \nonumber \\
    &= \bra{v_{x,j}} V_j^\dagger  V_j \ket{v_{y,j}} \nonumber\\
  &= \braket{v_{x,j}}{v_{y,j}}.
\end{align}
where in the third line we have used \cref{eq:project_effect}.

To show that the size of $\{\ket{v_{x,j}'}\}$ is at most $A$, we note that our compression does not increase the norm of our vectors. For any $j\in [n],x\in X$,
\begin{align}
\left\|\ket{v_{x,j}'}\right\|=\left\|\Pi_{r_j} V_j\ket{v_{x,j}}\right\|\leq 
\left\| V_j\ket{v_{x,j}}\right\|=\left\|\ket{v_{x,j}}\right\|,
\end{align}
where we have used that projectors can only decrease the norm of a vector,
and unitaries can not change the norm.
Now the vectors $\{\ket{v_{x,j}'}\}_{j\in[n],x\in X}$ have $0$ amplitude
on all standard basis states larger than $\kap$, so we can truncate
them to be vector in $\mathbb{C}^\kappa$
so we have found an $f$-deciding vector set with dimension $\kappa$
and size at most $A$.
\end{proof}

The following results are related to our application of JL
compression to deciding vector sets.
Because Johnson-Lindenstrauss compression theorems generally apply to real 
vectors, we The following lemma shows that we can restrict to real vectors in 
our deciding vector set without affecting the size, and at a cost of a factor
of $2$ in the dimension:
\realOK*

\begin{proof}[Proof of \cref{lem:Complex_to_Real}]
Let 
\begin{equation}
\ket{v_{x,j}^\mathbb{R}}=\textrm{Re}\left(\ket{v_{x,j}}\right)\oplus \textrm{im}\left(\ket{v_{x,j}}\right)
\end{equation}
where $\textrm{Re}(\cdot)$ and $\textrm{im}(\cdot)$ are the real and imaginary parts of a complex number, respectively.
Then for $x,y\in X:f(x)\neq f(y)$,
\begin{equation}
\sum_{j:x_j\neq y_j}\braket{v_{x,j}^\mathbb{R}}{v_{y,j}^\mathbb{R}}
=
\sum_{j:x_j\neq y_j}\textrm{Re}(\braket{v_{x,j}}{v_{y,j}})
=1.
\end{equation}
If $\ket{v_{x,j}}\in \mathbb{C}^m$, then $\ket{v_{x,j}^\mathbb{R}}\in\mathbb{R}^{2m}.$
\end{proof}

The following is the standard statement of the JL Lemma:
\begin{lemma}[Johnson-Lindenstrauss Lemma \cite{johnson1984extensions, duchi2016lecture}]
Given $\varepsilon>0$, a set of finite vectors $V \subset \mathbb{R}^d$, and a number $N>8\ln(|V|)/\varepsilon^2$, 
there is a compression matrix $S \in \mathbb{R}^{N \times d}$ such that for $\ket{v},\ket{u}\in V$,
\begin{equation}\label{eq:JL}
(1-\varepsilon)\|\ket{u}-\ket{v}\|^2\leq \|S \ket{u}- S\ket{v}\|^2\leq (1+\varepsilon)\|\ket{u}-\ket{v}\|^2.
\end{equation}
\label{lem:JL}
\end{lemma}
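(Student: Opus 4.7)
The plan is to prove the Johnson--Lindenstrauss lemma by the probabilistic method: sample a random linear map $S$, show that any fixed vector's squared norm is nearly preserved with overwhelming probability, and then take a union bound over all pairs in $V$ to conclude that a single matrix works for every pair simultaneously.

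First I would let $S \in \mathbb{R}^{N \times d}$ have i.i.d.\ entries $S_{ij} = g_{ij}/\sqrt{N}$ with $g_{ij} \sim \mathcal{N}(0,1)$. Because $S$ is linear, $\|S\ket{u} - S\ket{v}\|^2 = \|S(\ket{u}-\ket{v})\|^2$, so it suffices to show that for every fixed nonzero $\ket{w} \in \mathbb{R}^d$, the random variable $Z \coloneqq \|S\ket{w}\|^2/\|\ket{w}\|^2$ is close to $1$ with high probability, and then apply this with $\ket{w} = \ket{u} - \ket{v}$. A direct calculation gives $(S\ket{w})_i = N^{-1/2}\sum_j g_{ij} w_j \sim \mathcal{N}(0, \|\ket{w}\|^2/N)$, and these $N$ coordinates are independent, so $Z$ has the distribution of $\frac{1}{N}\sum_{i=1}^N X_i^2$ with $X_i$ i.i.d.\ $\mathcal{N}(0,1)$; in particular $\mathbb{E}[Z]=1$.

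Next I would invoke a sharp chi-squared tail bound in the style of Laurent--Massart, obtained by bounding the moment generating function of $X_i^2 - 1$: for $\varepsilon \in (0,1)$,
\begin{equation}
\Pr\bigl[|Z - 1| > \varepsilon\bigr] \leq 2 \exp(-N \varepsilon^2/8).
\end{equation}
With $N > 8\ln(|V|)/\varepsilon^2$, this probability is smaller than $1/\binom{|V|}{2}$ once one absorbs the $\ln 2$ factor and the gap between $\ln\binom{|V|}{2}$ and $\ln|V|$ into the slack.

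Finally I would take a union bound over the $\binom{|V|}{2}$ unordered pairs $\{\ket{u},\ket{v}\} \subseteq V$, applied with $\ket{w} = \ket{u}-\ket{v}$. With strictly positive probability, the random $S$ simultaneously satisfies the two-sided bound $(1-\varepsilon)\|\ket{u}-\ket{v}\|^2 \leq \|S\ket{u}-S\ket{v}\|^2 \leq (1+\varepsilon)\|\ket{u}-\ket{v}\|^2$ for every pair, which guarantees that at least one such $S$ exists. The main obstacle is tracking constants carefully enough to hit the stated $8\ln(|V|)/\varepsilon^2$ threshold: a cruder Bernstein-type inequality would cost a worse constant, so one needs the sharp Laurent--Massart two-sided $\chi^2$ concentration, and one must verify that both the upper and lower deviation tails behave well for $\varepsilon \in (0,1)$.
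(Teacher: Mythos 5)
The paper does not actually prove this lemma: it is imported by citation (to Johnson--Lindenstrauss and to Duchi's lecture notes) and used as a black box, with the paper's own work starting at \cref{lem:JLcorr}. Your overall strategy --- a Gaussian random projection $S_{ij}=g_{ij}/\sqrt{N}$, the reduction to concentration of $Z=\tfrac1N\sum_i X_i^2$ for a single direction $\ket{w}=\ket{u}-\ket{v}$, a $\chi^2$ tail bound, and a union bound over pairs --- is the standard proof and is sound in outline.

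However, the final accounting does not close as written. From $N>8\ln(|V|)/\varepsilon^2$ and the tail bound $\Pr[|Z-1|>\varepsilon]\leq 2\exp(-N\varepsilon^2/8)$ you only get a per-pair failure probability below $2\exp(-\ln|V|)=2/|V|$, whereas the union bound over $\binom{|V|}{2}\approx |V|^2/2$ pairs needs roughly $2/|V|^2$ per pair; you are short by a full factor of $|V|$, which cannot be absorbed into ``slack'' from $\ln 2$ or from the difference between $\ln\binom{|V|}{2}$ and $2\ln|V|$. So the claim ``this probability is smaller than $1/\binom{|V|}{2}$'' is false at the stated threshold. To repair it you must either sharpen the per-pair bound to something like $2\exp(-N\varepsilon^2/4)$ --- which the Laurent--Massart \emph{lower} tail supplies but the upper tail does not, since $\varepsilon-\ln(1+\varepsilon)<\varepsilon^2/2$ --- or accept a larger embedding dimension such as $N>16\ln(|V|)/\varepsilon^2$, or follow Dasgupta--Gupta and require $N\geq 4\ln(|V|)/(\varepsilon^2/2-\varepsilon^3/3)$, which matches $8\ln(|V|)/\varepsilon^2$ only asymptotically as $\varepsilon\to 0$. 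For this paper the discrepancy is harmless, since \cref{thm:jl_inexact} only ever uses $N=O(\log(|X|n)/\varepsilon^2)$, but your argument as written does not establish the lemma with the constant $8$.
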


For our application, we need to show that JL compression also approximately 
preserves the inner product of the compressed vectors. We show this in the 
following corollary, which is similar to that in \cite{larsenOptimalityJohnsonLindenstraussLemma2017}, except that we do not assume the vectors have norm $1$.
\JLcorr*

\begin{proof}
Let $V'$ be $V\cup \ket{\vec{0}}$,
where $\ket{\vec{0}}\in \mathbb{R}^{d}$ is the all $0$'s vector. 
By \cref{lem:JL}, there is a compression matrix $S$ that maps each $\ket{v}\in V'$
to $ S \ket{\vec{0}}\in \mathbb{R}^{N}$ such that \cref{eq:JL} is satisfied.
By adding the origin $\ket{\vec{0}}$, we will ensure that the mapping $S$ preserves vector norms.
In particular, since $S$ is a linear mapping, $S \ket{\vec{0}} =\ket{\vec{0}'}$, where $\ket{\vec{0}'}$ is the all $0$'s vector on $\mathbb{R}^{N}$.
Then using \cref{eq:JL}, we have that for each $\ket{v}\in V$,
\begin{equation}
(1-\varepsilon)\|\ket{u}-\ket{\vec{0}}\|^2\leq \| S \ket{u}- S\ket{\vec{0}}\|^2\leq (1+\varepsilon)\|\ket{u}-\ket{\vec{0}}\|^2,
\end{equation}
so 
\begin{equation}\label{eq:normPreserve}
(1-\varepsilon)\|\ket{u}\|^2\leq \|S \ket{u} \|^2\leq (1+\varepsilon)\|\ket{u}\|^2,
\end{equation}
and we see that the mapping $S$ approximately preserves the norms of the vectors
in our set.

Because the vectors $\{ S \ket{v} \}_{\ket{v}\in V}$ are real, 
using the inner product expression for the squared $\ell_2$-norm difference, 
we have that for any $\ket{u},\ket{v}\in V$,
\begin{align}
  (S \ket{u})^\dagger (S \ket{v})&=
  \frac{1}{2} \left(
  \| S \ket{u} \|^2
  + \| S \ket{v} \|^2
  - \| S \ket{u} - S \ket{v} \|^2
  \right) 
  \nonumber\\
  & \leq \frac{1+\varepsilon}{2}
  (\| \ket{u} \|^2 + \| \ket{v} \|^2)
  - \frac{1-\varepsilon}{2} \| \ket{u} - \ket{v} \|^2 
  \nonumber\\&\leq \braket{u}{v} + \varepsilon
  (\| \ket{u} \|^2 + \| \ket{v} \|^2 + |\braket{u}{v}|)
  \nonumber\\ &= \braket{u}{v} + \varepsilon
  \left(\| \ket{u} \|^2 + \| \ket{v} \|^2 + \sqrt{\|\ket{u}\|^2\|\ket{v}\|^2}\right) \tag{by Cauchy-Schwarz}
  \nonumber\\ &= \braket{u}{v} + \varepsilon
  \left(\| \ket{u} \|^2 + \| \ket{v} \|^2 + \frac{\| \ket{u} \|^2 + \| \ket{v} \|^2}{2}\right) \tag{geometric vs. arithmetic mean}
  \nonumber\\& \leq \braket{u}{v} +\frac{3\varepsilon}{2} (\|\ket{u} \|^2 + \| \ket{v} \|^2) .
\end{align}
where the first inequality uses \cref{eq:normPreserve,eq:JL}.
The other direction follows similarly.
\end{proof}

We next use \cref{lem:JLcorr} to show that the compressed vectors in our
JL compression of the dual adversary algorithm have approximately the same size and inner products as their corresponding uncompressed vectors:
\compressedRelationsUpdate*
\begin{proof}[Proof of \cref{lem:compressed_relations_update} \cref{part:zeta_preserve}.]

We express $\ket{\zeta_j'}$ and $\ket{\zeta_j}$ in terms of our compression
vectors $C$ (see \cref{eq:compression_vectors}):
\begin{align}\label{eq:zetap_in_zeta}
\ket{\zeta_j'}=&\sum_{x:f(x)=1}\alpha_{j,x}\ket{\psi_x'}\nonumber\\
=&\sum_{x:f(x)=1}\frac{\alpha_{j,x}}{\sqrt{\nu_x}}\left(\ket{\hat{0}}+\frac{1}{\sqrt{\C A}}\sum_{i\in[n]}\ket{i}\ket{v_{x,i}'}\ket{x_i}\right)\nonumber\\
=&\left(\sum_{x:f(x)=1}\frac{\alpha_{j,x}}{\sqrt{\nu_x}}\ket{\hat{0}}\right)+\frac{1}{\sqrt{\C A}}\sum_{i\in[n], b\in\{0,1\}}\ket{i}\left(\sum_{x:f(x)=1,x_i=b}\frac{\alpha_{j,x}}{\sqrt{\nu_x}}\ket{v_{x,i}'}\right)\ket{b}\nonumber\\
=&\left(\sum_{x:f(x)=1}\frac{\alpha_{j,x}}{\sqrt{\nu_x}}\ket{\hat{0}}\right)+
\frac{1}{\sqrt{\C A}}\sum_{i\in[n], b\in\{0,1\}}\ket{i}\ket{\SC_{j,i,b}'}\ket{b},
\end{align}
and similarly,
\begin{equation}
\ket{\zeta_j}=\left(\sum_{x:f(x)=1}\frac{\alpha_{j,x}}{\sqrt{\nu_x}}\ket{\hat{0}}\right)+\frac{1}{\sqrt{\C A}}\sum_{i\in[n], b\in\{0,1\}}\ket{i}\ket{\SC_{j,i,b}}\ket{b}.\label{eq:zeta_rewrite}
\end{equation}

Then
\begin{align}
\braket{\zeta_j'}{\zeta_l'}
&=  \left(\sum_{x:f(x)=1} \frac{\alpha_{j,x}}{\sqrt\nu_x}\right)\left(\sum_{y:f(y)=1} \frac{\alpha_{l,y}}{\sqrt\nu_y}\right) 
+ \frac{1}{\C A} \sum_{i \in [n], b\in\{0,1\}} \braket{\SC_{j,i,b}'}{\SC_{l,i,b}'}.\label{eq:zeta_prime_inner}
\end{align}

Applying \cref{lem:JLcorr}, we have that
\begin{align}
  \sum_{i \in [n], b\in\{0,1\}} \braket{\SC_{j,i,b}'}{\SC_{l,i,b}'}
  &\in \sum_{i \in [n], b\in\{0,1\}} \left( \braket{\SC_{j,i,b}}{\SC_{l,i,b}}
  \pm 2\varepsilon (\| \ket{\SC_{j,i,b}} \|^2 + \| \ket{\SC_{l,i,b}} \|^2 ) \right) \\
  &\in \left(\sum_{i \in [n], b\in\{0,1\}} \braket{\SC_{j,i,b}}{\SC_{l,i,b}}\right)
  \pm 2\varepsilon \left(\sum_{i \in [n], b\in\{0,1\}}  \| \ket{\SC_{j,i,b}} \|^2 + \| \ket{\SC_{l,i,b}} \|^2 \right).\label{eq:zetajib_prime_inner}
\end{align}

Since $\ket{\zeta_j}$ has norm 1, we have
\begin{equation}
  1 = \| \ket{\zeta_j} \|^2 = 
  \left( \sum_{x: f(x) = 1} \frac{\alpha_{j,x}}{\sqrt{\nu_x}}\right)^2
  + \frac{1}{\C A} \sum_{i \in [n], b\in\{0,1\}} \| \ket{\SC_{j,i,b}} \|^2 \\
  \implies \sum_{i \in [n], b\in\{0,1\}} \| \ket{\SC_{j,i,b}} \|^2 \le
  \C A,
  \label{eq:zeta_norm_bound}
\end{equation}
where the last inequality follows because each $\alpha_{j,x}$ is a real number.

Plugging \cref{eq:zeta_norm_bound} into \cref{eq:zetajib_prime_inner}, we have that
\begin{align}
  \sum_{i \in [n], b\in\{0,1\}} \braket{\SC_{j,i,b}'}{\SC_{l,i,b}'}
  &\in \left(\sum_{i \in [n], b\in\{0,1\}} \braket{\SC_{j,i,b}}{\SC_{l,i,b}}\right)
  \pm 4\varepsilon \C A. \label{eq:zeta_inner_bound}
\end{align}

Finally, plugging \cref{eq:zeta_inner_bound} into \cref{eq:zeta_prime_inner} and 
then using \cref{eq:zeta_rewrite}, we have
\begin{align}
\braket{\zeta_j'}{\zeta_l'}
&\in \left(\sum_{x:f(x)=1} \frac{\alpha_{j,x}}{\sqrt\nu_x}\right)\left(\sum_{y:f(y)=1} \frac{\alpha_{l,y}}{\sqrt\nu_y}\right) 
+ \left(\frac{1}{\C A} \sum_{i \in [n], b\in\{0,1\}} \braket{\SC_{j,i,b}}{\SC_{l,i,b}} \right)
+ 4\varepsilon \tag{by \cref{eq:zeta_inner_bound}} \\
&\in \braket{\zeta_j}{\zeta_l} \pm 4\varepsilon \nonumber \\
&\in \delta_{i,j}\pm 4\varepsilon.
\end{align}
where in the final
expression we used that $\{\ket{\zeta_j}\}$ form an orthonormal basis.
\end{proof}

\begin{proof}[Proof of \cref{lem:compressed_relations_update} \cref{part:zeta_phi}]

Observe that for $x\in f^{-1}(0),$
\begin{align}
  1 = \| \ket{\phi_x} \|^2 = 
  \frac{1}{\mu_x} \left( 1 + \C A \sum_{i \in [n]}
  \| \ket{v_{x,i} }\|^2 \right)
  \implies \sum_{i \in [n]} \| \ket{v_{x,i}} \|^2
  = \frac{\mu_x-1}{\C A} \le A
  \label{eq:vxi_norm_bound}
\end{align}
where the last inequality follows by 
\cref{eq:mu_x_bound}.

Then using \cref{lem:JLcorr} and our choice of compression vectors $C$ (\cref{eq:compression_vectors}), we have
\begin{align}
  \sum_{i \in [n]}
  \braket{\SC_{j,i,b}'}{v_{x,i}'}
  &\in \sum_{i \in [n]} (\braket{\SC_{j,i,b}}{v_{x,i}}
  \pm 2\varepsilon (\| \SC_{j,i,b} \|^2 + \| v_{x,i} \|^2)) \nonumber\\
  &\in \left(\sum_{i \in [n]} \braket{\SC_{j,i,b}}{v_{x,i}}\right)
  \pm \left(2\varepsilon \sum_{i \in n} \| \SC_{j,i,b} \|^2 +
  2\varepsilon \sum_{i \in n} \| v_{x,i} \|^2 \right)\nonumber\\ 
  &\in \left(\sum_{i \in [n]} \braket{\SC_{j,i,b}}{v_{x,i}} \right)
  \pm 2\varepsilon (\C+1) A,
  \label{eq:zeta_vxi_bound}
\end{align}
where the last inequality follows by 
\cref{eq:zeta_norm_bound} and \cref{eq:vxi_norm_bound}.

From \cref{eq:zetap_in_zeta}, we have
\begin{align}
\braket{\zeta_j'}{\phi_x'} &= \left(\left(\sum_{y:f(y)=1}  \frac{\alpha_{j,y}}{\sqrt{\nu_y}}\bra{\hat{0}}\right)
+ \frac{1}{\sqrt{\C A}}\sum_{i\in[n], b\in \{0,1\}}\bra{i} \bra{\SC_{j,i,b}'}\bra{b} \right)\nonumber\\
&\qquad \left(\frac{1}{\sqrt{\mu_x}}\left(\ket{\hat{0}}-\sqrt{\C A}\sum_{i\in[n]}\ket{i}\ket{v_{x,i}'}\ket{\bar{x}_i}\right)\right) \nonumber\\
&= \sum_{y:f(y)=1}  \frac{\alpha_{j,y}}{\sqrt{\mu_x\nu_y}} +  \frac{1}{\sqrt{\mu_x}} \sum_{i\in[n]}\braket{\SC_{j,i,\bar{x}_i}'}{v_{x,i}'}\nonumber\\
&\in \left(\sum_{y:f(y)=1}  \frac{\alpha_{j,y}}{\sqrt{\mu_x\nu_y}} 
+ \frac{1}{\sqrt{\mu_x}} \sum_{i\in[n]}\braket{\SC_{j,i,\bar{x}_i}}{v_{x,i}}\right) \pm \frac{2 \varepsilon (\C+1) A}{\sqrt{\mu_x}} \tag{by \cref{eq:zeta_vxi_bound}} \\
&\in \braket{\zeta_j}{\phi_x} \pm 2\varepsilon(\C+1) A, 
\end{align}
where we have used that $\mu_x\geq 1$ (see \cref{eq:mu_x_bound}). Now because $\{\ket{\zeta_j}\}$ is an orthonormal
basis vector of $\textrm{span}\{\ket{\psi_x : f(x)=1} \}$, and $\ket{\phi_x}$ for $x\in f^{-1}(0)$ is orthogonal
to all $\ket{\psi_y}$ for $y\in f^{-1}(1)$, we have $\braket{\zeta_j}{\phi_x}=0$.
Thus $|\braket{\zeta_j'}{\phi_x'}| \leq 2\varepsilon(\C+1) A$.

\end{proof}

\begin{proof}[Proof of \cref{lem:compressed_relations_update} \cref{part:psiphiNormBounds}]
We first analyze $\|\ket{\psi_x'}\|.$
Because $\{\ket{\zeta_{j}}\}_{j\in[\kap]}$ form an orthonormal basis for 
$\textrm{span}\{ \ket{\psi_x} : f(x) = 1\}$,
there are real numbers $\hat{\alpha}_{x,j}$ such that
\begin{align}
\ket{\psi_x}=\sum_{j=1}\hat{\alpha}_{x,j}\ket{\zeta_j}, \quad \textrm{and}\quad
\ket{\psi_x'}=\sum_{j=1}\hat{\alpha}_{x,j}\ket{\zeta_j'}. \label{eq:psi_in_terms_of_zeta}
\end{align}
Applying \cref{part:zeta_preserve}, we have
\begin{align}
\|\ket{\psi_x'}\|^2 &= \sum_{j,l \in [\kap]} \hat{\alpha}_{x,j} \hat{\alpha}_{x,l} \braket{\zeta_j'}{\zeta_{l}'}
\in \sum_{j,l \in [\kap]} \hat{\alpha}_{x,j} \hat{\alpha}_{x,l} (\braket{\zeta_j}{\zeta_{l}}\pm 4\varepsilon) \\
&\in \|\ket{\psi_x}\|^2 \pm4\varepsilon\left(\sum_{j\in[\kap]}\hat{\alpha}_{x,j}\right)^2. \label{eq:psi_norm1}
\end{align}
Now
\begin{align}
\sum_{j\in[\kap]}\hat{\alpha}_{x,j}\leq \sqrt{\kap \sum_{j\in[\kap]}\hat{\alpha}_{x,j}^2}=\sqrt{\kap},
\label{eq:psi_norm2}
\end{align}
where we have used Cauchy-Schwarz and from \cref{eq:zetaj} that $1=\braket{\psi_x}{\psi_x}=\sum_{j\in[\kap]}\hat{\alpha}_{x,j}^2$. 
Plugging \cref{eq:psi_norm2} into \cref{eq:psi_norm1} and using that $\|\ket{\psi_x}\|^2=1$ gives us the desired result.

Next we analyze $\|\ket{\phi_x'}\|$.
We have
\begin{align}
\|\ket{\phi_x'}\|^2&=\left(\frac{1}{\sqrt{\mu_x}}\left(\bra{\hat{0}}-\sqrt{\C A}\sum_{i\in[n]}\bra{i}(\bra{v_{x,i}}S^\dagger)\bra{\bar{x}_i}\right)\right)
\left(\frac{1}{\sqrt{\mu_x}}\left(\ket{\hat{0}}-\sqrt{\C A}\sum_{j\in[n]}\ket{j}S\ket{v_{x,j}}\ket{\bar{x}_j}\right)\right)\\
&=\frac{1}{\mu_x}+\C A\sum_{i\in [n]}\bra{v_{x,i}}S^\dagger S\ket{v_{x,i}}
\end{align}
Since $\{\ket{v_{x,i}}\}_{x:f(x)=0}$ are part of the compressed vector set from
\cref{lem:JLcorr}, we have
\begin{align}
\|\ket{\phi_x'}\|^2&\in\frac{1}{\mu_x}\left(1+\C A\sum_{i\in [n]}\left(\braket{v_{x,i}}{v_{x,i}}\pm3\varepsilon\|\ket{v_{x,i}}\|^2\right)\right).
\end{align}
Thus, using \cref{eq:mu_x_bound}
\begin{align}
\left|\|\ket{\phi_x'}\|^2-\|\ket{\phi_x}\|^2\right|=\left|\|\ket{\phi_x'}\|^2-1\right|&\leq
\frac{3 \C \varepsilon A\sum_{i\in [n]}\|\ket{v_{x,i}}\|^2}{\mu_x}\leq 3\varepsilon
\end{align}
where we use $\mu_x = 1 + cA \sum_{i\in [n]}\|\ket{v_{x,i}}\|^2$ in the final inequality.
\end{proof}

We need one final tool before we can prove our main JL compression result, \cref{lem:deltaPhiAnalysis}. Our algorithm applies a reflection over the space
spanned by $\{\ket{\zeta_j'}\}_{j\in[\kap]}$. It would be convenient for our analysis if $\{\ket{\zeta_j'}\}_{j\in[\kap]}$ were an orthonormal basis for this space, but because of the JL compression, these vectors are only approximately orthonormal. However the following lemma allows us to create an orthonormal basis $\{\ket{v_j}\}_{j\in[\kap]}$ for the $\textrm{span}\{\ket{\zeta_j'}\}_{j\in[\kap]}$ that has high overlap with $\{\ket{\zeta_j}\}_{j\in[\kap]}$.
\begin{restatable}{lemma}{createOrtho}\label{lemma:ortho_basis}
Consider a set of vectors $\{\ket{\zeta_j}\}_{j\in [r]}$.
Fix $\varepsilon > 0$ so that $\varepsilon r < 1/4$
and $|\braket{\zeta_j}{\zeta_i}-\delta_{i,j}| \leq \varepsilon$ for all $i,j\in[r]$.
Then we can construct an orthonormal set $\{\ket{\ortho_j}\}_{j \in [r]}$ 
that is close to $\{\ket{\zeta_j}\}_{j \in [r]}$ in the sense that 
$\ket{\ortho_j} = \sum_{i=1}^r a_{j, i} \ket{\zeta_i}$ where $|a_{j,i}-\delta_{i,j}|\leq 3\varepsilon$
for all $j \in [r]$.
\end{restatable}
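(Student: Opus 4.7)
The plan is to construct the orthonormal set $\{\ket{\ortho_j}\}$ by induction on $j$ via the standard Gram-Schmidt procedure, tracking at each step the expansion of $\ket{\ortho_j}$ in the $\{\ket{\zeta_i}\}$ basis. Take $\ket{\ortho_1}=\ket{\zeta_1}/\|\ket{\zeta_1}\|$; since $\|\ket{\zeta_1}\|^2=\braket{\zeta_1}{\zeta_1}\in[1-\varepsilon,1+\varepsilon]$, the single coefficient $a_{1,1}$ lies within $3\varepsilon$ of $1$. For $j\geq 2$ set $\ket{\tilde\ortho_j}=\ket{\zeta_j}-\sum_{k<j}\braket{\ortho_k}{\zeta_j}\ket{\ortho_k}$ and $\ket{\ortho_j}=\ket{\tilde\ortho_j}/\|\ket{\tilde\ortho_j}\|$, maintaining the inductive invariant that $\{\ket{\ortho_k}\}_{k<j}$ is orthonormal and $|a_{k,i}-\delta_{k,i}|\leq 3\varepsilon$ for all $k<j$ and every $i\in[r]$.

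The first substep of the induction will be to show that each Gram-Schmidt inner product $\braket{\ortho_k}{\zeta_j}$ with $k<j$ has size $O(\varepsilon)$. Expanding $\ket{\ortho_k}=\sum_i a_{k,i}\ket{\zeta_i}$ and separating the $i=k$ and $i=j$ contributions from the rest, the inductive hypothesis together with $|\braket{\zeta_i}{\zeta_j}-\delta_{i,j}|\leq\varepsilon$ produces two dominant terms, $a_{k,k}\braket{\zeta_k}{\zeta_j}$ (size $\leq(1+3\varepsilon)\varepsilon$) and $a_{k,j}\braket{\zeta_j}{\zeta_j}$ (size $\leq 3\varepsilon(1+\varepsilon)$), while each of the remaining $r-2$ terms is at most $3\varepsilon^2$, so their total is controlled by the hypothesis $\varepsilon r<1/4$. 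As a consequence, $\|\ket{\tilde\ortho_j}\|^2=\braket{\zeta_j}{\zeta_j}-\sum_{k<j}|\braket{\ortho_k}{\zeta_j}|^2=1\pm O(\varepsilon)$, so the normalization step perturbs coefficients multiplicatively only by $1\pm O(\varepsilon)$.

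Reading off the coefficient of $\ket{\zeta_i}$ in $\ket{\tilde\ortho_j}$ gives $\delta_{i,j}-\sum_{k<j}\braket{\ortho_k}{\zeta_j}a_{k,i}$; splitting off the (at most one) term with $k=i$ and using $|a_{k,i}|\leq 3\varepsilon$ for the remaining $k\neq i$ shows that each such coefficient is within $O(\varepsilon)$ of $\delta_{i,j}$, where once again the $O(r\varepsilon^2)$ tail is tamed by $\varepsilon r<1/4$. Dividing by $\|\ket{\tilde\ortho_j}\|$ then yields $|a_{j,i}-\delta_{j,i}|\leq 3\varepsilon$, closing the induction. The main obstacle will be keeping the multiplicative constants sufficiently tight: each Gram-Schmidt pass threatens to inflate the invariant beyond the $3\varepsilon$ budget, so I would carry the computation through with explicit constants and invoke $\varepsilon r<1/4$ both to bound the $O(r\varepsilon^2)$ cross-terms and to prevent the denominator $\|\ket{\tilde\ortho_j}\|$ from dropping too far below $1$.
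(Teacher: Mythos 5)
Your high-level plan (Gram--Schmidt plus an induction on the expansion coefficients, with $\varepsilon r<1/4$ taming the quadratic tails) is the same as the paper's, but the specific induction you set up does not close, and this is not just a matter of tightening constants. With your invariant --- $\{\ket{\ortho_k}\}_{k<j}$ normalized at every step and $|a_{k,i}-\delta_{k,i}|\leq 3\varepsilon$ for \emph{all} $i\in[r]$ --- your own dominant-term accounting gives $|\braket{\ortho_k}{\zeta_j}|\leq (1+3\varepsilon)\varepsilon+3\varepsilon(1+\varepsilon)+3r\varepsilon^2\approx 4.75\varepsilon$. The coefficient of $\ket{\zeta_i}$ in $\ket{\zeta_j}-\sum_{k<j}\braket{\ortho_k}{\zeta_j}\ket{\ortho_k}$ then contains the single term $-\braket{\ortho_i}{\zeta_j}\,a_{i,i}$, of magnitude up to roughly $4.75\varepsilon$, before adding the $O(r\varepsilon^2)$ tail; dividing by the norm (which can be slightly below $1$) does not shrink this. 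So the claim ``dividing by the norm then yields $|a_{j,i}-\delta_{j,i}|\leq 3\varepsilon$'' is unsupported. The deeper issue is the term $a_{k,j}\braket{\zeta_j}{\zeta_j}$ that you keep at size $3\varepsilon$: in Gram--Schmidt $\ket{\ortho_k}\in\operatorname{span}\{\ket{\zeta_1},\dots,\ket{\zeta_k}\}$, so $a_{k,j}=0$ for $j>k$, and if you do not use this, the recursion feeds the entire off-diagonal budget back in with coefficient $\approx 1$ at every step, so \emph{no} invariant of the form $C\varepsilon$ can be self-sustaining. Even after adding this triangularity, per-step normalization lets the diagonal drift away from $1$, and the self-consistency condition for an off-diagonal bound $\alpha\varepsilon$ is essentially $(1+\alpha\,\varepsilon r)^2\leq\alpha$, which for $\varepsilon r$ near the allowed extreme $1/4$ forces $\alpha\geq 4$; the stated $3\varepsilon$ is not reachable by this route over the full parameter range.

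The paper sidesteps both problems by running the induction on the \emph{unnormalized} Gram--Schmidt vectors (\cref{eq:GramSchmidt}): the coefficient of $\ket{\zeta_j}$ in $\ket{\ortho'_j}$ stays exactly $1$ throughout the induction, the expansion is triangular, and the off-diagonal coefficients can be kept at $2\varepsilon$, with $\varepsilon r<1/4$ used only to absorb the $O(r\varepsilon^2)$ cross-terms. A single normalization at the very end costs a factor $1/\|\ket{\ortho'_j}\|\leq 1/\sqrt{1-2\varepsilon}$, degrading $2\varepsilon$ to the stated $3\varepsilon$ (and the diagonal to $1\pm 2\varepsilon$). To repair your argument you would either adopt this delayed-normalization bookkeeping, or at minimum split the invariant into separate, tighter bounds for the diagonal and the sub-diagonal coefficients together with $a_{k,i}=0$ for $i>k$ --- and even then the constant $3$ does not come out of the per-step-normalization scheme uniformly in $\varepsilon r<1/4$.
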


\begin{proof}[Proof of \cref{lemma:ortho_basis}]
We will first create unnormalized orthogonal vectors $\{\ket{v'_j}\}$ using Gram-Schmidt decomposition, and then we will
normalize them. Recall that Gram-Schmidt orthogonal but unnormalized vectors are created as follows. For $j\in [r]$, define
\begin{equation}\label{eq:GramSchmidt}
\ket{\ortho'_j}\coloneqq \ket{\zeta_j}-\sum_{k=1}^{j-1}\proj{\ortho'_k}\ket{\zeta_j}.
\end{equation}

We will prove using strong induction that $\forall n\in [r]$,
$\ket{\ortho'_n} = \sum_{j=1}^{n} a'_{n, j} \ket{\zeta_j}$ where $|a'_{j,i}|\leq 2\varepsilon$ if $j \ne i$ and $a'_{n,n} = 1$.

For the base case, we have $\ket{\ortho'_1}=\ket{\zeta_1}$, as desired. 

For the inductive step, assume that for
vectors $\ket{\ortho'_j}$ with $j\leq n$, our inductive assumption holds. Now consider $\ket{\ortho'_{n+1}}$:
\begin{align}
\ket{\ortho'_{n+1}} &\coloneqq \ket{\zeta_{n+1}} - \sum_{j=1}^n \ketbra{\ortho'_j}{\ortho'_j}\ket{\zeta_{n+1}} \tag{\cref{eq:GramSchmidt}}\\
&= \ket{\zeta_{n+1}} - \sum_{j=1}^n \left(\sum_{k=1}^{j} a'_{j, k} \ket{\zeta_k} \right)\left( \sum_{i=1}^{j} a'_{j, i} \braket{\zeta_i}{\zeta_{n+1}} \right) \tag{inductive assumption} \\
&= \ket{\zeta_{n+1}} - \sum_{k=1}^n\left(\sum_{j=k}^n\sum_{i=1}^ja'_{j, k}  a'_{j, i} \braket{\zeta_i}{\zeta_{n+1}} \right)\ket{\zeta_k} \tag{reordering summation}
\end{align}
Thus we have $a'_{n+1,n+1}=1$ and
\begin{align}
|a'_{n+1,k}|&=\left|\sum_{j=k}^n\sum_{i=1}^ja'_{j, k}  a'_{j, i} \braket{\zeta_i}{\zeta_{n+1}}\right|\nonumber\\
&\leq\varepsilon\sum_{j=k}^n\sum_{i=1}^j|a'_{j, k}|  \left|a'_{j, i}\right|\tag{lemma assumption on $\{\zeta_j\}$}\\
&\leq\varepsilon\sum_{j=k}^n\left|a'_{j, k} \right| (1+2j\varepsilon)\tag{inductive assumption}\\
&\leq\varepsilon\left((1+2\varepsilon k)+2\varepsilon\sum_{j=k+1}^n(1+2j\varepsilon)  \right)\tag{inductive assumption}\\
&\leq\varepsilon\left( 1 + 2\varepsilon k + 2\varepsilon (n-k) + 4\varepsilon^2 n^2  \right)\tag{upper bound sum} \\
&\leq\varepsilon\left( 1+ 2\varepsilon n (1 + 2\varepsilon n)  \right)\nonumber\\
&\leq\varepsilon \left(1 + \frac{1}{2} \left(1+\frac{1}{2}\right) \right) \le 2\varepsilon,
\end{align}
where in the last line, we have used our assumption that $\varepsilon n \le \varepsilon r \le \frac{1}{4}$.

Now by construction, $\{\ket{\ortho_j'}\}$ are orthogonal, so all that remains is to normalize them, setting $\ket{\ortho_j} = \frac{1}{\|\ket{\ortho'_j}\|} \ket{\ortho'_j}$. Writing
the normalized vectors as $\ket{\ortho_j}=\sum_{i=1}^ja_{j,i}\ket{\zeta_i}$, we find the coefficients scale as
\begin{align}
a_{j,i}=\frac{a'_{j,i}}{\|\ket{\ortho'_j}\|}.
\end{align}

Since
\begin{equation}
\|\ket{\ortho'_j}\|^2=\left(\bra{\zeta_j}+\sum_{i=1}^{j-1}a'_{j,i}\bra{\zeta_i}\right)
\left(\ket{\zeta_j}+\sum_{l=1}^{j-1}a'_{j,l}\ket{\zeta_l}\right)
\end{equation}
and $\braket{\zeta_j}{\zeta_j} \in 1\pm \varepsilon$, we have
\begin{align}
\left|\|\ket{\ortho'_j}\|^2-1\right|&\leq \varepsilon+2\sum_{i=1}^{j-1}|\braket{\zeta_j}{\zeta_i}||a'_{j,i}|+\sum_{i,l=1}^{j-1}|\braket{\zeta_i}{\zeta_l}||a'_{j,i}||a'_{j,l}|\nonumber\\
&\leq \varepsilon+4\varepsilon^2 j+4\varepsilon^3j^2\nonumber\\
&\leq \varepsilon + \frac{1}{4r} + \frac{1}{16r} \leq 2\varepsilon,
\end{align}
since $j \le r$ and $\varepsilon \leq \frac{1}{4r}$.
Thus
\begin{equation}\label{eq:vpj_norm_bound}
\sqrt{1- 2\varepsilon}\leq \|\ket{\ortho'_j}\|\leq \sqrt{1+ 2\varepsilon}.
\end{equation}

We now show that $|a_{j,j} - 1| \leq 2 \varepsilon$.
For the first side, we 
combine \cref{eq:vpj_norm_bound} and the observation that 
$1/\sqrt{1-2x}-1 \leq 2x$ for $x \in [0,.25]$.
Then
\begin{align}
  a_{j,j} - 1 = \frac{1}{\| \ket{\ortho_j'} \|} - 1
  \leq \frac{1}{\sqrt{1-2\varepsilon}} -1 \leq 2\varepsilon.
\end{align}
For the other side, we combine \cref{eq:vpj_norm_bound}
and the observation that $1-1/\sqrt{1+2x} \leq x$ for
$x \geq 0$.
Then
\begin{align}
  1-a_{j,j} = 1- \frac{1}{\| \ket{\ortho_j'} \|}
  \leq 1-\frac{1}{\sqrt{1+2\varepsilon}} \leq x.
\end{align}
Together, we have $|a_{j,j}-1| \leq 2\varepsilon$.

We now show that $|a_{j,i}| \leq 3\varepsilon$ for
$i \neq j$.
We combine \cref{eq:vpj_norm_bound} and the observation
that $2x/\sqrt{1-2x} \leq 3x$ for $x \in [0,.25]$.
Then
\begin{equation}
|a_{j,i}|=\frac{|a'_{j,i}|}{\|\ket{\ortho'_j}\|}
\leq \frac{2\varepsilon}{\sqrt{1-2\varepsilon}} \leq 3\varepsilon.
\end{equation}

The statement of the lemma follows.

\end{proof}

Now we have the tools to prove \cref{lem:deltaPhiAnalysis}.

\deltaPhi*

\begin{proof}[Proof of \cref{lem:deltaPhiAnalysis}]
By \cref{lem:compressed_relations_update} \cref{part:zeta_preserve}, 
the Johnson Lindenstrauss compression of the vectors $\{\ket{\zeta_j}\}$
nearly preserves their orthonormality, i.e. $\forall j,l\in[\kap], |\braket{\zeta_j'}{\zeta_l'}-\delta_{j,l}|\leq 4\varepsilon$.
Then we can apply \cref{lemma:ortho_basis} to obtain an orthonormal set of vectors 
$\{\ket{\ortho_j}\}_{j\in[\kap]}$ such that
\begin{equation}
\ket{\ortho_j}=\sum_{j\in[\kap]}a_{j,i}\ket{\zeta_j'} \quad \forall j\in[\kap]
\end{equation}
where $|a_{j,i}-\delta_{j,i}|\leq 12 \varepsilon.$
We can write $\Delta'=\sum_{j\in[\kap]}\proj{\ortho_j}.$
Then
\begin{align}
\|\Delta'\ket{\phi_x'}\|&= \left\|\sum_{j \in [\kap]} \ketbra{\ortho_j}{\ortho_j} \ket{\phi_x'}\right\|
= \left\|\sum_{j \in [\kap]} \sum_{g\in [\kap]} a_{j,g}\ket{\ortho_j}\braket{\zeta_g'}{\phi_x'}\right\|\nonumber\\
&\in 2\varepsilon(\C+1)A \left\|\sum_{j \in [\kap]} \sum_{g\in [\kap]} 
(\delta_{j,g}\pm 12\varepsilon)\ket{\ortho_j}\right\|
\tag{by \cref{lem:compressed_relations_update} \cref{part:zeta_phi,lemma:ortho_basis}}\\
&\in 2\varepsilon(\C+1)A (1\pm \kap 12\varepsilon)\left\|\sum_{j \in [\kap]} 
\ket{\ortho_j}\right\| 
\nonumber\\
&\in 2\varepsilon(\C+1)A (1+\kap 12\varepsilon)\sqrt{\kap} 
\tag{Since $\{\ket{\ortho_j}\}$ are orthonormal}\\
&\in 4\varepsilon(\C+1)A \sqrt{\kap}
\end{align}
where we use that $\varepsilon \kap \leq 1/12$.

Thus
\begin{equation}
\|(R+I)\ket{\phi_x'}\|=\|(2\Delta'-I+I)\ket{\phi_x'}\|=8\varepsilon (\C+1) A \sqrt{\kap}.
\end{equation}
\end{proof}

In the following lemma, we show that the maximum rank
of an $f$-deciding vector set is equal to the rank of
the vectors $\{\ket{\psi_x}\}_{x\in f^{-1}(1)}$.
This allows us to related the compression of \cref{thm:exact_compress}, which depends on the maximum rank, 
to the compression of \cref{thm:jl_inexact}, which depends on 
the rank of $\{\ket{\psi_x}\}_{x\in f^{-1}(1)}.$
\begin{restatable}{lemma}{kappa_is_r}\label{lem:kappa_is_r}
Let $\kap'$ be the maximum rank of an $f$-deciding vector set 
$\{\ket{v_{x,i}}\}_{x\in X,i\in [n]}$.
Let $\kap$ be the rank of $\{\ket{\psi_x}:f(x)=1\}$,
where $\ket{\psi_x}$ is created from $\{\ket{v_{x,i}}\}$ as in \cref{eq:psidef}.
Then $\kap' \leq \kap \leq 2n \kap'$.
\end{restatable}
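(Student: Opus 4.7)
The plan is to prove the two inequalities by directly exploiting the block structure of $\ket{\psi_x}$ from \cref{eq:psidef}, which lives in $\mathbb{C}\oplus\mathbb{C}^n\otimes\mathbb{C}^m\otimes\mathbb{C}^2$ with the $\ket{\hat0}$ coefficient equal to $1/\sqrt{\nu_x}$ and the $\ket{i}\ket{\cdot}\ket{b}$ block containing $\ket{v_{x,i}}/\sqrt{\nu_x cA}$ precisely when $x_i=b$.

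For the lower bound $\kap'\leq\kap$, I would pick $j^*\in[n]$ attaining $\kap'=\operatorname{rank}\{\ket{v_{x,j^*}}:f(x)=1\}$, and apply the coordinate projector $Q = I_{\mathbb{C}}\oplus\bigl(\ketbra{j^*}{j^*}\otimes I \otimes I\bigr)$. Then $Q\ket{\psi_x} = (\nu_x cA)^{-1/2}\,\ket{j^*}\ket{v_{x,j^*}}\ket{x_{j^*}}$ for every $x\in f^{-1}(1)$, and since projection cannot increase rank, $\kap\geq\operatorname{rank}\{Q\ket{\psi_x}:f(x)=1\}$. The $\ket{x_{j^*}}$ tag separates the set into two subsets sitting in orthogonal subspaces (labelled by $b=0$ and $b=1$), so the rank of the union equals the sum of the two ranks; this sum is at least the rank of the union of the untagged sets $\{\ket{v_{x,j^*}}:f(x)=1,x_{j^*}=b\}$, which is $\kap'$. (Nonzero scalars do not affect rank.) Hence $\kap\geq\kap'$.

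For the upper bound, observe that every $\ket{\psi_x}$ with $f(x)=1$ lies in
\begin{equation}
\operatorname{span}\{\ket{\hat0}\} \;\oplus\; \bigoplus_{i\in[n],\,b\in\{0,1\}} \ket{i}\ket{b}\otimes V_{i,b},
\qquad V_{i,b}\coloneqq\operatorname{span}\{\ket{v_{x,i}}:f(x)=1,\ x_i=b\}.
\end{equation}
Since $V_{i,b}\subseteq\operatorname{span}\{\ket{v_{x,i}}:f(x)=1\}$, we have $\dim V_{i,b}\leq\kap'$ by definition of maximum rank. Summing over the $2n$ index–bit pairs and adding the one-dimensional $\ket{\hat0}$ direction gives $\kap\leq 1+2n\kap'$. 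If $\kap'=0$, then $f^{-1}(1)=\emptyset$ so $\kap=0$ and the bound is trivial; if $\kap'\geq 1$, the ``$+1$'' is absorbed (e.g. $1+2n\kap'\leq 2n\kap'\cdot(1+1/(2n))$, which is $O(n\kap')$ and suffices for all downstream applications).

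The only genuinely delicate step is the lower bound argument, which requires the observation that tagging by $\ket{x_{j^*}}$ turns a single-space rank into a sum of ranks across two orthogonal subspaces, so rank can only grow (or stay the same) under the projection — this is what lets $Q$ convert the $\psi_x$'s into vectors whose rank dominates $\kap'$. The upper bound is a direct dimension count, modulo the additive slack of one dimension noted above.
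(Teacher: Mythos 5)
Your lower bound is correct and is essentially the paper's argument in different packaging: the paper fixes a maximal linearly independent subset of $\{\ket{v_{x,j^*}}:f(x)=1\}$, supposes the corresponding $\ket{\psi_x}$'s were dependent, and reads off the $\ket{j^*}$-block to get a contradiction, while you project onto that block directly and use that the two bit-tags place the projected vectors in orthogonal subspaces whose dimensions sum to at least $\operatorname{rank}\{\ket{v_{x,j^*}}:f(x)=1\}=\kappa'$. The only blemish is notational: your $Q$ as written contains $I_{\mathbb{C}}$ on the $\ket{\hat 0}$ summand, which contradicts your displayed formula for $Q\ket{\psi_x}$; take $Q$ to annihilate $\ket{\hat 0}$ (or keep $I_\mathbb{C}$ and note that a common $\ket{\hat0}$ component cannot lower the rank of the block parts) and the argument goes through.

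The upper bound has a genuine, though small, gap: your dimension count gives $\kappa\le 1+2n\kappa'$, not the stated $\kappa\le 2n\kappa'$, and neither of your ways of discharging the $+1$ proves the lemma. The implication ``$\kappa'=0\Rightarrow f^{-1}(1)=\emptyset$'' is false ($\kappa'=0$ only means every $\ket{v_{x,j}}$ with $f(x)=1$ vanishes), and ``absorbed up to a factor $1+1/(2n)$'' is an asymptotic remark, not the claimed inequality. The observation that removes the $+1$ whenever $f$ is non-constant is orthogonality to the $0$-side vectors: by \cref{eq:DecideVectorConstraints} one has $\braket{\phi_y}{\psi_x}=0$ for all $y\in f^{-1}(0)$, $x\in f^{-1}(1)$ (with $\ket{\phi_y}$ as in \cref{eq:phidef}), while $\braket{\phi_y}{\hat 0}=1/\sqrt{\mu_y}\neq 0$; hence $\ket{\hat 0}\notin\operatorname{span}\{\ket{\psi_x}:f(x)=1\}$, so deleting the $\ket{\hat 0}$ component is injective on that span and $\kappa\le\sum_{i\in[n],b\in\{0,1\}}\dim V_{i,b}\le 2n\kappa'$. (For non-constant $f$ the constraint also forces some $\ket{v_{x,j}}\neq 0$, so $\kappa'\ge 1$ automatically; for constant $f$ the bound can genuinely be off by one, but that case is irrelevant here.) In fairness, the paper's own write-up has the same blind spot---the basis $B$ it constructs omits $\ket{\hat 0}$, so its claim that $\operatorname{span}(B)$ contains every $\ket{\psi_x}$ is not literally true---so your count is the more honest one; but to obtain the lemma as stated you need the orthogonality fix above rather than the asymptotic hand-wave.
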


\begin{proof}[Proof of \cref{lem:kappa_is_r}]
Let $r_i$ be the rank of $V_i \coloneqq \{\ket{v_{x,i}} : f(x) = 1\}$.
We first show that for all $i\in [n]$, $\kap \ge r_i$, and therefore $\kap\geq \kap'$.

Fix $i \in [n]$. There exists some linearly independent set of vectors $A_V \subseteq V_i$ of size $r_i$. Each element of $A_V$ corresponds to a single input $x$, so we can define $A_x \coloneqq \{x : \ket{v_{x,i}} \in A_V \}$. Now consider the set $A_\psi \coloneqq \{\ket{\psi_x} : x \in A_x \}$. Assume for contradiction that $A_\psi$ is linearly dependent. Then there exists some scalar weights 
$b_x$ for $x\in A_x$
such that $\sum_{x \in A_x} b_x \ket{\psi_x} = 0$ and thus from \cref{eq:psidef}
\begin{equation}
0=\sum_{\substack{x \in A_x}}\frac{b_x}{\sqrt{\nu_x}}\left(\ket{\hat{0}}+\frac{1}{\sqrt{\C A}}
\sum_{j\in[n]}\ket{j}\ket{v_{x,j}}\ket{x_j}\right),
\end{equation}
which implies
\begin{equation}
0=\sum_{\substack{x \in A_x:x_i=1}} b_x \ket{v_{x,i}}=\sum_{\substack{x \in A_x:x_i=0}} b_x \ket{v_{x,i}} = \sum_{\substack{x \in A_x}} b_x \ket{v_{x,i}}.
\end{equation}
This is a contradiction, as it implies that the set $A_V$ is not linearly independent.
We can then say that $A_\psi$ is linearly independent and that $\kap \ge |A_\psi| = r_i$. Hence $\kap\geq \kap'.$

We now show that $\kap \le 2n \kap'$.
We will construct a basis for
$\{\ket{\psi_x}:f(x)=1\}$ of size at most $2n\kap'$.
Let $B = \{ \}$.
For each $i \in [n]$, let $B_i$
be a smallest orthonormal basis for $\{\ket{v_{x,i}} : f(x) = 1\}$.
By the definition of $r_i$, we know $|B_i| = r_i$.
For every vector $\ket{w} \in B_i$, add
$\ket{i}\ket{w}\ket{0}$ and
$\ket{i}\ket{w}\ket{1}$ to $B$.
By the definition of $\ket{\psi_x}$ in \cref{eq:psidef},
$B$ is clearly an orthonormal basis for
a space containing $\{\ket{\psi_x}:f(x)=1\}$.
Now $|B| = \sum_{i \in [n]} 2 r_i \leq 2 n \kap'$.
The lemma statement follows.

In fact, the above bound is the best we can hope for
as we can see by considering OR.
In this case, $\kap' = 1$ and $\kap = 2n$ (the construction is explained in Section 23.3 of \cite{ACNotes}).

\end{proof}

\section{Proofs of Numerical Results}\label{sec:numericalProofs}

\SvdPsi*

\begin{proof}[Proof of \cref{lemma:svd_psi}]
Let $M'$
be the $\kap^*$ rank approximation for $M$ created using the singular value decomposition. Now by the well known properties of the SVD \cite{blum2020foundations}, $M'$ is the matrix whose
rows are the corresponding vectors $\{\Delta'\ket{\psi_x}\}_{x\in X_1}$,
i.e the projections of the vectors onto the subspace $\Delta'$.
Thus if $\ket{\psi_x}$ is in the $j^\textrm{th}$ row of $M$, 
we have that 
\begin{equation}\label{eq:diff1}
\|\Delta'\ket{\psi_x}-\ket{\psi_x}\|=\|\bra{j}(M-M')\|
\end{equation}
where $\ket{j}$ is the $j^\textrm{th}$ standard basis vector.

Then
\begin{equation}\label{eq:diff2}
\|\bra{j}(M-M')\| \leq \|M-M'\|,
\end{equation}
where $\|M\|$ is the spectral norm of $M$:
\begin{equation}
\|M\|=\max_{\ket{v}:\|\ket{v}\|^2=1}\|M\ket{v}\|.
\end{equation}
It is again well known \cite{blum2020foundations} that
\begin{equation}\label{eq:diff3}
\|M-M'\|=s_{\kap^*+1}.
\end{equation}

Combining \cref{eq:diff1,eq:diff2,eq:diff3}, we have
\begin{equation}
\|(2 \Delta' - I)\ket{\psi_x} - \ket{\psi_x} \| = 2 \|\Delta'\ket{\psi_x}-\ket{\psi_x}\|\leq 2s_{\kap^*+1}.
\end{equation}

\end{proof}

\SvdPhi*

\begin{proof}[Proof of \cref{lemma:svd_phi}]
Starting from the bound on the spectral norm between $M$ and $M'$ \cite{blum2020foundations},
we have that for all $y\in Y$,
\begin{equation}\label{eq:size1}
\|M\ket{\phi_y}-M'\ket{\phi_y}\|\leq  s_{\kap^*+1}. 
\end{equation}
Now because $M$ is a matrix whose $|X_1|$ rows are $\{\ket{\psi_x}\}_{x\in X_1}$,
by \cref{eq:innerprod}, we have
\begin{equation}\label{eq:size2}
\|M\ket{\phi_y}\|\leq \varepsilon\sqrt{|X_1|}.
\end{equation}
Combining \cref{eq:size1,eq:size2}, and using the reverse triangle inequality, we have 
\begin{equation}
\|M'\ket{\phi_y}\|\leq  s_{\kap^*+1}+\varepsilon\sqrt{|X_1|}.
\end{equation}
Now, because the rows of $M'$ are in the span of $\Delta'$, we have
\begin{equation}\label{eq:size3}
\|M'\ket{\phi_y}\|=\|M'\Delta'\ket{\phi_y}\|\geq s_{\kap^*}\|\Delta'\ket{\phi_y}\|,
\end{equation}
where we have used the fact that $M'$ decomposes as $M=UDV^\dagger$, 
where $U$ and $V$ are unitaries, and hence do not contribute to the
$\ell_2$ norm, and $D$ is a diagonal positive matrix whose smallest value
is $s_{\kap^*}$. 

Combining \cref{eq:size2,eq:size3} and rearranging,
we have
\begin{equation}
\|\Delta'\ket{\phi_y}\|\leq \frac{ s_{\kap^*+1}+\varepsilon\sqrt{n_1}}{s_{\kap^*}}.
\end{equation}
\end{proof}

\end{document}